\title{A Comprehensive Formal Security Analysis and Revision of the Two-phase Key Exchange Primitive of TPM 2.0}
\author{
Qianying~Zhang \thanks{This paper is an extension of the conference version appearing in the Proceedings of the 8th International Conference on Trust and Trustworthy Computing (Trust'15) \cite{TPM2.0AKE}. This paper presents the details of the proof of the key exchange primitive of the TPM 2.0 specifications, gives concrete suggestions on how to revise the key exchange primitive of TPM 2.0 to make it applicable in real-world networks and achieve a
higher level of security, and rigorously analyzes the revised key exchange primitive.}\\
    College of Information Engineering \\
    Capital Normal University \\
    \texttt{qyzhang@cnu.edu.cn} \\
\And
Shijun~Zhao \thanks{(Corresponding author: Shijun Zhao)} \\
        Huawei Technologies Co., Ltd. \\
       \texttt{zqyzsj@gmail.com} \\
}
\begin{document}
\maketitle

\begin{abstract}
The Trusted Platform Module (TPM) version 2.0 provides a two-phase key exchange primitive which can be used to implement three widely-standardized authenticated key exchange protocols: the Full Unified Model, the Full MQV, and the SM2 key exchange protocols. However, vulnerabilities have been found in all of these protocols. Fortunately, it seems that the protections offered by TPM chips can mitigate these vulnerabilities. In this paper, we present a security model which captures TPM's protections on keys and protocols' computation environments and in which multiple protocols can be analyzed in a unified way. Based on the unified security model, we give the first formal security analysis of the key exchange primitive of TPM 2.0, and the analysis results show that, with the help of hardware protections of TPM chips, the key exchange primitive indeed satisfies the well-defined security property of our security model, but unfortunately under some impractical limiting conditions, which would prevent the application of the key exchange primitive in real-world networks. To make TPM 2.0 applicable to real-world networks, we present a revision of the key exchange primitive of TPM 2.0, which can be secure without the limiting conditions. We give a rigorous analysis of our revision, and the results show that our revision achieves not only the basic security property of modern AKE security models but also some further security properties.
\end{abstract}

\keywords{Authenticated Key Exchange \and Security Analysis \and TPM 2.0}

\section{Introduction}\label{sec:intro}
The TPM, presented by the Trusted Computing Group (TCG), is widely available in modern PCs and laptops, and is used to enhance the security of modern operating systems: the Linux kernel integrity subsystem leverages the TPM to store integrity measurements of software and attest to remote entities; Windows adds many security features based on the TPM, such as trusted boot, full disk encryption (Bitlocker), device attestation, and credential protection; the Chrome OS uses the TPM to prevent firmware rollback attacks and to protect users' sensitive data and keys. Besides, TPM also presents a promising usage in the Industry 4.0 \cite{tpm-industry4.0}: it has been widely used as the hardware root of trust for the Industry 4.0 devices and can protect the trustworthiness, identity, and communication of the devices. Take the automotive use case of Industry 4.0 for example, TCG publishes TPM specifications for components of vehicle systems \cite{TPM2.0auto,TPM2.0auto-protect}: an Automotive-Rich TPM having rich capabilities can be defined for Head Unit or Gateway components of vehicle systems with powerful processing, networking and applications functionality, and an Automotive-thin TPM having fewer capabilities can be defined for Head Unit or Gateway components with limited resources.

The newest TPM specifications, TPM 2.0 \cite{TPM2.0,TPM2.0Commands}, allow each platform to choose the functions needed and the level of security required. This flexibility allows TPM 2.0 to be implemented in different types such as security chips and protected software in a trusted execution environment (TEE) \cite{tzwp}, so that TPM chips can be applied to different kinds of platforms from low-end embedded devices to PCs and even servers.
In conclusion, the TPM plays an important role in the security of both personal devices and industrial devices.

In this paper, we focus on the secure communication feature of TPM 2.0: the authenticated key exchange (AKE) functionality, which provides secure communication for devices. AKE is an important public key primitive in modern cryptography, which allows two parties to establish a shared secret session key via a public insecure communication while providing mutual authentication. To prevent active attacks, AKE protocols usually use digital signatures or message authentication codes (MAC) to explicitly authenticate the messages exchanged \cite{STS,SIGMA,TLS,JFK}. However, these authentication mechanisms incur significant overhead in both computation and communication.
To overcome the disadvantages of the explicitly AKE protocols, the implicitly AKE protocols \cite{Mat86,MQV95,KEA98,MQV03,JKL04,HMQV05,KEA+06,eCK07,CMQV08,mOT10,SM210,OAKEfull,OAKEccs} are proposed. This kind of protocols only requires basic Diffie-Hellman exchanges while providing identity authentication by combining the ephemeral keys and long-term keys during the derivation of the session key, and achieves efficiency in both computation and communication.

The TPM 2.0 specifications support three implicitly AKE protocols: the Full Unified Model (UM) \cite{um2007}, Full MQV \cite{MQV95}, and SM2 key exchange protocols \cite{SM210}. All the three protocols have been widely standardized: the Full UM and Full MQV protocols are standardized by ANSI \cite{ANSIX942,ANSIX963}, NIST \cite{NIST800-56A}, IEEE \cite{ISO15946}; the SM2 key exchange protocol is standardized by the Chinese Government State Cryptography Administration \cite{GB32918} and some industrial standards \cite{java}. TPM 2.0 describes implicitly AKE protocols as two-phase key exchange protocols because TPM 2.0 implements them in two phases. In the first phase, the TPM generates an ephemeral DH key and sends its public part to the other party. In the second phase, the TPM generates the unhashed shared secret by combining ephemeral keys and long-term keys, and then the host of the TPM uses the unhashed shared secret to derive a session key. Some works \cite{sHMQV,impHMQV,ukshsm} using the TPM to improve the efficiency or security of AKE protocols have been proposed.

We first introduce some notations used in this paper. Let $G'$ be a finite Abelian group of order $N$, $G\subseteq G'$ be a subgroup of prime order $q$. Denote by $g$ a generator of $G$, by $1_G$ the identity element, by $G\backslash1_G=G-\{1_G\}$ the set of elements of $G$ except $1_G$, and by $h=N/q$ the cofactor. We use the multiplicative notation for the group operation in $G'$. Let $u\in_{R}Z_q$ denote randomly selecting an integer $u$ between $1$ and $q-1$. Note that $G$ is an elliptic curve in this work, for all the three protocols are based on elliptic curve cryptography. Let $P.x$ denote the $x$-coordinate of point $P$. The party having $A$ as its public key will be denoted by $\hat{A}$. The Full UM, Full MQV and SM2 key exchange protocols are described in Figure \ref{fig:protocols}. We let $\lambda$ denote the security parameter, and $H_1():\{0,1\}^* \rightarrow \{0,1\}^{\lambda}$ and $H_2():\{0,1\}^* \rightarrow \{0,1\}^{\lambda}$ are cryptographic hash functions. The Full UM protocol analyzed in this paper includes the ephemeral public keys exchanged as suggested by \cite{JKL04}. The Full MQV protocol is a variant of the original MQV protocol \cite{MQV95} which does not include parties' identifiers in the session key derivation.

\subsection{Security models for AKE protocols}\label{subsec:AKE-model}
In the early days, AKE protocols are designed in an ad-hoc manner and cannot provide well-defined security properties. To deal with this problem, some rigorous security models are proposed, and by now, it has become a basic requirement for AKE protocols to achieve the security properties defined by AKE security models. The first security model is proposed by Bellare and Rogaway \cite{BR}, which is called BR model. The BR model allows the attacker to control all communications between the parties, learn secret state of some party by corrupting him, and obtain the session key of some session by revealing it. However, the BR model does not allow the attacker to get the ephemeral state of some session and the lack of simulating this attack activity prevents BR model to simulate some practical attack situations. To deal with this flaw, the CK model \cite{CKmodel} and the eCK model \cite{eCK07} simulate the leakage of ephemeral state of some session by adding the \emph{SessionStateReveal} and \emph{EphemeralKeyReveal} queries to the model respectively. The \emph{SessionStateReveal} query allows the attacker to learn the session private information while the \emph{EphemeralKeyReveal} query allows the attacker to learn the ephemeral private key -- which is stronger is still under discussion. Okamoto \cite{ot-2007} claims that \emph{EphemeralKeyReveal} is stronger than \emph{SessionStateReveal}; Cremers \cite{crem09} argues that \emph{SessionStateReveal} is stronger than \emph{EphemeralKeyReveal}; Boyd et al. \cite{boyd-incom} show that the two queries are incomparable, while Ustaoglu \cite{model-comp} shows that the functions of the two queries are essentially the same.

Since the CK model can largely provide no weaker security than the eCK model and the \emph{SessionStateReveal} query of the CK model models the high risk of the leakage of the secrets in the unprotected memory of the host better than the \emph{EphemeralKeyReveal} query of the eCK model, we build our security model based on the CK model.

\begin{figure*}
\small
  \centering{\tt
  \begin{tabular}{|>{\centering}p{70pt}>{\centering}p{250pt}p{80pt}<{\centering}|}
  \hline
  $\hat{A}:(a,A=g^a)$ & & $\hat{B}:(b,B=g^b)$\\
  $X=g^x$ &$\xrightarrow{\hspace*{100pt}X\hspace*{100pt}}$ & $Y=g^y$\\
  & $\xleftarrow{\hspace*{100pt}Y\hspace*{100pt}}$ & \\
 \hline
  \multicolumn{1}{|l}{\hspace{2mm}\textrm{\textbf{Full UM:}}} & \multicolumn{2}{l|}{$K=H_1(Z_1,Z_2,\hat{A},\hat{B},X,Y)$, where $Z_1=g^{ab}$, $Z_2=g^{xy}$}\\
  \multicolumn{1}{|l}{\hspace{2mm}\textrm{\textbf{Full MQV:}}} & \multicolumn{2}{l|}{$Z_A=(YB^e)^{h(x+da)},Z_B=(XA^d)^{h(y+eb)}$}\\
  \multicolumn{1}{|l}{\hspace{2mm}}&\multicolumn{2}{l|}{$K=H_2(Z_A,\hat{A},\hat{B})=H_2(Z_B,\hat{A},\hat{B})$, where}\\
  \multicolumn{1}{|l}{\hspace{2mm}} & \multicolumn{2}{l|}{$d=avf(X)\footnotemark[1]\overset{\text{def}}{=}2^l+(X.x$ mod $2^l),e=avf(Y)\overset{\text{def}}{=}2^l+(Y.x$ mod $2^l),l=\lceil q/2\rceil$}\\
  \multicolumn{1}{|l}{\hspace{2mm}\textrm{\textbf{SM2 Key Exchange:}}} & \multicolumn{2}{l|}{$Z_A=(BY^e)^{h(a+dx)},Z_B=(AX^d)^{h(b+ey)}$}\\
  \multicolumn{1}{|l}{\hspace{2mm}\textrm{}} & \multicolumn{2}{l|}{$K=H_2(Z_A,\hat{A},\hat{B})=H_2(Z_B,\hat{A},\hat{B})$, where}\\
  \multicolumn{1}{|l}{\hspace{2mm}} & \multicolumn{2}{l|}{$d=avf'(X)\overset{\text{def}}{=}2^l+(X.x$ mod $2^l),e=avf'(Y)\overset{\text{def}}{=}2^l+(Y.x$ mod $2^l),l=\lfloor q/2\rfloor$}\\
  \hline
  \end{tabular}}
  \caption{\label{fig:protocols}The Full UM, Full MQV, and SM2 key exchange Protocols}
\end{figure*}

\subsection{Weaknesses of the AKE protocols in TPM 2.0}\label{subsec:weak-AKE}
Unfortunately, all the three AKE protocols adopted by TPM 2.0 are not secure. We summarize their weaknesses in the following.

\footnotetext[1]{The $avf()$ and $avf'()$ functions receive an elliptic curve point and return a fix length of the least significant bits (LSB) of the x-coordinate.}

We find that the Full UM protocol is completely insecure if an attacker is able to learn the intermediate information $Z_1=g^{ab}$ of some session established by $\hat{A}$ with $\hat{B}$: the attacker transmits an ephemeral key $X'=g^{x'}$ generated by himself to party $\hat{B}$ and receives an ephemeral public key $Y'$ from $\hat{B}$, then he can compute the session key $K=H(Z_1,Y'^{x'},\hat{A},\hat{B},X',Y')$, i.e., the attacker is able to impersonate $\hat{A}$ to $\hat{B}$ indefinitely.

Kaliski presents an unknown-key share (UKS) attack \cite{mqvuks01} on the original MQV protocol: the attacker $\mathcal{M}$ interfaces with the session establishment between two honest parties $\hat{A}$ and $\hat{B}$ such that $\hat{A}$ is convinced that he is sharing a key with $\hat{B}$, but $\hat{B}$ believes that he is sharing the same session key with $\mathcal{M}$. This UKS attack requires $\mathcal{M}$ to register a specific key $C=g^c$ with the certificate authority (CA) and send a specific ephemeral public key $X'$ to $\hat{B}$. $c$ and $X'$ are so carefully computed by $\mathcal{M}$ that the session keys of sessions $(\hat{A},\hat{B},X,Y)$ and $(\hat{B},\mathcal{M},Y,X')$ are identical. The details of the UKS attack are described in \ref{app:kaliski}. Although the Full MQV protocol tries to prevent the above UKS attack by including identities in the session key derivation, we find that it still cannot achieve the security defined by modern AKE models if $\mathcal{M}$ is able to learn the unhashed shared $Z$ value: $\mathcal{M}$ performs the same steps above, learns $Z_B$ by corrupting the session $(\hat{B},\mathcal{M},Y,X')$, then $\mathcal{M}$ can compute the session key of session $(\hat{A},\hat{B},X,Y)$, i.e., the corruption of the session $(\hat{B},\mathcal{M},Y,X')$ helps $\mathcal{M}$ to compromise another session $(\hat{A},\hat{B},X,Y)$.

Xu et al. introduce two UKS attacks \cite{SM210} on the SM2 key exchange protocol in which an honest party $\hat{A}$ is coerced to share a session key with the attacker $\mathcal{M}$, but $\hat{A}$ thinks that he is sharing the key with another party $\hat{B}$. Both attacks require $\mathcal{M}$ to reveal the unhashed shared $Z_B$ of $\hat{B}$. Besides, the first attack requires $\mathcal{M}$ to register with the CA a specific key $C=Ag^e$ where $e\in_R Z_q$, and the second attack requires $\mathcal{M}$ to perform some computations using his private key after obtaining $Z_B$. The details of the two attacks are described in  \ref{app:xu}.

The above attacks show that the three AKE protocols cannot achieve the security property defined by modern AKE security models if the attacker is able to get the unhashed values. Unfortunately, this is exactly how the two-phase key exchange primitive of TPM 2.0 (denoted by $\mathsf{tpm.KE}$) implements these three AKE protocols: $Z_1$ of the Full UM protocol, the unhashed $Z$ values of the MQV and SM2 key exchange protocols are returned to the host, whose memory is vulnerable to attacks. So it seems that $\mathsf{tpm.KE}$ is not secure.

\subsection{Motivations and Contributions}\label{subsec:motivation}
Fortunately, protections provided by the TPM improve the security of $\mathsf{tpm.KE}$. First, all long-term keys are generated randomly in the TPM, so attackers cannot make the TPM to generate a specific key such as the carefully computed key $C=g^c$ required by Kaliski's UKS attack or $C=Ag^e$ required by Xu's first attack. Second, the TPM only provides well-defined functions through TPM commands \cite{TPM2.0Commands} in a black-box manner: when a TPM command is invoked, the TPM chip executes the pre-defined computation procedure and returns the computation result. The second feature prevents attackers from using a key to perform computations at will. So it seems that the above two features can prevent Kaliski's UKS attack and Xu's attacks. However, in modern cryptography, rigorous proofs of security in modern security models \cite{CKmodel,eCK07} have become a basic requirement for cryptographic protocols to be standardized and are essential tools to guarantee the soundness of cryptographic protocols. As the TPM has been widely applied on kinds of computation platforms, it is important to perform formal analysis about its security mechanisms. This leads to our first motivation:
\begin{enumerate}
\item \textit{How to build a security model which can precisely model the TPM's protections on keys and protocol computation environments and based on which we can perform rigorous security analysis of $\mathsf{tpm.KE}$?}
\end{enumerate}

Although protections provided by the TPM help the MQV and SM2 key exchange protocols to resist current UKS attacks, the $avf()$ and $avf'()$ functions used in the MQV and SM2 key exchange protocols respectively make that the two protocols cannot be proven secure. Consider such a group $G$ that the representations of its elements satisfy that the $\lceil q/2\rceil$ least significant bits (LSBs) of the representations of points' $x$-coordinate are fixed. In this case, an attacker can mount the so-called group representation attacks on the MQV and SM2 key exchange protocols, in which the attacker can impersonate $\hat{A}$ to $\hat{B}$ without knowing the private key of $\hat{A}$. The group representation attack on MQV is described in \ref{app:groupattack}, and a similar attack on the SM2 key exchange protocol can be found in \cite{zhaoSM2}. HMQV, a variant of MQV, prevents this type of attack by replacing $avf()$ with a cryptographic hash function, which enables the protocol to be proven secure in the CK model. Zhao et al. \cite{zhaoSM2} also suggest replacing the $avf'()$ function of the SM2 key exchange protocol with a cryptographic hash function. However, group representation attacks are not practical, for it is difficult to find an elliptic curve whose $\lceil q/2\rceil$ LSBs of the representations of points' $x$-coordinate are fixed. On the contrary, the outputs of the $avf()$ and $avf'()$ functions seem to range in a uniform way over all possible values. This leads to our second motivation:
\begin{enumerate}\setcounter{enumi}{1}
\item \textit{Can we give a quantitative measure of the amount of randomness (entropy) contained in the output distributions of $avf()$ and $avf'()$ on real-world elliptic curves and check whether $avf()$ and $avf'()$ provide enough entropy to prevent group representation attacks?}
\end{enumerate}

As far as we know, current modern AKE security models only consider how to formally analyze one single protocol, thus all AKE protocols proven secure in the literature are analyzed separately. However, $\mathsf{tpm.KE}$ is designed to support three implicitly AKE protocols through unified interfaces, so we cannot use current security models to analyze $\mathsf{tpm.KE}$. Besides, the design of $\mathsf{tpm.KE}$ brings the following security problem that should be considered in its analysis. Suppose an honest party $\hat{A}$ tries to establish a secure channel with $\hat{B}$ through MQV and an attacker controls a long-term key of $\hat{B}$'s TPM, but the type of the key is SM2. Then the question is whether the session key of $\hat{A}$ is secure if the attacker leverages the SM2 key to complete the session. Apparently, it's desirable for $\mathsf{tpm.KE}$ to guarantee the security of $\hat{A}$'s session key. We denote this security property by \textit{correspondence property}. The requirements for analyzing multi-protocols simultaneously and capturing the \textit{correspondence property} lead to our third motivation:
\begin{enumerate}\setcounter{enumi}{2}
\item \textit{Can we build a unified security AKE model, based on which we can give a formal analysis of $\mathsf{tpm.KE}$ which supports three AKE protocols?}
\end{enumerate}

Unfortunately, even if we prove $\mathsf{tpm.KE}$ is secure, $\mathsf{tpm.KE}$ only guarantees its security under the following conditions: first, all the entities in the network must use TPM chips to run the AKE protocols and if one entity uses a less secure implementation of protocols such as a software implementation, the compromise of the software implementation will affect the security of other honest entities, such as Kaliski's and Xu's UKS attacks mentioned above; second, attackers are assumed to be unable to obtain the information inside the TPM, such as long-term keys. The conditions are due to the fact that the above security model creates all protocol instances based on the TPM and prohibits attackers from obtaining information inside the TPM even they compromise the TPM. However, the above conditions are not practical for real-world networks: first, it is unrealistic to assume that all devices are protected by the TPM, and there may exist some devices whose execution environments are not secure and can be easily compromised; second, it is also unrealistic to assume that no TPM is ever compromised because attackers can launch sophisticated attacks such as invasive attacks, semi-invasive attacks or side-channel attacks to compromise TPM chips. The above unrealistic assumptions lead to our fourth motivation:
\begin{enumerate}\setcounter{enumi}{3}
\item \textit{How to revise the current version of $\mathsf{tpm.KE}$ so that it can be applied in real-world networks where devices may not be protected by TPM chips and TPM chips may be compromised?}
\end{enumerate}

\textbf{Contributions.} We summarize the contributions of this paper as follows:
\begin{enumerate}
\item We leverage the min-entropy, a notion in the information theory, to quantitatively evaluate the amount of randomness in the output distributions of $avf()$ and $avf'()$. We evaluate several series of elliptic curves used in practice, covering all elliptic curves adopted by the TPM 2.0 algorithm specification \cite{TPM2.0alg}. The evaluation results show that $avf()$ and $avf'()$ provide almost the same level of randomness as cryptographic hash functions.
\item We model the protections provided by the TPM by modeling the interfaces of $\mathsf{tpm.KE}$ as oracles, and present a unified AKE security model for $\mathsf{tpm.KE}$, which captures not only the basic security property defined by modern AKE security models but also the correspondence property.
\item We give a formal analysis of $\mathsf{tpm.KE}$ in our new model and prove that $\mathsf{tpm.KE}$ is secure under the condition that the unhashed shared secrets are not available to the attacker. This condition can be achieved by slightly modifying the Full UM functionality of TPM 2.0 or properly implementing the host's software.
\item The $\mathsf{tpm.KE}$ is proven secure under some limiting conditions, resulting in some restrictions on the usage of $\mathsf{tpm.KE}$, so we give suggestions on how to use $\mathsf{tpm.KE}$ properly to achieve a secure implementation of AKE protocols.
\item The limiting conditions required by the current version of $\mathsf{tpm.KE}$ are impractical for real-world networks, so we present a revision of $\mathsf{tpm.KE}$ to eliminate the limiting conditions and give concrete suggestions on how to revise the current version of TPM 2.0 specifications.
\item We rigorously analyze our revision of $\mathsf{tpm.KE}$, and the analysis results show that our revision achieves not only the basic security property defined by modern AKE security models but also some further security properties: resistance to key-compromise impersonation (KCI) and weak Perfect Forward Secrecy (PFS).
\end{enumerate}

\subsection{Organization}
In the rest of this paper, Section \ref{sec:pre} gives some preliminaries. Section \ref{sec:informal} introduces the two-phase key exchange primitive defined by TPM 2.0 specifications, gives a quantitative measure of several series of elliptic curves used in practice, and presents an informal analysis of $\mathsf{tpm.KE}$. Section \ref{sec:model} presents our unified security model for $\mathsf{tpm.KE}$. Section \ref{sec:formaldes} gives a formal description of $\mathsf{tpm.KE}$. Section \ref{sec:unforge} proves the unforgeability of the MQV and SM2 key exchange functionalities provided by $\mathsf{tpm.KE}$, and it can simplify our security proof. Section \ref{sec:analysis} formally analyzes the basic security of $\mathsf{tpm.KE}$ in our new model. Section \ref{sec:further-tpm.ke} discusses the KCI-resistance and weak PFS properties of $\mathsf{tpm.KE}$. Section \ref{sec:sug-tpm.ke} gives suggestions on the usage of the current version of $\mathsf{tpm.KE}$. Section \ref{sec:tpm.ke.rev} presents our revision of $\mathsf{tpm.KE}$. Sections \ref{sec:pre-tpm.ke.rev} and \ref{sec:analysis-rev} rigorously analyze our revision of $\mathsf{tpm.KE}$. Section \ref{sec:tpm.ke.rev-further} discusses the KCI-resistance and weak PFS properties of our revision. Section \ref{sec:conclusion} concludes this paper.

\section{Preliminaries}\label{sec:pre}
This section first introduces the notion of min-entropy and two popular methods to calculate the min-entropy, and then introduces the CDH (Computational Diffie-Hellman) and GDH (Gap Diffie-Hellman) assumptions used in this paper.

\subsection{Min-entropy}
Min-entropy is a notion in information theory, which provides a very strict information-theoretical lower bound (i.e., worst-case) measure of randomness for a random variable. High min-entropy indicates that the distribution of the random variable is close to the uniform distribution. Low min-entropy indicates that there must be a small set of outcomes that has an unusually high probability, and the small set can help an attacker to perform group representation attacks. Take the following two extreme cases for example: if the min-entropy of a random variable is equal to the length of the outcome, the distribution is a uniform distribution, and if the min-entropy of a random variable is zero, the outcomes of the random variable are a fixed value. From the two extreme cases, we can see that the higher the min-entropy is, the harder for the attacker to mount group representation attacks. There are two popular methods to measure the min-entropy of a random variable:
\begin{enumerate}
\item NIST SP 800-90. This method is described in NIST specification 800-90 for binary sources. The definition of min-entropy for one binary bit is:  $H=-log_2^{p_{max}}$ where $p_{max}=max\{p_0,p_1\}$ and $p_0$, $p_1$ are probabilities that the binary bit outputs zero and one respectively. The min-entropy of an $n$-bit binary string is defined by:
    \begin{equation}\label{equ:entory}
    H_{total}=\sum_{i=1}^{n}H_i
    \end{equation}
\item Context-Tree Weighting compression. Context-Tree Weighting (CTW) \cite{CTW} is an optimal compression algorithm for stationary sources and is usually used to estimate the min-entropy.
\end{enumerate}

\subsection{CDH and GDH Assumptions}
The security of modern cryptographic constructions, including the AKE protocols, relies on some widely-believed assumptions, which are mathematical problems that are hard to solve. The rigorous proof of security typically shows how to reduce the security of cryptographic constructions to assumptions. Here we list the assumptions that used in our proof: the CDH and GDH assumptions. The CDH assumption is used in the proof of the unforgeability of MQV and SM2 key exchange functionalities (Section \ref{sec:unforge}), and the GDH assumption is used in the proof of $\mathsf{tpm.KE}$ (Section \ref{sec:analysis}) and the revision of $\mathsf{tpm.KE}$ (Sections \ref{sec:analysis-rev} and \ref{sec:tpm.ke.rev-further}).

\newtheorem{definition}{Definition}
\begin{definition}[CDH Assumption]
Let $G$ be a cyclic group of order $p$ with generator $g$. The CDH assumption in $G$ states that, given two randomly chosen points $X=g^x$ and $Y=g^y$, it is computationally infeasible to compute $Z=g^{xy}$.
\end{definition}

\begin{definition}[GDH Assumption]
Let $G$ be a cyclic group generated by an element $g$ whose order is $p$. We say that a decision algorithm $\mathcal{O}$ is a Decisional Diffie-Hellman (DDH) Oracle for $G$ and its generator $g$ if on input a triple $(X,Y,Z)$, for $X,Y \in G$, oracle $\mathcal{O}$ outputs 1 if and only if $Z$ = CDH($X,Y$). We say that $G$ satisfies the GDH assumption if no feasible algorithm can solve the CDH problem, even if the algorithm is provided with a DDH-oracle for $G$.
\end{definition}

\section{The TPM Key Exchange Primitive}\label{sec:informal}
In the context of $\mathsf{tpm.KE}$, each party consists of a host and a TPM chip.
The host and the TPM together can implement the three implicitly AKE protocols supported in the TPM 2.0 specifications. In general, the host runs instances of AKE protocols by exchanging messages between parties and invoking the interfaces of $\mathsf{tpm.KE}$ which are implemented as TPM commands, and at last derives session keys for instances from the shared secrets returned by the TPM. In this section, we first show how the AKE protocols can be implemented by the host and $\mathsf{tpm.KE}$, and describe the related TPM commands of $\mathsf{tpm.KE}$, and then give an informal analysis of $\mathsf{tpm.KE}$. In the informal analysis, we present solutions to prevent impersonation attacks on the Full UM protocol, and a quantitative measure of the randomness of the output distributions of $avf()$ and $avf'()$ on all the elliptic curves adopted by the TPM 2.0 specifications.

\subsection{Introduction of tpm.KE}
$\mathsf{tpm.KE}$ consists of two phases. In the first phase, the TPM generates an ephemeral key which will be transferred to the other party by the TPM's host. In the second phase, the TPM generates the unhashed secret value according to the specification of the selected protocol, and the host derives the session key from the unhashed secret value. Before the two phases, the \textbf{Key Generation} procedure should be invoked to generate a long-term key.

\begin{itemize}
\item \textbf{Key Generation}. The commands $\mathsf{TPM2\_Create()}$ and $\mathsf{TPM2\_CreatePrimary()}$ are used to generate long-term keys. They take as input public parameters including an attribute identifying the key exchange scheme for the long-term key. The scheme should be one of the following three: $\mathsf{TPM\_ALG\_ECDH}$, $\mathsf{TPM\_ALG\_ECMQV}$, and $\mathsf{TPM\_ALG\_SM2}$. In this procedure, the TPM performs the following steps: if the command is $\mathsf{TPM2\_Create()}$, it picks a random $a\in_{R}Z_q$ and computes $A=g^a$, and if the command is $\mathsf{TPM2\_CreatePrimary()}$, it derives $a$ from a primary seed using a key derivation function and computes $A=g^a$; finally, it returns $A$ and a key handle identifying $a$.
\item \textbf{First Phase.} The command $\mathsf{TPM2\_EC\_Ephemeral()}$ is used to generate an ephemeral key, and it performs the following steps:
    \begin{enumerate}
    \item Generate $x=\mathsf{KDFa}(Random, Count)$, where $\mathsf{KDFa}()$ is a key derivation function \cite{KDF}, $Random$ is a secure random value stored inside the TPM, and $Count$ is a counter.
    \item Set $ctr=Count$, $A[ctr]=1$, and $Count=Count+1$, where $A[]$ is an array of bits used to indicate whether the ephemeral key has been used.
    \item Set $x=x$ mod $q$, and generate $X=g^x$.
    \item Return $X$ and $ctr$.
    \end{enumerate}
    Note that the TPM does not need to store the ephemeral private key $x$ as it can be recovered using $\mathsf{KDFa()}$ and $ctr$.
\item \textbf{Second Phase.} The command related to this phase is $\mathsf{TPM2\_ZGen\_2Phase()}$, and it is the main command of $\mathsf{tpm.KE}$. This command takes the following items as input:\\
    \begin{tabular}{lp{0.8\columnwidth}}
    $scheme$ & A protocol scheme selector. \\
    $keyA$ & The key handle identifying the long-term private key $a$.\\
    $ctr$  & The counter used to identify the ephemeral key generated in the first phase.\\
    $B$    & The public key of $\hat{B}$, with whom $\hat{A}$ wants to establish a session.\\
    $Y$    & The ephemeral public key received from $\hat{B}$.\\
    \end{tabular}
    \begin{enumerate}
    \item The TPM first does the following checks:
    \begin{enumerate}
    \item Whether $scheme$ equals the scheme designated for key $A$ in the key generation procedure.
    \item Whether $B$ and $Y$ are on the curve associated with $A$.
    \item Whether $A[ctr]=1$.
    \end{enumerate}
    \item If the above checks succeed, the TPM recovers $x=\mathsf{KDFa}(Random, ctr)$ and performs the following steps:
    \begin{enumerate}
    \item Compute unhashed values according to the value of $scheme$:
        \begin{description}
        \item Case $\mathsf{TPM\_ALG\_ECDH}$: \\
            set $Z_1=B^a$, $Z_2=Y^x$;
        \item Case $\mathsf{TPM\_ALG\_ECMQV}$: \\
            set $Z_1=(YB^e)^{h(x+da)}$, $Z_2 = NULL$, where $d=avf(X)$ and $e=avf(Y)$;
        \item Case $\mathsf{TPM\_ALG\_SM2}$: \\
            set $Z_1=(BY^e)^{h(a+dx)}$, $Z_2 = NULL$, where $d=avf'(X)$ and $e=avf'(Y)$.
        \end{description}
    \item Clear $A[ctr]$: set $A[ctr]=0$ to ensure that the ephemeral private key $x$  can only be used once.
    \item Return $Z_1$ and $Z_2$.
    \end{enumerate}
     \item Finally, the host computes the session key leveraging the unhanshed values $Z_1$ and $Z_2$ returned by the TPM.

    \end{enumerate}
\end{itemize}

\subsection{Informal Analysis}
In Sections \ref{subsec:weak-AKE} and \ref{subsec:motivation}, we show two weaknesses of $\mathsf{tpm.KE}$ which prevent it from achieving the basic security property defined by modern AKE security models. One weakness is that $\mathsf{tpm.KE}$ returns $Z_1$ of the Full UM protocol to the host whose memory can be compromised, which makes $Z_1$ be available to the attacker. The other one is the weakness caused by $avf()$ and $avf'()$, which results in group representation attacks against the MQV and SM2 key exchange protocols.

\subsubsection{Analysis of the first weakness}
The first weakness prevents the Full UM protocol from being proven secure in the security model, and we give two solutions to overcome the weakness:

\begin{enumerate}
\item Perform the entire session key computation of Full UM in the secure environment of the TPM, i.e., modify the $\mathsf{TPM2\_ZGen\_2Phase()}$ command not to return $Z_1$ and $Z_2$ but the session key, i.e., $K=H_1(Z_1,Z_2,\hat{A},\hat{B},X,Y)$.
\item Protect $Z_1$ and $Z_2$ from malicious code running on the host as much as possible such as keeping them only available in the kernel mode, and delete $Z_1$ and $Z_2$ as soon as the session key is derived.
\end{enumerate}
The first solution requires modifying TPM 2.0 specifications, and the second one requires that the software deriving the session key should be implemented properly and included in the Trusted Computing Base (TCB).

\subsubsection{Analysis of the second weakness}
As it seems that the second weakness only happens in theory, we perform a quantitative measure of the min-entropy contained in the output distributions of $avf()$ and $avf'()$ to check whether this weakness can happen in the real world. We measure several series of widely deployed elliptic curves: the NIST series \cite{NIST186-2}, the BN series \cite{BN}, the SECG series \cite{SECG}, and an SM2 elliptic curve \cite{SM2curve}. Our measure totals 17 elliptic curves and covers all elliptic curves adopted by TPM 2.0 \cite{TPM2.0alg}. We generate 16384 points for each elliptic curve, apply $avf'()$ to the points of the SM2 P256 curve, and apply $avf()$ to the points of the rest curves. We also apply the cryptographic hash function SHA-2 to the generated points of all curves. Then we measure the min-entropy of the output distributions of $avf()$ ($avf'()$) and SHA-2 using the methods of NIST SP 800-90 (Formula \ref{equ:entory}) and CTW compression. The measurement results are summarized in Table \ref{tab:statistic}. Figure \ref{fig:statis} shows the development of the min-entropy value calculated using NIST's method over the number of measurements. To our surprise, the min-entropy of the output distributions of $avf()$ and $avf'()$ is very close to the min-entropy of the output distribution of SHA-2: the former is only about 1 bit less than the latter. What's more, the measurement results indicate that the output distributions of $avf()$ and $avf'()$ are close to the uniform distribution. Take the measurement of BN P256 for example, the min-entropy calculated by the NIST's method is 126.9, very close to the output length of $avf()$ which is $129=\lceil 256/2 \rceil$+1, and the CTW ratio is 98.1\% which is close to 100\%. Our measure indicates that the outputs of $avf()$ ($avf'()$) on different elliptic curve points are almost independent, and it is infeasible to mount group representation attacks on real-world elliptic curves. So we model $avf()$ and $avf'()$ as random oracles in our formal analysis.

\renewcommand\arraystretch{1}
\begin{table}
\tiny
\centering
\begin{tabular}{|c|c|c|c|c|c|c|c|c|c|c|c|c|c|c|c|c|c|c|} \hline
\multicolumn{2}{|c|}{}                  &\multicolumn{5}{c|}{NIST Series}   &\multicolumn{6}{c|}{BN Series}             &\multicolumn{5}{c|}{SECG Series}           &SM2\\ \cline{3-19}
\multicolumn{2}{|c|}{}                  &P192 &P224 &P256 &P384  &P521      &P192 &P224 &P256 &P384  &P521 &P638        &P192 &P224 &P256 &P384  &P521              &P256 \\ \hline
NIST    &                    $avf()$  &95.2&111.0&126.9&190.2&258.7         &95.1&111.0&126.9&190.2&253.6&315.0         &95.2&111.0&126.6&190.4&258.6          &125.8  \\ \cline{2-19}
800-89  &                    SHA-2    &95.9&112.0&127.9&191.3&259.8        &96.2&112.0&128.0&191.2&254.8&316.1         &96.0&11.0&127.9&191.3&259.6          &126.9  \\ \hline
CTW     &                    $avf()$  &97\%&98\%&98\%&99\%&100\%            &97\%&98\%&98\%&99\%&99\%&100\%             &97\%&98\%&98\%&99\%&100\%   &100\%  \\ \cline{2-19}
Ratio   &                    SHA-2    &98\%&98\%&99\%&99\%&100\%            &98\%&98\%&99\%&99\%&100\%&100\%            &98\%&98\%&99\%&99\%&100\%   &100\%  \\ \hline
\end{tabular}
\caption{Min-entropy results}\label{tab:statistic}
\end{table}

\begin{figure*}[htbp]
\begin{center}
\includegraphics[width=6in]{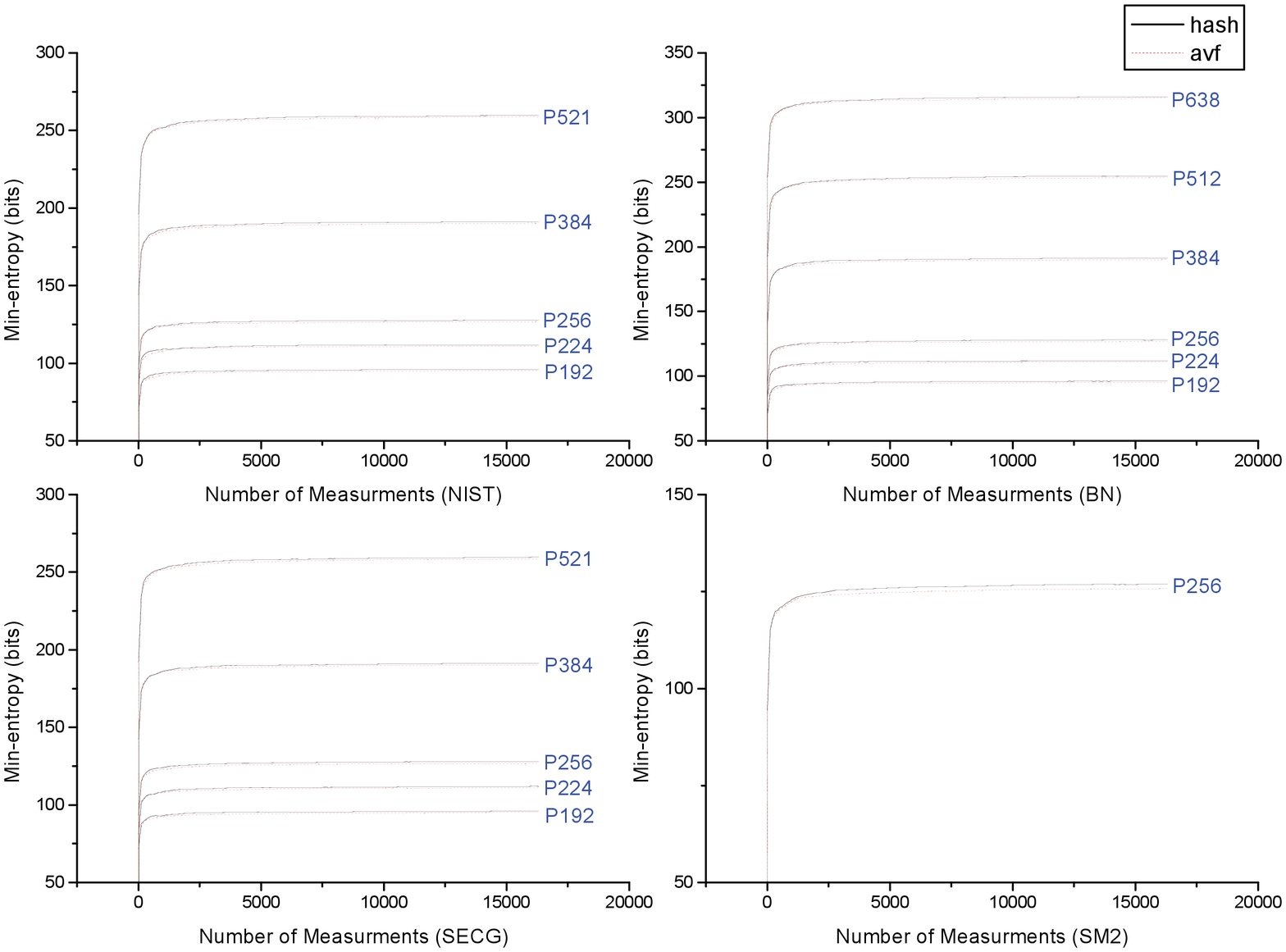}
\end{center}
\caption{\label{fig:statis}Min-entropy evaluation}
\end{figure*}

\section{The Unified Security Model}\label{sec:model}
This section illustrates our unified security model for $\mathsf{tpm.KE}$ and the attacker model which captures the capabilities of attackers.
The first difference between our unified security model and existing security models, such as BR, CK, and eCK models, is that our model can simulate multiple types of AKE protocols simultaneously. Our model acquires this feature by adding a scheme identifier to the session identifier. The motivation and detailed techniques of this feature will be discussed later. The second difference is that our model models the concrete protections provided by the TPM chip: the attack queries of our model do not allow the attacker to obtain the cryptographic keys protected inside the TPM nor the ephemeral secrets inside the TPM. For instance, when the attacker corrupts a party, unlike existing security models, which allow him to get plaintext of the long-term key, our model only allows him to obtain the black-box access of the TPM commands in respect of the long-term key.

In our security model, each party has a long-term key generated by the TPM and a certificate (issued by a CA) that binds the public part of the long-term key to the identity of the party. The long-term key can be one of the following three types: $\mathsf{TPM\_ALG\_ECDH}$, $\mathsf{TPM\_ALG\_ECMQV}$, and $\mathsf{TPM\_ALG\_SM2}$. A party can be activated to invoke the interfaces of $\mathsf{tpm.KE}$ to run an instance of the protocol supported by the long-term key, and an instance of a protocol is called a session. In each session, a party can be activated as the role of initiator or responder who sends an ephemeral public key to its peer party, the ephemeral key is generated by invoking the interface of the first phase of $\mathsf{tpm.KE}$, and a party can complete the session by invoking the interface of the second phase of $\mathsf{tpm.KE}$ and computing the session key.

In modern AKE security models, the session identifier is a quadruple $(\hat{A}, \hat{B}, Out, In)$, where $\hat{A}$ is the identity of the owner of the session, $\hat{B}$ the peer party, $Out$ the outgoing messages in the session, and $In$ the incoming messages. This kind of security models work well in proving a single protocol, but it is flawed when used to prove $\mathsf{tpm.KE}$ because $\mathsf{tpm.KE}$ supports more than one scheme (type of protocol) and instances running different schemes might have the same quadruple $(\hat{A}, \hat{B}, Out, In)$ but should be recognized as different sessions. To deal with this problem, we use a quintuple $(sc, \hat{A}, \hat{B}, Out, In)$ as the session identifier where $sc$ denotes the scheme of the session. This kind of session identifier not only distinguishes between different instances of the same scheme but also distinguishes between different schemes. For instance, two instances with different original quadruple $(\hat{A}, \hat{B}, Out, In)$, such as the two sessions $(sc, \hat{A}, \hat{B}, Out, In)$ and $(sc, \hat{A}, \hat{B}, Out', In)$ having different outgoing messages, are recognized as different sessions. And further more, two instances with same original quadruple $(\hat{A}, \hat{B}, Out, In)$ but running different schemes, such as the two sessions $(sc, \hat{A}, \hat{B}, Out, In)$ and $(sc', \hat{A}, \hat{B}, Out, In)$, are also recognized as different sessions. In particular, in the case of implicitly AKE protocols where the communication is basic Diffie-Hellman exchanges the session identifier is $(\hat{A}, \hat{B}, X, Y)$ where $X$ is the outgoing DH value and $Y$ the incoming DH value to the session. The session $(sc, \hat{B}, \hat{A}, Y, X)$ (if it exists) is said to be \textbf{matching} to the session $(sc, \hat{A}, \hat{B}, X, Y)$, and the session $(sc', \hat{B}, \hat{A}, Y, X)$ where $sc'\neq sc$ (if it exists) is said to be \textbf{message-matching} to the session $(sc, \hat{A}, \hat{B}, X, Y)$.

The $sc$ in the session identifier brings an issue we must address: how about the security of the session $(sc, \hat{A}, \hat{B}, X, Y)$ if it has a corrupted message-matching session? Previous AKE security models do not capture this attack because they do not support formal analysis of multiple kinds of protocols in a unified way. However, this attack can happen on $\mathsf{tpm.KE}$ because it supports three AKE schemes and the TPM 2.0 specifications do not force the TPM to check the key type of its peer party. We say $\mathsf{tpm.KE}$ satisfies \emph{correspondence property} if it can resist the above attack, i.e., the session $(sc, \hat{A}, \hat{B}, X, Y)$ is secure if its message-matching session is compromised.

\textbf{Attacker Model.}
The model involves multiple honest parties and an attacker $\mathcal{M}$ connected via an unauthenticated network. The attacker is modeled as a probabilistic Turing machine and has full control of the communications between parties. $\mathcal{M}$ can intercept and modify messages sent over the network. $\mathcal{M}$ also schedules all session activations and session-message delivery. In addition, in order to model potential disclosure of secret information, the attacker is allowed to access secret information via the following queries:
\begin{itemize}
\item \textbf{SessionStateReveal(s)}: $\mathcal{M}$ directly queries at session $s$ while still incomplete and learns the session state for $s$. In our analysis, the session state includes the values returned by interfaces of $\mathsf{tpm.KE}$ and intermediate information stored and computed in the host.
\item \textbf{SessionKeyReveal(s)}: $\mathcal{M}$ obtains the session key for the session $s$.
\item \textbf{Corruption($\hat{P}$)}: In other AKE security models, this query allows $\mathcal{M}$ to learn the plaintext of the long-term private key of party $\hat{P}$. In our model, $\mathcal{M}$ does not learn anything about the plaintext of the long-term private key but obtains the black-box access of the key via TPM interfaces.
\item \textbf{Test(s)}: Pick $b\xleftarrow{R}{0,1}$. If $b$ = 1, provide $\mathcal{M}$ with the session key; otherwise provide $\mathcal{M}$ with a random value $r\in_R\{0,1\}^{\lambda}$. This query can only be issued to a session that is ``clean". A completed session is ``clean" if the session as well as its matching session (if it exists) is not subject to the above three queries. A session is called ``exposed" if $\mathcal{M}$ performs any one of the above three queries to this session.
\end{itemize}

Note that our model differs from previous AKE security models in that the \textbf{Corruption} query to some party does not provide the attacker with the plaintext of the long-term private key of the party but the black-box access of the long-term key which is randomly generated and protected by the TPM. This difference captures the protections provided by the TPM, which are described in Section \ref{subsec:motivation}.

The security is defined based on an experiment played by $\mathcal{M}$, in which $\mathcal{M}$ is allowed to activate sessions and perform SessionStateReveal, SessionKeyReveal, and Corruption queries. At some time, $\mathcal{M}$ performs the Test query to a clean session of its choice and gets the value returned by Test. After that, $\mathcal{M}$ continues the experiment but is not allowed to expose the test session and its matching session (if it exists). Eventually, $\mathcal{M}$ outputs a bit $b'$ as its guess, then halts. $\mathcal{M}$ wins the game if $b'=b$. The attacker with the above capabilities is called a \textbf{KE-attacker}. The formal security is defined as follows.

\begin{definition}\label{def:security}
The advantage of any KE-attacker $\mathcal{M}$ in the above experiment with $\mathsf{tpm.KE}$ is defined as
\begin{center}
$\mathsf{Adv}^{\mathsf{tpm.KE}}(\mathcal{M})$ = $\left | Pr[b'=b]-\frac{1}{2} \right |$.
\end{center}
$\mathsf{tpm.KE}$ is called secure if the following properties hold for any KE-attacker $\mathcal{M}$.
\begin{enumerate}
\item When two uncorrupted parties complete matching sessions, they output the same session key, and
\item If no efficient attacker $\mathcal{M}$ has more than a negligible advantage in winning the above experiment.
\end{enumerate}
\label{def:tpmke}
\end{definition}

The first condition is a ``consistency" requirement for sessions completed by two uncorrupted parties. The second condition is the core property for the security of $\mathsf{tpm.KE}$: it guarantees that exposure of one session does not help the attacker to compromise the security of another session. Note that our security definition of $\mathsf{tpm.KE}$ allows the attacker to expose the message-matching session, that is to say, the test session is still secure even if the message-matching session is exposed by the attacker. Thus, our model captures the \emph{correspondence property}.

The Definition \ref{def:security} defines the basic security property that a modern AKE protocol should achieve, and it captures the common UKS attacks.
We illustrate how the UKS attacks are captured in our security model. In a UKS attack, a party $\hat{A}$ ends up believing he shares a key with party $\hat{B}$, and $\hat{B}$ ends up mistakenly believing the key is shared with the attacker $\mathcal{M}$. Suppose the identifier of $\hat{A}$'s session is $(sc, \hat{A}, \hat{B}, X, Y)$ and that of $\hat{B}$'s session is $(sc, \hat{B}, \mathcal{M}, X', Y')$. The two sessions apparently are not either identical or matching, so according to Definition \ref{def:security}, the attacker $\mathcal{M}$ can expose the session $(sc, \hat{B}, \mathcal{M}, X', Y')$ to attack $\hat{A}$'s session
$(sc, \hat{A}, \hat{B}, X, Y)$. The above illustration shows that if a protocol is proven secure under the unified security model, it should resist UKS attacks.

\section{Formal Description of $\mathsf{tpm.KE}$}\label{sec:formaldes}
The implementation of $\mathsf{tpm.KE}$ in the TPM 2.0 includes two phases. The first phase is to generate an ephemeral key which is transferred to the party which the owner wants to establish a session with, and the first phase is the same to all the three protocols supported by the TPM 2.0 specifications.
The second phase provides the key exchange functionalities which generate the unhashed secret values or session keys according to the specification of the protocol. Since parties or the attacker can only use $\mathsf{tpm.KE}$ in a black-box manner: they can only use $\mathsf{tpm.KE}$' functionalities by invoking its interfaces and do not know the secrets inside the TPM. To model the black-box manner of $\mathsf{tpm.KE}$, we simulate $\mathsf{tpm.KE}$ by formalizing its interfaces as a series of oracles which receive inputs and return outputs.

We model the interface of the first phase of $\mathsf{tpm.KE}$ by the oracle $\mathsf{ephem}_{A}()$ where $A$ is the long-term key of $\hat{A}$. We model the second phase of $\mathsf{tpm.KE}$ as three oracles: the Full UM, MQV, and SM2 key exchange functionalities of $\mathsf{tpm.KE}$ are modeled as oracle $\mathcal{O}^{\mathsf{EC}}_{A}$, oracle $\mathcal{O}^{\mathsf{MQV}}_{A}$, and oracle $\mathcal{O}^{\mathsf{SM2}}_{A}$ respectively.
$\mathcal{O}^{\mathsf{MQV}}_{A}$ and $\mathcal{O}^{\mathsf{SM2}}_{A}$ take as input the input of $\mathsf{TPM2\_ZGen\_2Phase()}$ and return the unhashed values. We let $\mathcal{O}^{\mathsf{EC}}_{A}$ directly return the session key but not the unhashed values $Z_1$ and $Z_2$, and this modification simulates our solutions to the first weakness of $\mathsf{tpm.KE}$. We now formally describe $\mathsf{tpm.KE}$ through the following three session activations.

\begin{enumerate}
\item Initiate$(sc, \hat{A}, \hat{B})$: $\hat{A}$ invokes $\mathsf{ephem}_{A}()$ of its TPM to obtain an ephemeral public key $X$ and an index $ctr_x$ identifying the ephemeral private key $x$ stored in the TPM, creates a local session which is identified as (an incomplete) session $(sc, \hat{A}, \hat{B}, X)$, where $sc$ is the key exchange scheme supported by the long-term key $A$, and outputs $X$ as its outgoing ephemeral public key.
\item Respond$(sc, \hat{B}, \hat{A}, X)$ ($sc$ is the scheme supported by $B$): After receiving $X$, $\hat{B}$ performs the following steps:
    \begin{enumerate}
    \item Invoke $\mathsf{ephem}_{B}()$ of its TPM to obtain an ephemeral public key $Y$ and an index $ctr_y$ identifying the ephemeral private key $y$ stored in the TPM; output $Y$ as its outgoing ephemeral public key.
    \item With input $(sc, keyB, ctr_y, A, X)$ where $keyB$ is the key handle of $B$, invoke the corresponding oracle according to the value of $sc$:
        \begin{description}
        \item Case $\mathsf{TPM\_ALG\_ECDH}$:
            Invoke $\mathcal{O}^{\mathsf{EC}}_{B}$ and set the session key $K$ to be the return result of $\mathcal{O}^{\mathsf{EC}}_{B}$.
        \item Case $\mathsf{TPM\_ALG\_ECMQV}$:
            Invoke $\mathcal{O}^{\mathsf{MQV}}_{B}$, obtain $Z_B$ from the return result, and compute the session key $K=H_2(Z_B,\hat{A},\hat{B})$.
        \item Case $\mathsf{TPM\_ALG\_SM2}$:
            Invoke $\mathcal{O}^{\mathsf{SM2}}_{B}$, obtain $Z_B$ from the return result, and compute the session key $K=H_2(Z_B,\hat{A},\hat{B})$.
        \end{description}
    \item Complete the session with identifier $(sc, \hat{B},\hat{A},Y,X)$.
    \end{enumerate}
\item Complete$(sc,\hat{A},\hat{B},X,Y)$: $\hat{A}$ checks that it has an open session with identifier $(sc,\hat{A},\hat{B},X)$, and then performs the following steps:
    \begin{enumerate}
    \item With input  $(sc, keyA, ctr_x, B, Y)$ where $keyA$ is the key handle of $A$, invoke the corresponding oracle according to the value of $sc$:
        \begin{description}
        \item Case $\mathsf{TPM\_ALG\_ECDH}$:
            Invoke $\mathcal{O}^{\mathsf{EC}}_{A}$ and set the session key $K$ to be the return result of $\mathcal{O}^{\mathsf{EC}}_{A}$.
        \item Case $\mathsf{TPM\_ALG\_ECMQV}$:
            Invoke $\mathcal{O}^{\mathsf{MQV}}_{A}$, obtain $Z_A$ from the return result, and compute the session key $K=H_2(Z_A,\hat{A},\hat{B})$.
        \item Case $\mathsf{TPM\_ALG\_SM2}$:
            Invoke $\mathcal{O}^{\mathsf{SM2}}_{A}$, obtain $Z_A$ from the return result, and compute the session key $K=H_2(Z_A,\hat{A},\hat{B})$.
        \end{description}
    \item Complete the session with identifier $(sc, \hat{A},\hat{B},X,Y)$.
    \end{enumerate}
\end{enumerate}

\section{Unforgeability of MQV and SM2 Key Exchange Functionalities}\label{sec:unforge}
In this section, we give the formal definitions of MQV and SM2 key exchange functionalities of $\mathsf{tpm.KE}$, denoted by $\mathcal{O}^{\mathsf{MQV}}_B$ and $\mathcal{O}^{\mathsf{SM2}}_B$ respectively where $B$ is the public key of the party.
In Theorem \ref{the:mqv} and Theorem \ref{the:sm2}, we formally prove the unforgeability of $\mathcal{O}^{\mathsf{MQV}}_B$ and $\mathcal{O}^{\mathsf{SM2}}_B$ with a constraint on the attacker respectively. Although the unforgeability of $\mathcal{O}^{\mathsf{MQV}}_B$ and $\mathcal{O}^{\mathsf{SM2}}_B$ does not directly prove the security analysis of $\mathsf{tpm.KE}$, it acts as a building block in the security of $\mathsf{tpm.KE}$ and greatly simplifies our security analysis. In some cases in our security proof in Section \ref{sec:analysis}, we reduce the security of $\mathsf{tpm.KE}$ to the unforgeability of $\mathcal{O}^{\mathsf{MQV}}_B$ or $\mathcal{O}^{\mathsf{SM2}}_B$: if the attacker is able to break the security of $\mathsf{tpm.KE}$, we can leverage it to forge the result of $\mathcal{O}^{\mathsf{MQV}}_B$ or $\mathcal{O}^{\mathsf{SM2}}_B$.

\begin{definition}[The MQV Functionality of $\mathsf{tpm.KE}$]
The functionality ($\mathcal{O}^{\mathsf{MQV}}_B$) is provided by a party possessing a private/public key pair $(b,B=g^b)$. A challenger, possessing a private/public key pair $(a,A=g^a)$, provides $\mathcal{O}^{\mathsf{MQV}}_B$ with a challenge $X=g^x$ ($x$ is chosen and kept secret by the challenger). With the pair $(A,X)$, $\mathcal{O}^{\mathsf{MQV}}_B$ first computes an ephemeral private/public key pair $(y, Y=g^y)$ and returns $Z=(XA^d)^{h(y+eb)}$ where $d=avf(X)$ and $e=avf(Y)$. The challenger can verify the return result $(Y,Z)$ with respect to the challenge $X$ by checking whether $Z=(YB^e)^{h(x+da)}$.
\end{definition}

\begin{definition}[The SM2 Key Exchange Functionality of $\mathsf{tpm.KE}$]
The functionality ($\mathcal{O}^{\mathsf{SM2}}_B$) is provided by a party possessing a private/public key pair $(b,B=g^b)$. A challenger, possessing a private/public key pair $(a,A=g^a)$, provides $\mathcal{O}^{\mathsf{SM2}}_B$ with a challenge $X=g^x$ ($x$ is chosen and kept secret by the challenger). With the pair $(A,X)$, $\mathcal{O}^{\mathsf{SM2}}_B$ first computes an ephemeral private/public key pair $(y, Y=g^y)$ and returns $Z=(AX^d)^{h(b+ey)}$, where $d=avf'(X)$ and $e=avf'(Y)$. The challenger can verify the return result $(Y,Z)$ with respect to the challenge $X$ by checking whether $Z=(BY^e)^{h(a+dx)}$.
\end{definition}

\newtheorem{theorem}{Theorem}
\begin{theorem}\label{the:mqv}
Under the CDH assumption, with $avf()$ modeled as a random oracle, given a challenge $X$, it is computationally infeasible for an attacker to forge a return result of $\mathcal{O}^{\mathsf{MQV}}_B$ on behalf of a challenger whose public key is $A$ under the constraint that $(a,x)$ is unknown to the attacker.
\end{theorem}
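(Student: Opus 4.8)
The plan is to reduce a successful forgery to solving the CDH problem in $G$, using a rewinding (general forking lemma) argument around the random oracle $avf()$. Suppose, toward a contradiction, that some efficient attacker $\mathcal{F}$, given $(A,B,X)$ but not $(a,x)$ (and, in the variant used later inside the $\mathsf{tpm.KE}$ proof, also given black-box access to $\mathcal{O}^{\mathsf{MQV}}_B$ on inputs other than the challenge pair $(A,X)$), outputs with non-negligible probability $\epsilon$ a pair $(Y^*,Z^*)$ that passes the challenger's check, i.e. $Z^*=(Y^*B^{e^*})^{h(x+da)}$ with $d=avf(X)$ and $e^*=avf(Y^*)$. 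First I would observe that such a $Z^*$ is exactly $\mathrm{CDH}(Y^*B^{e^*},XA^d)^h$, so that it cannot be obtained by replaying an $\mathcal{O}^{\mathsf{MQV}}_B$ answer computed for a different input $(A',X')$ (the forgery game naturally forbids the trivial query on $(A,X)$ itself), and that $\mathcal{F}$ must, except with probability the reciprocal of the size of $avf$'s range, actually have queried $avf(Y^*)$ during its run.

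Given a CDH instance $(U,V)$ I would build a solver $\mathcal{S}$ that sets $B:=U$ and $X:=V$ (so $b$ and $x$ are unknown to $\mathcal{S}$ as well, which is harmless), picks $a\in_R Z_q$ and sets $A:=g^a$ (known to $\mathcal{S}$ but withheld from $\mathcal{F}$, matching the hypothesis that $a$ is unknown to the attacker), and runs $\mathcal{F}(A,B,X)$. Queries to $avf()$ are answered with fresh uniform values of the correct length, kept in a table for consistency; in particular $\mathcal{S}$ learns $d=avf(X)$. Any query $(A',X')$ to $\mathcal{O}^{\mathsf{MQV}}_B$ is answered by the HMQV-style trick that avoids knowing $b$: compute $W':=X'(A')^{avf(X')}$, draw $t',e'$ uniformly, return the pair $\bigl(Y':=g^{t'}B^{-e'},\,(W')^{ht'}\bigr)$, and program $avf(Y'):=e'$ (aborting on the negligible event that $Y'$ was queried before). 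Since $\log Y'=t'-e'b$, one checks $(W')^{h(\log Y'+e'b)}=(W')^{ht'}$, so this answer has the correct distribution.

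When $\mathcal{F}$ halts with a valid forgery $(Y^*,Z^*)$, I would invoke the general forking lemma: rewind $\mathcal{F}$ to the point of its $avf(Y^*)$ query, resupply a fresh uniform value $e^{**}$, and re-run with the same coins elsewhere. With probability roughly $\epsilon\,(\epsilon/q_{avf}-1/|\mathrm{range}|)$, where $q_{avf}$ bounds the number of $avf$ queries, the second run yields a valid forgery with the same first component $Y^*$ and $e^{**}\neq e^*$, hence $Z^{**}=(Y^*B^{e^{**}})^{h(x+da)}$. Dividing, $Z^*/Z^{**}=B^{\,h(e^*-e^{**})(x+da)}$; since $e^*-e^{**}$ and the cofactor $h$ are invertible modulo $q$, $\mathcal{S}$ recovers $B^{x+da}$, multiplies by $B^{-(da\bmod q)}$ (it knows both $d$ and $a$), and obtains $B^{x}=\mathrm{CDH}(B,X)=\mathrm{CDH}(U,V)$, which it outputs. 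A non-negligible $\epsilon$ thus contradicts the CDH assumption, and all the exponent/subgroup bookkeeping that makes $h$ invertible and keeps exponents well-defined modulo $q$ is routine for the curves of interest.

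I expect the main obstacle to be the forking/rewinding accounting rather than the algebra. The delicate points are: ensuring the "target" random-oracle query really is a query that $\mathcal{F}$ itself made on its forged $Y^*$, not a table entry the simulator planted while answering an $\mathcal{O}^{\mathsf{MQV}}_B$ query — which I would dispatch by showing $Y^*$ coincides with a planted $Y'$ only with negligible probability (and, even then, that such a $Y^*$ cannot be used to forge on $(A,X)$ since $\mathcal{F}$ does not know $\log Y'$); checking that all planted $avf$ entries remain consistent on both sides of the rewind; and verifying that the usual quadratic degradation together with the several abort events (oracle-programming collisions, and $\mathcal{F}$ forging on an un-queried $Y^*$) each contribute only negligibly, so that $\mathcal{S}$'s overall advantage stays non-negligible whenever $\mathcal{F}$'s does.
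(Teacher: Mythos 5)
Your proposal is correct and follows essentially the same route as the paper's proof: simulate $\mathcal{O}^{\mathsf{MQV}}_B$ without $b$ by programming $avf(Y')=e'$ on $Y'=g^{t'}B^{-e'}$, then apply the Forking Lemma to the attacker's $avf(Y^*)$ query to obtain two forgeries $Z^*,Z^{**}$ whose quotient yields $B^{x+da}$ and hence $\mathrm{CDH}(X,B)=B^{x}$ after stripping $B^{da}$. Your write-up is in fact more careful than the paper's about the forking-lemma bookkeeping and the planted-entry collision events, but the reduction, the oracle simulation trick, and the extraction algebra are identical.
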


The constraint is reasonable because the TPM prevents attackers from obtaining the plaintext of keys in it.
We prove Theorem \ref{the:mqv} by showing that if an attacker $\mathcal{M}$ can forge a return result under our constraint, then we can construct a CDH solver $\mathcal{C}$ which uses $\mathcal{M}$ as a subroutine.

\begin{proof}
$\mathcal{C}$ takes as input a pair $(X,B)\in G^2$ and $(a,A=g^a)$, and it simulates $\mathcal{O}^{\mathsf{MQV}}_B$ as follows:
\begin{enumerate}
\item On receipt of the input $(A',X')$, choose $e,s\in Z_q$ randomly.
\item Let $Y'=g^s/B^e$, and set $avf(Y')=e$.
\item Choose $d$ randomly, and set $avf(X')=d$.
\item Return $(Y,Z'=(X'A^d)^{hs})$.
\end{enumerate}

If $\mathcal{M}$ successfully forges a return result $(Y,Z)$ on the pair $(A,X)$ in an experiment, then $\mathcal{C}$ obtains $Z=(XA^d)^{h(y+eb)}$ where $d=avf(X)$ and $e=avf(Y)$. Note that without the knowledge of the private key $y$ of $Y$, $\mathcal{C}$ is unable to compute CDH($X,B$). Following the Forking Lemma \cite{fork96} approach, $\mathcal{C}$ runs $\mathcal{M}$ on the same input and the same coin flips but with carefully modified answers to $avf()$ queries. Note that $\mathcal{M}$ must have queried $avf(Y)$ in its first run because otherwise $\mathcal{M}$ would be unable to compute $Z$. For the second run of $\mathcal{M}$, $\mathcal{C}$ responds to $avf(Y)$ with a value $e'\neq e$ selected uniformly at random. If $\mathcal{M}$ succeeds in the second run, $\mathcal{C}$ obtains $Z'=(XA^d)^{h(y+e'b)}$ and can compute CDH$(X,B)=(\frac{Z}{Z'})^{\frac{1}{h(e-e')}}B^{-da}$. $\hfill\blacksquare$
\end{proof}

\begin{theorem}\label{the:sm2}
Under the CDH assumption, with $avf'()$ modeled as a random oracle, given a challenge $X$, it is computationally infeasible for an attacker to forge a return result of $\mathcal{O}^{\mathsf{SM2}}_B$ on behalf of a challenger whose public key is $A$ under the constraint that $(a,x)$ is unknown to the attacker.
\end{theorem}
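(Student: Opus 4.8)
The plan is to mirror the proof of Theorem \ref{the:mqv}, adapting the simulation of the oracle and the final CDH extraction to the algebraic form of the SM2 functionality. Given an attacker $\mathcal{M}$ that, under the constraint that $(a,x)$ is unknown, forges a valid return result of $\mathcal{O}^{\mathsf{SM2}}_B$ on the pair $(A,X)$, I would build a CDH solver $\mathcal{C}$ that takes as input $(X,B)\in G^2$ together with $(a,A=g^a)$ and aims to output CDH$(X,B)=g^{xb'}$ where $B=g^{b'}$. The first task is to simulate $\mathcal{O}^{\mathsf{SM2}}_B$ without knowing $b'$: on a query $(A',X')$, $\mathcal{C}$ picks $e,s\in_R Z_q$, sets $Y'=g^{s}B^{-e}$ (so that $b'+ey'=s$ where $Y'=g^{y'}$, hence $B Y'^{e}=g^{s}$), programs $avf'(Y')=e$, picks $d$ at random and programs $avf'(X')=d$, and returns $(Y', Z'=(A'X'^{d})^{hs})$. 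This is a perfect simulation as long as the random-oracle programming does not collide with prior queries, which happens only with negligible probability since $e,d,s$ are uniform.

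Next I would apply the Forking Lemma \cite{fork96} exactly as in Theorem \ref{the:mqv}. When $\mathcal{M}$ outputs a valid forgery $(Y,Z)$ with $Z=(BY^{e})^{h(a+dx)}$ where $d=avf'(X)$ and $e=avf'(Y)$, observe that $\mathcal{M}$ must have queried $avf'(Y)$ before producing $Z$ (otherwise it could not have incorporated the correct $e$), so $\mathcal{C}$ replays $\mathcal{M}$ on the same input and coins but answers that distinguished $avf'(Y)$ query with a fresh uniform $e'\neq e$. With the usual forking probability, the second run also yields a valid forgery $(Y,Z')$ with the \emph{same} $Y$ and $Z'=(BY^{e'})^{h(a+dx)}$. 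Dividing, $Z/Z' = Y^{h(e-e')(a+dx)} = g^{y h(e-e')(a+dx)}$, which still contains the unknown $y$; to get rid of it I instead work with the ``$B$-side'' of the identity. Writing $Z=(BY^{e})^{h(a+dx)}$ as $(B^{h(a+dx)})\cdot(Y^{e h(a+dx)})$ is not directly helpful, so the cleaner route is: from the two equations, $Z^{1/h}=(BY^{e})^{a+dx}$ and $Z'^{1/h}=(BY^{e'})^{a+dx}$; taking the ratio gives $(Z/Z')^{1/h}=Y^{(e-e')(a+dx)}$, and separately $Z^{1/h}\cdot (Z/Z')^{-e/(h(e-e'))}=B^{a+dx}=g^{ab'}X^{db'}=A^{b'}\cdot \mathrm{CDH}(X,B)^{d}$. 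Since $A$ is part of $\mathcal{C}$'s input it knows $A^{b'}=B^{a}$, so $\mathcal{C}$ recovers $\mathrm{CDH}(X,B)^{d}$ and, since $d=avf'(X)$ is nonzero and invertible mod $q$, extracts $\mathrm{CDH}(X,B)=\bigl(Z^{1/h}(Z/Z')^{-e/(h(e-e'))}B^{-a}\bigr)^{1/d}$.

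The main obstacle, as in the MQV case, is ensuring the Forking Lemma is applied correctly: one must argue that $avf'(Y)$ is indeed a query $\mathcal{M}$ makes with non-negligible probability before outputting a successful forgery, that $Y$ is identical across the two runs (it is, because $Y$ is determined by $\mathcal{M}$'s coins and oracle answers up to the forked query), and that the rewinding loses only a polynomial factor in success probability. A secondary technical point is the bookkeeping of the exponents — in particular confirming that $h$ (the cofactor) and $d=avf'(X)$ are units modulo $q$ so that the final root extraction is well-defined; this follows because $\gcd(h,q)=1$ and because $avf'()$ outputs values in the range $[2^{l},2^{l}+2^{l})$ with $l=\lfloor q/2\rfloor$, which are nonzero and below $q$. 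Everything else is routine and parallels the previous proof, so I would state the adapted simulator, note the simulation is perfect, invoke the Forking Lemma, and present the extraction formula, closing with $\hfill\blacksquare$.
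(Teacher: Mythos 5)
Your overall strategy is the intended one (the paper itself omits this proof, stating only that it can be completed by following the proof of Theorem \ref{the:mqv}), and your forking-and-extraction step is correct; indeed it works out the one genuinely non-trivial adaptation. Because the roles of the long-term and ephemeral exponents are swapped in SM2, the MQV-style extraction $(Z/Z')^{1/(h(e-e'))}$ alone no longer yields $\mathrm{CDH}(X,B)$ (it gives $Y^{a+dx}$), so one must combine it with $Z^{1/h}$ to isolate $B^{a+dx}=B^{a}\cdot\mathrm{CDH}(X,B)^{d}$ and then strip off the known $B^{a}$ and the invertible exponent $d$. That computation checks out, as does your remark on the invertibility of $h$ and $d$ modulo $q$.

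However, your simulation of $\mathcal{O}^{\mathsf{SM2}}_B$ contains an algebra error that would make the reduction fail as written. You set $Y'=g^{s}B^{-e}$ and claim $b+ey'=s$ (equivalently $BY'^{e}=g^{s}$), but $Y'=g^{s}B^{-e}$ gives $y'=s-eb$ and hence $b+ey'=b+es-e^{2}b\neq s$. The embedding $Y'=g^{s}/B^{e}$ is the correct trapdoor for MQV, where the exponent to be controlled is $y+eb$; for SM2 the exponent is $b+ey$, so you need $y'=(s-b)e^{-1}$, i.e., $Y'=(g^{s}/B)^{e^{-1}}$ --- exactly the embedding the paper uses for the SM2 case of its simulator $\mathcal{S}_1$ in Section \ref{sec:analysis}. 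With your $Y'$, the returned value $(A'X'^{d})^{hs}$ does not equal $(A'X'^{d})^{h(b+ey')}$, and an adversary who knows the private keys of its own query pair $(A',X')$ can detect the inconsistency, so the simulation is not sound. Once the embedding is replaced by $Y'=(g^{s}/B)^{e^{-1}}$ (with $avf'(Y')$ programmed to $e$ as you do), the rest of your argument --- the forking on $avf'(Y)$, the identity of $Y$ across the two runs, and the final extraction $\mathrm{CDH}(X,B)=\bigl(Z^{1/h}(Z/Z')^{-e/(h(e-e'))}B^{-a}\bigr)^{1/d}$ --- goes through.
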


We omit the proof of theorem \ref{the:sm2} because it can be easily completed following the proof of theorem \ref{the:mqv}.

\section{Security Analysis of $\mathsf{tpm.KE}$}\label{sec:analysis}
Although our model is based on the CK model, the security analysis based on our new model differs from that based on the CK model in some aspects. First the \emph{Corrupt} query in our model does not let the attacker obtain the plaintext of the long-term key but only allows him to get the black-box access of the long-term key. The second difference is about the session state that allows the attacker to get by the \emph{SessionStateReveal} query. For most security analysis of AKE protocols, the session state is defined by the requirement of the proof. For instance, the security analysis of the HMQV protocol \cite{HMQV05} requires that the session state only includes the session key, which means that the \emph{SessionStateReveal} query and the \emph{SessionKeyReveal} query are the same. But for the security analysis of $\mathsf{tpm.KE}$, we need to define the session state according to the specifications of $\mathsf{tpm.KE}$: we have to put into session state all the secrets that are processed and stored on the host's memory which is easily to be tampered with. In the following we define the session state allowed to be revealed by the attacker.

\textbf{Session State.} In order to simulate the protections provided by the TPM, we specify that the state of a session stores the results returned by the TPM and the information stored in the host. For the Full UM scheme, the session state is the session key; for the MQV and SM2 key exchange schemes, the session state is the unhashed values returned by the TPM.

\begin{theorem}\label{the-main}
Under the CDH and GDH assumptions, with hash functions $H_1()$, $H_2()$, $avf()$, and $avf'()$ modeled as random oracles, $\mathsf{tpm.KE}$ is secure in the unified model.

Let $\mathsf{Adv}^{GDH}(\mathcal{C})$, $\mathsf{Adv}^{\mathcal{O}^{\mathsf{MQV}}}(\mathcal{C})$ and $\mathsf{Adv}^{\mathcal{O}^{\mathsf{SM2}}}(\mathcal{C})$ denote the advantage of an algorithm $\mathcal{C}$ in solving the GDH problem, forging the functionality of $\mathcal{O}^{\mathsf{MQV}}$, and forging the functionality of $\mathcal{O}^{\mathsf{SM2}}$. For any attacker $\mathcal{M}$ against $\mathsf{tpm.KE}$ whose security parameter is $\lambda$, and that involves at most $n$ parties and activates at most $k$ sessions, we construct a GDH solver $\mathcal{S}$, a $\mathcal{O}^{\mathsf{MQV}}$ forger $\mathcal{F}^{\mathsf{MQV}}$, and a $\mathcal{O}^{\mathsf{SM2}}$ forger $\mathcal{F}^{\mathsf{SM2}}$ such that
\begin{equation}\nonumber
\begin{split}
\mathsf{Adv}^{\mathsf{tpm.KE}}(\mathcal{M}) \leq \frac{1}{2} \cdot \max\ & \left\{  3n^2k\cdot\mathsf{Adv}^{GDH}(\mathcal{S}), \frac{9}{2}(nk)^2\cdot\mathsf{Adv}^{GDH}(\mathcal{S}), \right.\\
  &\left.\frac{9}{2}(nk)^2\cdot\mathsf{Adv}^{\mathcal{O}^{\mathsf{MQV}}}(\mathcal{F}^{\mathsf{MQV}}), 3n^2k\cdot\mathsf{Adv}^{\mathcal{O}^{\mathsf{SM2}}}(\mathcal{F}^{\mathsf{SM2}})  \right\} + \frac{1}{2} \cdot \mathcal{O}(\frac{k^2}{2^{\lambda}}).
\end{split}
\end{equation}
\end{theorem}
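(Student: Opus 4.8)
The plan is to follow the standard simulation-and-reduction template for CK-type AKE proofs, adapted to the multi-scheme setting of $\mathsf{tpm.KE}$. First I would dispatch the consistency requirement (item~1 of Definition~\ref{def:tpmke}) by direct computation: for matching Full~UM sessions both parties obtain $Z_1=g^{ab}$ and $Z_2=g^{xy}$ and hence the same $H_1$-input; for matching MQV sessions $(YB^e)^{h(x+da)}=g^{h(x+da)(y+eb)}=(XA^d)^{h(y+eb)}$, and symmetrically for SM2, so the $H_2$-inputs agree. The bulk of the work is item~2. Since every session key is the image under a random oracle ($H_1$ for UM, $H_2$ for MQV/SM2) of an input that contains the unhashed secret, a standard argument shows that unless $\mathcal{M}$ at some point queries that oracle on exactly the test session's input, the test key is uniformly random and independent of $\mathcal{M}$'s view; the probability of winning without that query is $\tfrac12+\mathcal{O}(k^2/2^{\lambda})$, the error term absorbing ephemeral/public-key collisions and the chance of an unrelated query hitting the right value. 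Thus, up to that additive term and the leading factor $\tfrac12$ relating advantage to the critical-query event, $\mathsf{Adv}^{\mathsf{tpm.KE}}(\mathcal{M})$ is bounded by the probability that $\mathcal{M}$ makes the critical query.

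Next I would partition the critical-query event according to (i) the scheme $sc$ of the test session and (ii) whether a matching session exists, and build a separate reduction in each cell. When a matching session exists, I would embed a GDH instance $(U,V)$ in the two ephemeral public keys of the test and matching sessions; after guessing the pair of session/owner indices (an $(nk)^2$-type factor), the simulator answers all $\mathsf{ephem}$, $\mathcal{O}^{\mathsf{EC}}$, $\mathcal{O}^{\mathsf{MQV}}$, $\mathcal{O}^{\mathsf{SM2}}$ calls honestly except for those two sessions, recognises the critical $H_1/H_2$ query with the DDH oracle, and extracts $\mathrm{CDH}(U,V)$ (hence, combined with the publicly known exponents, the test unhashed value). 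When no matching session exists and $sc$ is Full~UM, the peer's ephemeral is adversarially supplied, so instead I would embed the GDH instance in the two \emph{long-term} keys $A,B$ (guessing two party indices and the test session, an $n^2k$-type factor), simulating $\mathcal{O}^{\mathsf{EC}}_A$ and $\mathcal{O}^{\mathsf{EC}}_B$ without the private keys by random-oracle programming — reply with a fresh key and patch the $H_1$-table, consistency holding unless $\mathcal{M}$ itself queries $H_1$ on $\mathrm{CDH}(A,\cdot)$, which is exactly the event being exploited — and extract via the DDH oracle. When no matching session exists and $sc$ is MQV (resp.\ SM2), the critical query yields a pair $(Y,Z)$ with $Z=(YB^e)^{h(x+da)}$ (resp.\ the SM2 analogue) for the test owner's key $A$ and ephemeral $X$, with $(a,x)$ never exposed: precisely a forgery against $\mathcal{O}^{\mathsf{MQV}}_B$ (resp.\ $\mathcal{O}^{\mathsf{SM2}}_B$), so after the appropriate guess (including which of $\mathcal{M}$'s random-oracle queries carries the test value, or recognising it with a decision oracle where one is available) the reduction invokes Theorem~\ref{the:mqv} (resp.\ Theorem~\ref{the:sm2}). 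Collecting the per-cell bounds and the guessing/union-bound factors yields the stated maximum; the precise constants ($3$ versus $9/2$) track how many cells each reduction covers and the ordered index guesses it needs.

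A point that needs care throughout is the \emph{correspondence property}: the test session $(sc,\hat{A},\hat{B},X,Y)$ must stay secure even when the message-matching session $(sc',\hat{B},\hat{A},Y,X)$ with $sc'\neq sc$ is exposed, so in every reduction the simulator must still be able to answer $\mathsf{SessionStateReveal}$ on that session. I would argue this is harmless because its state is a single group element computed with a \emph{different} scheme — using the independent random oracle $avf'$ versus $avf$ and a structurally different exponent — hence either computable by the simulator from values it knows or, where it would touch the embedded challenge, sitting inside a distinct random-oracle call and therefore uncorrelated with the test session's critical input, so revealing it gives $\mathcal{M}$ nothing usable.

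I expect the main obstacle to be exactly this simulation bookkeeping: making the three TPM oracles answer consistently for \emph{all} sessions — including message-matching ones run under a scheme different from the one in which the GDH/CDH instance is planted — while the simulator is missing the long-term or ephemeral secret that the instance replaces, and then showing that the various random-oracle-programming steps and decision-oracle tests fail only with the collision probability already charged to $\mathcal{O}(k^2/2^{\lambda})$. The case split itself is routine; keeping all of the reductions simultaneously valid under the correspondence-property queries, and threading the guessing factors so the final bound comes out as stated, is where the real effort lies.
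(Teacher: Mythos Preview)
Your plan matches the paper's template closely, but there is one genuine gap. Your reduction of the advantage to the event ``$\mathcal{M}$ queries the random oracle on exactly the test session's input'' is not sound for MQV and SM2: the $H_2$ input there is only $(Z,\hat A,\hat B)$ and does \emph{not} contain the ephemerals $X,Y$, so nothing in the session identifier prevents a different, non-matching session $s'$ (same owner/peer pair, different ephemerals) from producing the identical $H_2$ input and hence the same key. $\mathcal{M}$ then wins via $\mathsf{SessionKeyReveal}(s')$ without ever querying $H_2$ himself, and the test key is \emph{not} independent of his view. The paper treats this separately as the \emph{key-replication attack} (event $\mathsf{E2}$, versus your forging event $\mathsf{E1}$) and disposes of it by observing that, in the MQV/SM2 session-state convention, $\mathsf{SessionStateReveal}(s')$ hands $\mathcal{M}$ the unhashed tuple $\sigma$ itself, so under any of the forging-attack simulators the reduction still extracts. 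You need this extra step; as written your probability bound simply omits the $\mathsf{E2}$ term.

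Two smaller deviations that are not errors but differ from the paper: for Full~UM the paper does \emph{not} split on whether a matching session exists---it always embeds the GDH challenge in the long-term keys $A,B$ (your ``no matching'' case), which works uniformly since $Z_1=g^{ab}$ is in the $H_1$ input regardless; your extra ephemeral-embedding case for Full~UM with matching is correct but redundant. And for the correspondence property the paper is more concrete than your final paragraph: rather than arguing the message-matching session's state is ``inside a distinct random-oracle call and therefore uncorrelated'', it singles this out as a dedicated subcase (C2) in which the simulator generates \emph{all} long-term keys and $\hat B$'s ephemeral $y_0$ honestly, planting the challenge only in $X_0$, so the revealed state at $\hat B$ under the other scheme---e.g.\ $(AX_0^{d'})^{h(b+e'y_0)}$ for SM2---is directly computable from values the simulator holds. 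Your intuition is right, but the proof goes through by arranging the simulation so that no unknown appears in that state, not by an independence argument.
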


\begin{proof}

The proof of the above theorem follows from the definition of secure key exchange protocols outlined in Section \ref{sec:model} and the following two lemmas.

\newtheorem{lemma}[theorem]{Lemma}
\begin{lemma}\label{lemma:same}
If two parties $\hat{A}$ and $\hat{B}$ complete matching sessions, then their session keys are the same.
\end{lemma}

\begin{lemma}\label{lemma:core}
Under the CDH and GDH assumptions, there is no feasible attacker that succeeds in distinguishing the session key of an unexposed session with non-negligible probability.
\end{lemma}

Lemma \ref{lemma:same} follows immediately from the definition of matching sessions. That is, if $\hat{A}$ completes the session $(sc,\hat{A},\hat{B},X,Y)$ and $\hat{B}$ completes the matching session $(sc,\hat{B},\hat{A},Y,X)$, it's easy to verify that $\hat{A}$'s session key is the same as $\hat{B}$'s according to the specifications of the protocols (Figure \ref{fig:protocols}).

We now prove Lemma \ref{lemma:core}. Let $\mathcal{M}$  be an attacker against $\mathsf{tpm.KE}$ and $\mathit{ts}$ the test session. Let $\mathsf{COLL}$ denote the event that the random oracles produce collisions, $\mathsf{GET}$ the event that the attacker gets the session key of $\mathit{ts}$, $\mathsf{QUERY}$ the event that someone, either the attacker or some party, queries the random oracle with the same tuple $\sigma$ as that of $\mathit{ts}$, $\mathsf{E1}$ the event that the attacker himself queries the random oracle with the same tuple $\sigma$ as that of $\mathit{ts}$, $\mathsf{E2}$ the event that some party queries the random oracle with the same tuple $\sigma$ as that of $\mathit{ts}$. Observe that $Pr\left[b'=b \mid \overline{\mathsf{GET}} \wedge \overline{\mathsf{COLL}} \right] = \frac{1}{2}$, $Pr\left[b'=b \mid \mathsf{GET} \wedge \overline{\mathsf{COLL}} \right] = 1$, $Pr\left[ \mathsf{GET} \wedge \overline{\mathsf{QUERY}} \wedge \overline{\mathsf{COLL}} \right] = 0$, and $Pr\left[\mathsf{QUERY}\right] = Pr\left[\mathsf{E1} \right] + Pr\left[\mathsf{E2} \right]$, we may write
\begin{equation}\nonumber
\begin{aligned}
\mathsf{Adv}^{\mathsf{tpm.KE}}(\mathcal{M}) &= \left|  Pr[b'=b] - \frac{1}{2} \right| \\
&= \left| Pr\left[b'=b \wedge \mathsf{COLL} \right] +  Pr\left[b'=b \wedge \mathsf{GET} \wedge \overline{\mathsf{COLL}} \right]  \right.
  \left.  + Pr\left[b'=b \wedge \overline{\mathsf{GET}} \wedge \overline{\mathsf{COLL}} \right] - \frac{1}{2} \right| \\
&= \left| Pr\left[b'=b \bigm| \mathsf{COLL} \right] \cdot Pr\left[ \mathsf{COLL} \right] + Pr\left[b'=b \bigm| \mathsf{GET} \wedge \overline{\mathsf{COLL}} \right] \cdot Pr\left[ \mathsf{GET} \wedge \overline{\mathsf{COLL}} \right]  \right.  \\
&\quad \left. +  \frac{1}{2} \cdot Pr\left[ \overline{\mathsf{GET}} \wedge \overline{\mathsf{COLL}} \right] - \frac{1}{2}\right| \\
&= \left| Pr\left[b'=b \bigm| \mathsf{COLL} \right] \cdot Pr\left[ \mathsf{COLL} \right] + Pr\left[b'=b \bigm| \mathsf{GET} \wedge \overline{\mathsf{COLL}} \right] \cdot Pr\left[ \mathsf{GET} \wedge \overline{\mathsf{COLL}} \right]  \right.  \\
&\quad \left. +  \frac{1}{2} \cdot \left( 1 - Pr\left[ \mathsf{COLL} \right] - Pr\left[ \mathsf{GET} \wedge \overline{\mathsf{COLL}} \right] \right) - \frac{1}{2}\right| \\
& \leq \frac{1}{2} \cdot \left( Pr\left[ \mathsf{GET} \wedge \overline{\mathsf{COLL}} \right] + Pr\left[ \mathsf{COLL} \right] \right) \\
& \leq \frac{1}{2} \cdot \left( Pr\left[ \mathsf{GET} \wedge \mathsf{QUERY} \wedge \overline{\mathsf{COLL}} \right] + Pr\left[ \mathsf{GET} \wedge \overline{\mathsf{QUERY}} \wedge \overline{\mathsf{COLL}} \right] + Pr\left[ \mathsf{COLL} \right] \right) \\
& \leq \frac{1}{2} \cdot \left( Pr\left[ \mathsf{GET} \wedge \mathsf{E1} \wedge \overline{\mathsf{COLL}} \right] + Pr\left[ \mathsf{GET} \wedge \mathsf{E2} \wedge \overline{\mathsf{COLL}} \right] + Pr\left[ \mathsf{COLL} \right] \right)
\end{aligned}
\end{equation}

If random oracles produce no collisions, the event $\mathsf{GET} \wedge \mathsf{E1} \wedge \overline{\mathsf{COLL}}$ means that $\mathcal{M}$ himself queries the random oracle with the same tuple $\sigma$ as that of $\mathit{ts}$ and gets the session key of $\mathit{ts}$. We denote this event by the \emph{forging attack}. And if random oracles produce no collisions, the event $ \mathsf{GET} \wedge \mathsf{E2} \wedge \overline{\mathsf{COLL}}$ means that $\mathcal{M}$ gets the session key of $\mathit{ts}$ but he does not query the random oracle with the tuple $\sigma$ of $\mathit{ts}$, and that some party establishes a session $\mathit{s}'$ with the same tuple $\sigma$ of $\mathit{ts}$. Since $\mathcal{M}$ himself does not query the tuple $\sigma$ of $\mathit{ts}$ and is not allowed to reveal $\mathit{ts}$ in the experiment, he must get the session key of $\mathit{ts}$ by revealing $\mathit{s}'$. So the event $ \mathsf{GET} \wedge \mathsf{E2} \wedge \overline{\mathsf{COLL}}$ implies that $\mathcal{M}$ forces the establishment of another session $\mathit{s}'$ that has the same session key as the test session and he gets the session key of the test session by revealing $\mathit{s}'$, and we denote this event by the \emph{key-replication attack}. We summarize the two attacks as follows:
\begin{enumerate}
\item Forging attack. At some point $\mathcal{M}$ queries $H_1()$ or $H_2()$ on the same tuple $\sigma$ as the test session.
\item Key-replication attack. $\mathcal{M}$ succeeds in forcing the establishment of another session that has the same session key as the test session.
\end{enumerate}

The rest of this section will prove the infeasibility of forging attack and key-replication attack by showing that if either of the above attacks succeed with non-negligible probability then there exists an attacker against the GDH problem or a forger against the MQV functionality of $\mathsf{tpm.KE}$, or a forger against the SM2 key exchange functionality of $\mathsf{tpm.KE}$, and the latter two forgers are in contradiction to the CDH assumption (Theorem \ref{the:mqv} and Theorem \ref{the:sm2}). The proof of Lemma \ref{lemma:core} will be completed after the proof of the infeasibility of forging attack and key-replication attack.
$\hfill\blacksquare$
\end{proof}

\subsection{Infeasibility of Forging Attacks}
Consider a successful forging attack performed by $\mathcal{M}$. Let $(sc,\hat{A},\hat{B},X_0,Y_0)$ be the test session for whose tuple $\mathcal{M}$ outputs a correct guess. By the convention on session identifiers, we know that the test session is held by $\hat{A}$, its peer is $\hat{B}$, $X_0$ was output by $\hat{A}$, and $Y_0$ was the incoming message to $\hat{A}$. $sc$ can fall under one of the following three cases:
\begin{enumerate}
\item $sc=\mathsf{TPM\_ALG\_ECDH}$.
\item $sc=\mathsf{TPM\_ALG\_ECMQV}$.
\item $sc=\mathsf{TPM\_ALG\_SM2}$.
\end{enumerate}

As we assume that $\mathcal{M}$ succeeds with non-negligible probability in the forging attack, there is at least one of the above three cases that occurs with non-negligible probability. We assume that $\mathcal{M}$ operates in an environment that involves at most $n$ parties and each party participates in at most $k$ sessions. We analyze the three cases separately in the following.

\subsubsection{Analysis of Case 1}
For this case, we build a GDH solver $\mathcal{S}_1$ with the following property: if $\mathcal{M}$ succeeds with non-negligible probability in this case, then $\mathcal{S}_1$ succeeds with non-negligible probability in solving the GDH problem. $\mathcal{S}_1$ takes as input a pair $(A,B)$, creates an experiment which includes $n$ honest parties and the attacker $\mathcal{M}$, and is given access to a DDH oracle $DDH$. $\mathcal{S}_1$ randomly selects two parties $\hat{A}$ and $\hat{B}$ from the honest parties and sets their public keys to be $A$ and $B$ respectively, and all the other parties compute their keys normally. Furthermore, $\mathcal{S}_1$ randomly selects an integer $i\in[1,...,k]$. The simulation for $\mathcal{M}$'s environment proceeds as follows:
\begin{enumerate}\setcounter{enumi}{-1}
\item $\mathcal{S}_1$ models $\mathsf{ephem}_{P}()$ for all parties except party $\hat{B}$ following the description in Section \ref{sec:formaldes}. $\mathcal{M}$ sets the type of all long-term keys. If the type of $A$ is not $\mathsf{TPM\_ALG\_ECDH}$, $\mathcal{S}_1$ aborts. $\mathcal{S}_1$ creates oracles modeling the two-phase key exchange functionalities for each party normally except parties $\hat{A}$ and $\hat{B}$ because it possesses the long-term private keys of all parties except $\hat{A}$ and $\hat{B}$. $H_1()$, $H_2()$, $avf()$, and $avf'()$ are modeled as random oracles described below.
\item Initiate$(sc,\hat{P_1}, \hat{P_2})$: $\hat{P_1}$ executes the Initiate() activation of the protocol. However, if the session being created is the $i$-th session at $\hat{A}$, $\mathcal{S}_1$ checks whether $sc=\mathsf{TPM\_ALG\_ECDH}$ and $\hat{P_2} = \hat{B}$. If not, $\mathcal{S}_1$ aborts.
\item Respond$(sc,\hat{P_1}, \hat{P_2}, Y)$: With the exception of $\hat{A}$ and $\hat{B}$ (whose behaviors we explain below), $\hat{P_1}$ executes the Respond() activation of the protocol. However, if the session being created is the $i$-th session at $\hat{A}$, $\mathcal{S}_1$ checks whether  $sc=\mathsf{TPM\_ALG\_ECDH}$ and $\hat{P_2} = \hat{B}$. If not, $\mathcal{S}_1$ aborts.
\item Complete$(sc,\hat{P_1}, \hat{P_2}, X, Y)$:  With the exception of $\hat{A}$ and $\hat{B}$ (whose behaviors we explain below), $\hat{P_1}$ executes the Complete() activation of the protocol. However, if the session is the $i$-th session at $\hat{A}$, $\mathcal{S}_1$ completes the session without computing a session key.
\item With the input $(sc, keyA, ctr_x, P, Y)$, $\mathcal{S}_1$ creates the oracle $\mathcal{O}^{\mathsf{EC}}_{A}$ as follows: \label{step:modelECA}
    \begin{enumerate}
    \item If $\hat{P}=\hat{B}$, $\mathcal{O}^{\mathsf{EC}}_{A}$ returns a session key to be $H_{spec}(\hat{A},\hat{B},X,Y)$. $H_{spec}()$ is simulated as a random oracle.
    \item If $\hat{P}\neq \hat{B}$, returns a session key to be $H_1(Z_1,Z_2,\hat{A},\hat{P},X,Y)$ where $Z_1=A^p$ ($p$ is the long-term private key of $\hat{P}$) and $Z_2=Y^x$ ($x$ is the ephemeral private key indexed by $ctr_x$).
    \end{enumerate}
\item Now $\mathcal{S}_1$ can simulate all the session activations at $\hat{A}$ for $\mathcal{M}$ with the help of $\mathcal{O}^{\mathsf{EC}}_{A}$.
\item $\mathcal{S}_1$ creates a Table $T$ and models $\mathsf{ephem}_{B}()$ according to the type of $B$:
    \begin{enumerate}
    \item Case $\mathsf{TPM\_ALG\_ECDH}$: Model following the description in Section \ref{sec:formaldes}.
    \item Case $\mathsf{TPM\_ALG\_ECMQV}$:
        \begin{enumerate}
        \item Randomly choose $e,s\in Z_q$.
        \item Set $Y=g^s/B^e$ and $e=avf(Y)$.
        \item Randomly choose an index $ctr$, and add a record $(ctr, e,s,Y,-)$ to $T$.
        \item Return $ctr$ and $Y$.
        \end{enumerate}
    \item Case $\mathsf{TPM\_ALG\_SM2}$:
        \begin{enumerate}
        \item Randomly choose $e,s\in Z_q$.
        \item Set $Y=(g^s/B)^{e^{-1}}$ and $e=avf'(Y)$.
        \item Randomly choose an index $ctr$, and add a record $(ctr, e,s,Y,-)$ to $T$.
        \item Return $ctr$ and $Y$.
        \end{enumerate}
    \end{enumerate}

\item With the input $(sc, keyB, ctr_y, P, X)$, $\mathcal{S}_1$ creates the oracle $\mathcal{O}^{\mathsf{EC}}_{B}$ to model the two-phase key exchange functionality for party $\hat{B}$ according to the type of $B$: \label{step:oracleB}
    \begin{enumerate}
    \item Case $\mathsf{TPM\_ALG\_ECDH}$: $\mathcal{O}^{\mathsf{EC}}_{B}$ is modeled similarly to $\mathcal{O}^{\mathsf{EC}}_{A}$ which is described in step \ref{step:modelECA}.
    \item Case $\mathsf{TPM\_ALG\_ECMQV}$:
        \begin{enumerate}
        \item Check whether (1) $sc=\mathsf{TPM\_ALG\_ECMQV}$, (2) $P$ and $X$ are on the curve associated with $B$, and (3) the last element of the record in $T$ indexed by $ctr_y$ is `$-$'. If the above checks succeed, continue, else return an error.
        \item Suppose the record in $T$ indexed by $ctr_y$ is $(ctr_y, e,s,Y,-)$, set $Z_1=(XP^d)^{hs}$ where $d=avf(X)$, and set the last element of the record to be $\times$.
        \item Return $(Z_1, Z_2=NULL)$.
        \end{enumerate}
    \item Case $\mathsf{TPM\_ALG\_SM2}$:
        \begin{enumerate}
        \item Check whether (1) $sc=\mathsf{TPM\_ALG\_SM2}$, (2) $P$ and $X$ are on the curve associated with $B$, and (3) the last element of the record in $T$ indexed by $ctr_y$ is `$-$'. If the above checks pass, continue, else return an error.
        \item Suppose the record in $T$ indexed by $ctr_y$ is $(ctr_y, e,s,Y,-)$, set $Z_1=(PX^d)^{hs}$ where $d=avf'(X)$, and set the last element of the record to be $\times$.
        \item Return $(Z_1, Z_2=NULL)$.
        \end{enumerate}
    \end{enumerate}
\item $\mathcal{S}_1$ simulates all the session activations at $\hat{B}$ for $\mathcal{M}$ with the help of $\mathsf{ephem}_{B}()$ and the oracle created in step \ref{step:oracleB}.
\item SessionStateReveal($s$): $\mathcal{S}_1$ returns to $\mathcal{M}$ the session state of session $s$. However, if $s$ is the $i$-th session at $\hat{A}$, $\mathcal{S}_1$ aborts.
\item SessionKeyReveal($s$): $\mathcal{S}_1$ returns to $\mathcal{M}$ the session key of $s$. If $s$ is the $i$-th session at $\hat{A}$, $\mathcal{S}_1$ aborts.
\item Corruption($\hat{P}$): $\mathcal{S}_1$ gives $\mathcal{M}$ the handle of the long-term key $P$. If $\mathcal{M}$ tries to corrupt $\hat{A}$ or $\hat{B}$, $\mathcal{S}_1$ aborts.
\item $H_1(\sigma)$ function for some $\sigma=(Z_1,Z_2,\hat{P_1},\hat{P_2},X,Y)$ proceeds as follows:
    \begin{enumerate}
    \item If $\hat{P_1}=\hat{A}$, $\hat{P_2}=\hat{B}$, and $DDH(A,B,Z_1)=1$, then $\mathcal{S}_1$ aborts $\mathcal{M}$ and succeeds by outputting CDH$(A,B)=Z_1$.
    \item If the value of the function on input $\sigma$ has been defined previously, return it.
    \item If the value of $H_{spec}()$ on input $(\hat{P_1},\hat{P_2},X,Y)$ has been defined previously, return it.
    \item Pick a random key $k\in_R \{0,1\}^{\lambda}$, and define $H_1(\sigma)=k$.
    \end{enumerate}
\item $H_{spec}()$, $H_2()$, $avf()$, and $avf'()$ are simulated as random oracles in the usual way.
\end{enumerate}

\begin{proof}
The probability that $\mathcal{M}$ sets the type of $A$ to be $\mathsf{TPM\_ALG\_ECDH}$ and selects the $i$-th session of $\hat{A}$ and the peer of the test session is party $\hat{B}$ is at least $\frac{1}{3n^2k}$. Suppose that this indeed the case: the type of $A$ is $\mathsf{TPM\_ALG\_ECDH}$, so $\mathcal{S}_1$ does not abort in Step 0; $\mathcal{M}$ is not allowed to corrupt $\hat{A}$ and $\hat{B}$, make SessionStateReveal and SessionKeyReveal queries to the $i$-th session of $\hat{A}$, so $\mathcal{S}_1$ does not abort in Step 1, 2, 9, 10, 11. Therefore, $\mathcal{S}_1$ simulates $\mathcal{M}$'s environment perfectly. Thus, if $\mathcal{M}$ wins with non-negligible probability in this case, the success probability of $\mathcal{S}_1$ is bounded by
\begin{center}
$Pr(\mathcal{S}_1)\geq \frac{1}{3n^2k}Pr(\mathcal{M})$.
\end{center}
$\hfill\blacksquare$
\end{proof}

\subsubsection{Analysis of Case 2}\label{subsec:case2}
Recall that the test session is denoted by $(sc,\hat{A},\hat{B},X_0,Y_0)$. We divide Case 2 of the forging attack into the following four subcases according to the generation of $Y_0$:
\begin{enumerate}[C1.]
\item $Y_0$ was generated by $\hat{B}$ in a session matching the test session, i.e., in session $(sc,\hat{B},\hat{A},Y_0,X_0)$.
\item $Y_0$ was generated by $\hat{B}$ in a session message-matching the test session, i.e., in session $(sc',\hat{B},\hat{A},Y_0,X_0)$ with $sc'\neq sc$.
\item $Y_0$ was generated by $\hat{B}$ in a session $(sc',\hat{B},\hat{A}^*,Y_0,X^*)$ with $(\hat{A}^*,X^*)\neq (\hat{A},X_0)$.
\item $Y_0$ did not appear in any completed sessions activated at $\hat{B}$, i.e., $Y_0$ was never output by $\hat{B}$ as its outgoing ephemeral public key in any sessions, or $\hat{B}$ did output $Y_0$ as its outgoing ephemeral public key for some session $s$ but it never completed $s$ by computing the session key.
\end{enumerate}

If $\mathcal{M}$ succeeds in Case 2 in its forging attack with non-negligible probability then there is at least one of the above four subcases happens with non-negligible probability in the successful runs of $\mathcal{M}$. We proceed to analyze these subcases separately.

\textbf{Analysis of Subcase C1.}
For this subcase, we build a GDH solver $\mathcal{S}_2$ with the following property: if $\mathcal{M}$ succeeds with non-negligible probability in this subcase, then $\mathcal{S}_2$ succeeds with non-negligible probability in solving the GDH problem. $\mathcal{S}_2$ takes as input a pair $(X_0,Y_0)$, creates an experiment which includes $n$ honest parties and the attacker $\mathcal{M}$, and is given access to a DDH oracle $DDH$. All parties compute their keys normally. $\mathcal{S}_2$ randomly selects two party $\hat{A}$ and $\hat{B}$ and randomly selects two integers $i,j\in[1,...,k]$. The simulation for $\mathcal{M}$'s environment proceeds as follows:
\begin{enumerate}\setcounter{enumi}{-1}
\item $\mathcal{S}_2$ models $\mathsf{ephem}_{P}()$ for all parties following the description in Section \ref{sec:formaldes}. $\mathcal{M}$ sets the type of all long-term keys. If the type of $A$ and $B$ is not $\mathsf{TPM\_ALG\_ECMQV}$, $\mathcal{S}_2$ aborts. $\mathcal{S}_2$ can create oracles modeling the two-phase key exchange functionalities for each party normally as it possesses the long-term private keys of all parties.
\item Initiate$(sc,\hat{P_1}, \hat{P_2})$: $\hat{P_1}$ executes the Initiate() activation of the protocol. However, if the session being created is the $i$-th session at $\hat{A}$ (or the $j$-th session at $\hat{B}$), $\mathcal{S}_2$ checks whether $sc=\mathsf{TPM\_ALG\_ECMQV}$ and $\hat{P_2} = \hat{B}$ (or $\hat{P_2} = \hat{A}$). If so, $\mathcal{S}_2$ sets the ephemeral public key to be $X_0$ (or $Y_0$), else $\mathcal{S}_2$ aborts.
\item Respond$(sc,\hat{P_1}, \hat{P_2}, Y)$: $\hat{P_1}$ executes the Respond() activation of the protocol. However, if the session being created is the $i$-th session at $\hat{A}$ (or the $j$-th session at $\hat{B}$), $\mathcal{S}_2$ checks whether $sc=\mathsf{TPM\_ALG\_ECMQV}$, $\hat{P_2} = \hat{B}$ (or $\hat{P_2} = \hat{A}$), and $Y=Y_0$ (or $Y=X_0$). If so, $\mathcal{S}_2$ sets the ephemeral public key to be $X_0$ (or $Y_0$), else $\mathcal{S}_2$ aborts.
\item Complete$(sc,\hat{P_1}, \hat{P_2}, X, Y)$: $\hat{P_1}$ executes the Complete() activation of the protocol. However, if the session is the $i$-th session at $\hat{A}$ (or the $j$-th session at $\hat{B}$), $\mathcal{S}_2$ completes the session without computing a session key.
\item SessionStateReveal($s$): $\mathcal{S}_2$ returns to $\mathcal{M}$ the session state of session $s$. However, if $s$ is the $i$-th session at $\hat{A}$ (or the $j$-th session at $\hat{B}$), $\mathcal{S}_2$ aborts.
\item SessionKeyReveal($s$): $\mathcal{S}_2$ returns to $\mathcal{M}$ the session key of $s$. If $s$ is the $i$-th session at $\hat{A}$ (or the $j$-th session at $\hat{B}$), $\mathcal{S}_2$ aborts.
\item Corruption($\hat{P}$): $\mathcal{S}_2$ gives $\mathcal{M}$ the handle of the long-term key $P$. If $\mathcal{M}$ tries to corrupt $\hat{A}$ or $\hat{B}$, $\mathcal{S}_2$ aborts.
\item $H_2(\sigma)$ function for some $\sigma=(Z,\hat{P_1},\hat{P_2})$ proceeds as follows:
    \begin{enumerate}
    \item If $\hat{P_1}=\hat{A}$, $\hat{P_2}=\hat{B}$, and $DDH(X_0A^d,Y_0B^e,Z^{1/h})=1$ where $d=avf(X_0)$ and $e=avf(Y_0)$, then $\mathcal{S}_2$ aborts $\mathcal{M}$ and succeeds by outputting CDH($X_0,Y_0)=\frac{Z^{1/h}}{X_0^{eb}Y_0^{da}g^{deab}}$.
    \item If the value of the function on input $\sigma$ has been defined previously, return it.
    \item Pick a random key $k\in_R\{0,1\}^{\lambda}$, and define $H_2(\sigma)=k$.
    \end{enumerate}
\item $H_1()$, $avf()$, and $avf'()$ are simulated as random oracles in the usual way.
\end{enumerate}

\begin{proof}
The probability that $\mathcal{M}$ sets the type of $A$ and $B$ to be $\mathsf{TPM\_ALG\_ECMQV}$ and selects the $i$-th session of $\hat{A}$ and the $j$-th session of $\hat{B}$ as the test session and its matching session is at least $\frac{1}{3}\times\frac{1}{3}\times\frac{2}{(nk)^2}=\frac{2}{9(nk)^2}$. Suppose that this is indeed the case: the type of $A$ and $B$ is $\mathsf{TPM\_ALG\_ECMQV}$, so $\mathcal{S}_2$ does not abort in Step 0; $\mathcal{M}$ is not allowed to corrupt $\hat{A}$ and $\hat{B}$, make SessionStateReveal and SessionKeyReveal queries to the $i$-th session of $\hat{A}$ or the $j$-th session of $\hat{B}$, so $\mathcal{S}_2$ does not abort in Step 1, 2, 4, 5, 6. Therefore, $\mathcal{S}_2$ simulates $\mathcal{M}$'s environment perfectly. Thus, if $\mathcal{M}$ wins with non-negligible probability in this case, the success probability of $\mathcal{S}_2$ is bounded by
\begin{center}
$Pr(\mathcal{S}_2)\geq \frac{2}{9(nk)^2}Pr(\mathcal{M})$.
\end{center}
$\hfill\blacksquare$
\end{proof}

\textbf{Analysis of Subcase C2.}
For this subcase, we show that $\mathcal{M}$ can break the unforgeability of the MQV functionality proved in theorem \ref{the:mqv} if it succeeds with non-negligible probability. We build a simulator $\mathcal{S}_3$ which simulates $\mathcal{M}$'s environment. $\mathcal{S}_3$ takes as input a challenge $X_0$, and creates an experiment which includes $n$ honest parties and the attacker $\mathcal{M}$. All parties compute their keys normally. $\mathcal{S}_3$ randomly selects two parties $\hat{A}$ and $\hat{B}$, and randomly selects two integers $i,j\in[1,...,k]$. The simulation for $\mathcal{M}$'s environment proceeds as follows:
\begin{enumerate}\setcounter{enumi}{-1}
\item $\mathcal{S}_3$ models $\mathsf{ephem}_{P}()$ for all parties following the description in Section \ref{sec:formaldes}. $\mathcal{M}$ sets the types for all long-term keys. If the type of $A$ is not $\mathsf{TPM\_ALG\_ECMQV}$ and the type of $B$ is $\mathsf{TPM\_ALG\_ECMQV}$, $\mathcal{S}_3$ aborts. $\mathcal{S}_3$ can create oracles modeling the two-phase key exchange functionalities for each party normally as it possesses the long-term private keys of all parties. 
\item Initiate$(sc,\hat{P_1}, \hat{P_2})$: $\hat{P_1}$ executes the Initiate() activation of the protocol. However, if the session being created is the $i$-th session at $\hat{A}$, $\mathcal{S}_3$ checks whether $sc=\mathsf{TPM\_ALG\_ECMQV}$ and $\hat{P_2} = \hat{B}$. If so, $\mathcal{S}_3$ sets the ephemeral public key to be $X_0$, else $\mathcal{S}_3$ aborts. If the session being created is the $j$-th session at $\hat{B}$, $\mathcal{S}_3$ checks whether $sc\neq\mathsf{TPM\_ALG\_ECMQV}$ and $\hat{P_2}$ is $\hat{A}$. If so, $\mathcal{S}_3$ calls $\mathsf{ephem}_{B}()$ to create an ephemeral key, denoted by $Y_0$, and sets the outgoing ephemeral key of this session to be $Y_0$, else $\mathcal{S}_3$ aborts.
\item Respond$(sc,\hat{P_1}, \hat{P_2}, Y)$: $\hat{P_1}$ executes the Respond() activation of the protocol. However, if the session being created is the $i$-th session at $\hat{A}$, $\mathcal{S}_3$ checks whether  $sc=\mathsf{TPM\_ALG\_ECMQV}$, $\hat{P_2} = \hat{B}$, and $Y=Y_0$. If so, $\mathcal{S}_3$ provides $\mathcal{M}$ with the value $X_0$, else $\mathcal{S}_3$ aborts. If the session being created is the $j$-th session at $\hat{B}$, $\mathcal{S}_3$ checks whether $sc\neq\mathsf{TPM\_ALG\_ECMQV}$, $\hat{P_2} = \hat{A}$, and $Y=X_0$. If so, $\mathcal{S}_3$ calls $\mathsf{ephem}_{B}()$ to create an ephemeral key, denoted by $Y_0$, and sets the outgoing ephemeral key of this session to be $Y_0$, else $\mathcal{S}_3$ aborts.
\item Complete$(sc,\hat{P_1}, \hat{P_2}, X, Y)$: $\hat{P_1}$ executes the Complete() activation of the protocol. However, if the session is the $i$-th session at $\hat{A}$, $\mathcal{S}_3$ completes the session without computing a session key.
\item SessionStateReveal($s$): $\mathcal{S}_3$ returns to $\mathcal{M}$ the session state of session $s$. However, if $s$ is the $i$-th session at $\hat{A}$, $\mathcal{S}_3$ aborts.
\item SessionKeyReveal($s$): $\mathcal{S}_3$ returns to $\mathcal{M}$ the session key of $s$. If $s$ is the $i$-th session at $\hat{A}$, $\mathcal{S}_3$ aborts.
\item Corruption($\hat{P}$): $\mathcal{S}_3$ gives $\mathcal{M}$ the handle of the long-term key $P$. If $\mathcal{M}$ tries to corrupt $\hat{A}$ or $\hat{B}$, $\mathcal{S}_3$ aborts.
\item $H_1()$, $H_2()$, $avf()$, and $avf'()$ are simulated as random oracles in the usual way.
\end{enumerate}

\begin{proof}
The probability that $\mathcal{M}$ sets the type of $A$ to be $\mathsf{TPM\_ALG\_ECMQV}$ and the type of $B$ not to be $\mathsf{TPM\_ALG\_ECMQV}$ and selects the $i$-th session of $\hat{A}$ and the $j$-th session of $\hat{B}$ as the test session and its message-matching session is at least $\frac{1}{3}\times\frac{2}{3}\times\frac{1}{(nk)^2}=\frac{2}{9(nk)^2}$. Suppose that this is indeed the case: the type of $A$ is $\mathsf{TPM\_ALG\_ECMQV}$ and the type of $B$ is not $\mathsf{TPM\_ALG\_ECMQV}$, so $\mathcal{S}_3$ does not abort in Step 0; $\mathcal{M}$ is not allowed to corrupt $\hat{A}$ and $\hat{B}$, make SessionStateReveal and SessionKeyReveal queries to the $i$-th session of $\hat{A}$, so $\mathcal{S}_3$ does not abort in Step 1, 2, 4, 5, 6. Therefore, $\mathcal{S}_2$ simulates $\mathcal{M}$'s environment perfectly except with negligible probability. By the assumption, $\mathcal{M}$ correctly guesses the tuple $(Z=(Y_0B^e)^{h(x_0+da)},\hat{A},\hat{B})$ of the test session where $d=avf(X_0)$ and $e=avf(Y_0)$. We now show that $(Y_0,Z)$ is a valid forgery against $\mathcal{O}^{\mathsf{MQV}}_{B}$ on input $(X_0,A)$ where $X_0$ is the challenge:
\begin{enumerate}
\item $(Y_0,Z)$ is a valid return result of $\mathcal{O}^{\mathsf{MQV}}_{B}$ as $Z=(Y_0B^e)^{h(x_0+da)}=(X_0A^d)^{h(y_0+eb)}$.
\item $\mathcal{O}^{\mathsf{MQV}}_{B}$ never returns the result $(Y_0,Z)$ on input $(X_0,A)$ under $\mathcal{S}_3$: $\mathcal{O}^{\mathsf{MQV}}_{B}$ has never been created by $\mathcal{S}_3$ as the type of $B$ is not $\mathsf{TPM\_ALG\_ECMQV}$ in this subcase.
\item Since $\mathcal{M}$ is not allowed to corrupt $\hat{A}$ and $\hat{B}$, $\mathcal{M}$ does not know $a$ and $b$. So $\mathcal{M}$ does not know the private key pair $(a,x_0)$. Thus, $\mathcal{M}$ is under the constraint described in theorem \ref{the:mqv}.
\end{enumerate}
Finally we get: $Pr$($\mathcal{M}$ succeeds in forging $\mathcal{O}^{\mathsf{MQV}}_{B}$ under $\mathcal{S}_3$) $\geq$$\frac{2}{9(nk)^2}Pr(\mathcal{M})$.
$\hfill\blacksquare$
\end{proof}

\textbf{Analysis of Subcases C3 and C4.}
For the two subcases, we show that $\mathcal{M}$ can break the unforgeability of the MQV functionality proved in theorem \ref{the:mqv} if it succeeds with non-negligible probability. We build a simulator $\mathcal{S}_4$ which simulates $\mathcal{M}$'s environment. $\mathcal{S}_4$ takes as input a challenge $X_0$, and creates an experiment which includes $n$ honest parties and the attacker $\mathcal{M}$. All parties compute their keys normally. $\mathcal{S}_4$ randomly selects two parties $\hat{A}$ and $\hat{B}$, and randomly selects one integer $i\in[1,...,k]$. The simulation for $\mathcal{M}$'s environment proceeds as follows:
\begin{enumerate}\setcounter{enumi}{-1}
\item $\mathcal{S}_4$ models $\mathsf{ephem}_{P}()$ for all parties following the description in Section \ref{sec:formaldes}. $\mathcal{M}$ sets the type for all long-term keys. If the type of $A$ is not $\mathsf{TPM\_ALG\_ECMQV}$, $\mathcal{S}_4$ aborts. $\mathcal{S}_4$ can create oracles modeling the two-phase key exchange functionalities for each party normally as it possesses the long-term private keys of all the parties. 
\item Initiate$(sc,\hat{P_1}, \hat{P_2})$: $\hat{P_1}$ executes the Initiate() activation of the protocol. However, if the session being created is the $i$-th session at $\hat{A}$, $\mathcal{S}_4$ checks whether $sc=\mathsf{TPM\_ALG\_ECMQV}$ and $\hat{P_2} = \hat{B}$. If so, $\mathcal{S}_4$ sets the ephemeral public key to be $X_0$, else $\mathcal{S}_4$ aborts.
\item Respond$(sc,\hat{P_1}, \hat{P_2}, Y)$: $\hat{P_1}$ executes the Respond() activation of the protocol. However, if the session being created is the $i$-th session at $\hat{A}$, $\mathcal{S}_4$ checks whether  $sc=\mathsf{TPM\_ALG\_ECMQV}$ and $\hat{P_2} = \hat{B}$. If so, $\mathcal{S}_4$ provides $\mathcal{M}$ with the value $X_0$, else $\mathcal{S}_4$ aborts.
\item Complete$(sc,\hat{P_1}, \hat{P_2}, X, Y)$: $\hat{P_1}$ executes the Complete() activation of the protocol. However, if the session is the $i$-th session at $\hat{A}$, $\mathcal{S}_4$ completes the session without computing a session key.
\item SessionStateReveal($s$): $\mathcal{S}_4$ returns to $\mathcal{M}$ the session state of session $s$. However, if $s$ is the $i$-th session at $\hat{A}$, $\mathcal{S}_4$ aborts.
\item SessionKeyReveal($s$): $\mathcal{S}_4$ returns to $\mathcal{M}$ the session key of $s$. If $s$ is the $i$-th session at $\hat{A}$, $\mathcal{S}_4$ aborts.
\item Corruption($\hat{P}$): $\mathcal{S}_4$ gives $\mathcal{M}$ the handle of the long-term key $P$. If $\mathcal{M}$ tries to corrupt $\hat{A}$ or $\hat{B}$, $\mathcal{S}_4$ aborts.
\item $H_1()$, $H_2()$, $avf()$, and $avf'()$ are simulated as random oracles in the usual way.
\end{enumerate}

\begin{proof}
The probability that $\mathcal{M}$ sets the type of $A$ to be $\mathsf{TPM\_ALG\_ECMQV}$ and selects the $i$-th session of $\hat{A}$ as the test session is at least $\frac{1}{3}\times\frac{1}{n^2k}=\frac{1}{3n^2k}$. Suppose that this is indeed the case: the type of $A$ is $\mathsf{TPM\_ALG\_ECMQV}$, so $\mathcal{S}_4$ does not abort in Step 0; $\mathcal{M}$ is not allowed to corrupt $\hat{A}$ and $\hat{B}$, make SessionStateReveal and SessionKeyReveal queries to the $i$-th session of $\hat{A}$, so $\mathcal{S}_4$ does not abort in Step 1, 2, 4, 5, 6. Therefore, $\mathcal{S}_4$ simulates $\mathcal{M}$'s environment perfectly. By the assumption, $\mathcal{M}$ correctly guesses the tuple $(Z=(Y_0B^e)^{h(x_0+da)},\hat{A},\hat{B})$ of the test session where $d=avf(X_0)$ and $e=avf(Y_0)$. We now show that $(Y_0,Z)$ is a valid forgery against $\mathcal{O}^{\mathsf{MQV}}_{B}$ on input $(X_0,A)$ where $X_0$ is the challenge:
\begin{enumerate}
\item $(Y_0,Z)$ is a valid return result of $\mathcal{O}^{\mathsf{MQV}}_{B}$ as $Z=(Y_0B^e)^{h(x_0+da)}=(X_0A^d)^{h(y_0+eb)}$.
\item We now show that $\mathcal{O}^{\mathsf{MQV}}_{B}$ never returns the result $(Y_0,Z)$ on input $(X_0,A)$ under $\mathcal{S}_4$:
    \begin{enumerate}
    \item If the type of $B$ is $\mathsf{TPM\_ALG\_ECDH}$ or $\mathsf{TPM\_ALG\_SM2}$, $\mathcal{O}^{\mathsf{MQV}}_{B}$ has never been created by $\mathcal{S}_4$.
    \item If the type of $B$ is $\mathsf{TPM\_ALG\_ECMQV}$, $\mathcal{S}_4$ must create $\mathcal{O}^{\mathsf{MQV}}_{B}$ in step 0. However, if $\mathcal{O}^{\mathsf{MQV}}_{B}$ ever returned the result $(Y_0,Z)$ for some $Z$ on input $(A,X_0)$, $\hat{B}$ must have a session identified by $(sc=\mathsf{TPM\_ALG\_ECMQV}, \hat{B},\hat{A},Y_0,X_0)$, which is exactly the matching session of the test session. This contradicts that the test session has no matching session in these two subcases.
    \end{enumerate}
\item Since $\mathcal{M}$ is not allowed to corrupt $\hat{A}$ and $\hat{B}$, $\mathcal{M}$ does not know $a$ and $b$. So $\mathcal{M}$ does not know the private key pair $(a,x_0)$. Thus, $\mathcal{M}$ is under the constraint described in theorem \ref{the:mqv}.
\end{enumerate}
Finally we get: $Pr$($\mathcal{M}$ succeeds in forging $\mathcal{O}^{\mathsf{MQV}}_{B}$ under $\mathcal{S}_4$) $\geq$$\frac{1}{3n^2k}Pr(\mathcal{M})$.
$\hfill\blacksquare$
\end{proof}

\subsubsection{Analysis of Case 3}
The analysis of Case 3 is similar to Case 2. It is easy to get a full proof by following the analysis in Section \ref{subsec:case2}, so we omit the analysis of Case 3.

\subsection{Infeasibility of Key-replication Attacks}
If $\mathcal{M}$ successfully launches a key-replication attack against the test session $s=(sc,\hat{A},\hat{B},X_0,Y_0)$, he succeeds in establishing a session $s'=(sc',\hat{A}',\hat{B}',X',Y')$, which is different than $s$ and $(sc,\hat{B},\hat{A},Y_0,X_0)$ (the matching session of $s$) but has the same key as the test session $s$. The $sc$ of $s'$ must fall under one of the following three cases. By demonstrating that key-replication attacks are impossible in any of the three cases, we prove that $\mathcal{M}$ cannot launch key-replication attacks.
\begin{enumerate}
\item $sc=\mathsf{TPM\_ALG\_ECDH}$.
\item $sc=\mathsf{TPM\_ALG\_ECMQV}$.
\item $sc=\mathsf{TPM\_ALG\_SM2}$.
\end{enumerate}

\subsubsection{Analysis of Case 1}\label{subsubsec:key-rep-c1}
In this case, the session key of the test session is the value of the random oracle $H_1()$ on $\sigma=(Z_1,Z_2,\hat{A},\hat{B},X_0,Y_0)$. As the session key of the MQV or SM2 key exchange protocol is the value of the random oracle $H_2()$ on some 3-tuple $(Z,\hat{A},\hat{B})$, the session $s'$ must belong to a party whose long-term key is the type of $\mathsf{TPM\_ALG\_ECDH}$. So the session identifier of $s'$ must be $(sc,\hat{A},\hat{B},X_0,Y_0)$ or $(sc,\hat{B},\hat{A},Y_0,X_0)$ where $sc=\mathsf{TPM\_ALG\_ECDH}$, i.e., $s'$ is the test session or its matching session, which contradicts that $s'$ is different from $s$ and the matching session of $s$.
\subsubsection{Analysis of Cases 2 and 3}
We show that a key-replication attack is impossible by showing that a successful attacker would contradict the GDH assumption or break the unforgeability of MQV functionality or SM2 key exchange functionality.

Since $s'$ has the same session key as the test session $s$, $s'$ must have the same $\sigma$ as the test session. In all the subcases of Case 2 in Section \ref{subsec:case2}, all the simulators ($\mathcal{S}_2$, $\mathcal{S}_3$, and $\mathcal{S}_4$) provide $\mathcal{M}$ with the session state of all exposed sessions. Therefore, $\mathcal{M}$ can obtain the 3-tuple of $s$ by exposing $s'$ (this is allowed in the security model as $s'$ is not the matching session of $s$). This means that $\mathcal{M}$ is able to launch forging attacks. However, we have shown that if $\mathcal{M}$ succeeds in a forging attack: $\mathcal{S}_2$ would succeed in solving the GDH problem, and under $\mathcal{S}_3$ and $\mathcal{S}_4$, there would exist an attacker breaking the unforgeability of the MQV functionality of $\mathsf{tpm.KE}$. By applying the above argument and replacing the unforgeability of the MQV functionality of $\mathsf{tpm.KE}$ with the unforgeability of the SM2 key exchange functionality of $\mathsf{tpm.KE}$, we directly get the analysis of Case 3.

\section{Further Security Properties of $\mathsf{tpm.KE}$} \label{sec:further-tpm.ke}
Besides the basic security property defined by modern security models, it is desirable for AKE protocols to achieve the following two security properties. The key-compromise impersonation (KCI) resistance property: the knowledge of a party's long-term private key does not enable the attacker to impersonate \emph{other uncorrupted parties} to the party. The Perfect Forward Secrecy (PFS) property: the expired session keys established before the compromise of the long-term key cannot be recovered.

From the UKS attacks against MQV and SM2 key exchange protocols (\ref{app:kaliski} and \ref{app:xu}) we know that if the attacker is allowed to obtain the plaintext of the long-term key when he corrupts the corresponding party, these protocols will be unable to satisfy the basic security property defined by modern AKE security models. So the current version of $\mathsf{tpm.KE}$ can only achieve the security property defined by the security model in case that the attacker cannot obtain the plaintext of long-term keys (even he corrupts the parties).
Since $\mathsf{tpm.KE}$ can only be proven secure in situations where the plaintext of long-term keys cannot be revealed to the attacker, which does not satisfy the definitions of KCI-resistance and wPFS properties, it cannot provide the KCI-resistance and wPFS properties. But $\mathsf{tpm.KE}$ can satisfy weak forms of the two properties: (1) \emph{constrained KCI-resistance}; that is, the control of a party's long-term key handle does not enable the attacker to impersonate \emph{other uncontrolled parties} to the party; and (2) the \emph{constrained PFS property}; that is, the expired session keys established before the attacker controls the handle of the long-term key cannot be recovered. To prove the above weak forms of the two properties, all needed is to note that the proof of $\mathsf{tpm.KE}$ in Section \ref{sec:analysis} still holds if we allow the attacker to corrupt $\hat{A}$ and $\hat{B}$ which are the related parties of the test session, i.e., all the simulators do not abort when $\hat{A}$ and $\hat{B}$ are corrupted. The proof remains valid since the abort operations are never used in the proof.

\section{Suggestions on Usage of the Current Version of $\mathsf{tpm.KE}$} \label{sec:sug-tpm.ke}
Although we formally prove that $\mathsf{tpm.KE}$ satisfies the basic security property defined by modern AKE models, this is achieved under the following conditions: first, all the long-term keys must be generated by the TPM, and the attacker cannot register arbitrary keys; second, the attacker cannot obtain the plaintext of the long-term key even he corrupts the party; third, the attacker cannot obtain the unhashed value of the Full UM protocol. These constraints require that engineers should deploy $\mathsf{tpm.KE}$ properly, otherwise $\mathsf{tpm.KE}$ would be unable to provide secure communications. In order to help engineers to use $\mathsf{tpm.KE}$ securely, we give the following suggestions:
\begin{enumerate}
\item For a network that plans to use $\mathsf{tpm.KE}$ to protect its communications, we suggest that all devices in the network should be equipped with the TPM and the CA should only issue certificates for keys that are generated by the TPM. This requires the CA to check the validity of the TPM, and this can be done by leveraging the Privacy CA protocol \cite{privacyCA} or the direct anonymous attestation (DAA) protocol \cite{DAA} if higher anonymity is required.
\item The network administrator should guarantee that all TPM chips are well protected against sophisticated physical attacks which can obtain the secrets inside the TPM, and the administrator should know that if one TPM chip is physically broken, the whole network may no longer be secure.
\item The software running on the host which derives the session key from the return results of the TPM should be well protected. For example, run the software in the secure execution environment provided by the Intel SGX \cite{SGX-model-1,SGX-model-2} or ARM TrustZone \cite{tzwp} technologies and delete the return results of the TPM (especially $Z_1$ of the Full UM protocol) immediately after the session key is derived.
\end{enumerate}

\section{Revision of $\mathsf{tpm.KE}$} \label{sec:tpm.ke.rev}
A real-world network may contain various kinds of devices, and it is common that some devices are protected by the TPM and the others are not. Moreover, there exist physical attacks that can access all keys inside the TPM \cite{tpm-attack}. So it is hard for real-world networks to satisfy the conditions required to ensure $\mathsf{tpm.KE}$'s security. Therefore, the current version of $\mathsf{tpm.KE}$ puts a significant limitation on its application on real-world networks.

In order to make $\mathsf{tpm.KE}$ more applicable to real-world networks, we suggest revising TPM 2.0 specifications as follows: \emph{perform the session key derivation in the TPM rather than on the host, i.e., perform $H_1()$ and $H_2()$ in the TPM}. The revision only adds a hash computation to the TPM, which is negligible compared to the elliptic curve scalar multiplication. This revision is inspired by the HMQV protocol, the first proven secure implicitly AKE protocol, which requires that its unhashed value $Z$ should be stored in the same secure environment as the long-term key and only the session key is output outside of the environment. We give concrete suggestions on how to revise TPM 2.0 specifications: the only change is to revise TPM2\_ZGen\_2Phase() command in the TPM 2.0 Command specification \cite{TPM2.0Commands}. The command is revised to derive the session key inside the TPM and return the session key, while the original command returns the unhashed values $Z_1$ and $Z_2$. Figure \ref{fig:revision} describes the changes to the input and response of the TPM2\_ZGen\_2Phase() command.
\begin{itemize}
\item \textbf{Changes of input}: add the identities of the owner and peer of the session.
\item \textbf{Changes of response}: not return the unhashed values but the session key.
\end{itemize}

\begin{figure*}[htbp]
\begin{center}
\includegraphics[width=6.5in]{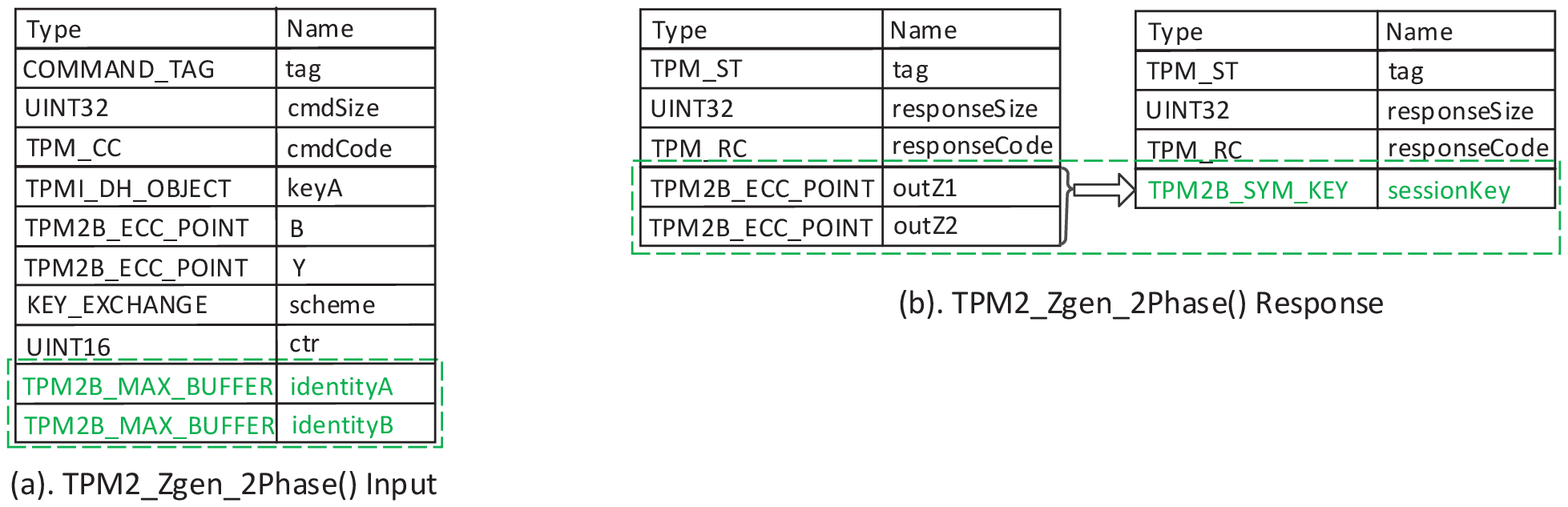}
\end{center}
\caption{\label{fig:revision}Revision of TPM2\_ZGen\_2Phase()}
\end{figure*}

\section{Preparation for the Security Analysis of the Revision} \label{sec:pre-tpm.ke.rev}
In this section, we do the preparation work for the security analysis of the revision of $\mathsf{tpm.KE}$ (denoted by $\mathsf{tpm.KE.rev}$), including the formal description of $\mathsf{tpm.KE.rev}$, the session state, and the attacker model.

\subsection{Formal Description of $\mathsf{tpm.KE.rev}$}
The first phase of $\mathsf{tpm.KE.rev}$ is the same as $\mathsf{tpm.KE}$, so we use the same $\mathsf{ephem()}_{A}$ to model the interface of the first phase of $\mathsf{tpm.KE.rev}$. As the TPM returns the session key in the second phase of $\mathsf{tpm.KE.rev}$, we modify the oracles $\mathcal{O}^{\mathsf{MQV}}_{A}$ and $\mathcal{O}^{\mathsf{SM2}}_{A}$ to return the session key. We get the formal description of $\mathsf{tpm.KE.rev}$ by using the modified oracles $\mathcal{O}^{\mathsf{MQV}}_{A}$ and $\mathcal{O}^{\mathsf{SM2}}_{A}$ to replace the original oracles of the formal description of $\mathsf{tpm.KE}$ in Section \ref{sec:formaldes}.

As our formal analysis of $\mathsf{tpm.KE.rev}$ contains parties that are not equipped with the TPM, we need to describe how the protocols are implemented in these parties. These parties also use the \emph{Initiate}, \emph{Respond} and \emph{Complete} activations to construct protocol sessions. The three activations for these parties differ from the activations for $\mathsf{tpm.KE.rev}$ in that they do not leverage the interfaces of the TPM, i.e., $\mathsf{ephem()}_{A}$, $\mathcal{O}^{\mathsf{EC}}_{A}$, $\mathcal{O}^{\mathsf{MQV}}_{A}$, and $\mathcal{O}^{\mathsf{SM2}}_{A}$, but run following the specifications of the protocols.

\subsection{Session State}

The session state includes the information returned by the TPM. As the ephemeral key and unhashed values (such as $Z_1$ and $Z_2$) are stored inside the TPM, we specify that the session state only includes the session key.

\subsection{Attacker Model for $\mathsf{tpm.KE.rev}$}
The attacker model for $\mathsf{tpm.KE.rev}$ is used to capture realistic attack capabilities in real-world networks, which are much stronger than the attack capabilities captured by the model for $\mathsf{tpm.KE}$. In particular, the model for $\mathsf{tpm.KE.rev}$ allows the attacker to register arbitrary keys to the CA and obtain the plaintext of a long-term key by corrupting the party. To capture the above stronger attack capabilities, we add the \emph{EstablishParty} query to the model for $\mathsf{tpm.KE}$ (Section \ref{sec:model}) and modify the \emph{SessionStateReveal} and \emph{Corruption} queries.

\begin{itemize}
\item \textbf{EstablishParty($\hat{P}$)}: this query is newly added to the model to allow the attacker to register a static public key on behalf of party $\hat{P}$.
\item \textbf{SessionStateReveal(s)}: $\mathcal{M}$ directly queries at session $s$ while still incomplete and learns the session state for $s$. The session state only includes the session key, while the session state in the model for $\mathsf{tpm.KE}$ includes the unhashed values.
\item \textbf{Corruption($\hat{P}$)}: this query allows $\mathcal{M}$ to learn the plaintext of the long-term private key of party $\hat{P}$, while this query in the model for $\mathsf{tpm.KE}$ only allows the attacker to obtain the black-box access of the long-term key.
\end{itemize}

\section{Security Analysis of $\mathsf{tpm.KE.rev}$}\label{sec:analysis-rev}
In order to make our analysis be close to real-world networks as much as possible, we do not make any assumption whether the parties in the network are equipped with the TPM, including the owner and the peer of the test session.

\begin{theorem}\label{the-main-rev}
Under the GDH assumption, with hash functions $H_1()$, $H_2()$, $avf()$, and $avf'()$ modeled as random oracles, $\mathsf{tpm.KE.rev}$ is secure in the unified model.
\end{theorem}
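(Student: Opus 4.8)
The plan is to reuse the two--lemma architecture of the proof of Theorem~\ref{the-main}. First I would state and discharge a consistency lemma --- if two parties complete matching sessions they output the same key --- which, exactly like Lemma~\ref{lemma:same}, is immediate from the protocol equations of Figure~\ref{fig:protocols} and holds verbatim for parties that carry no TPM, since they follow the same specifications. Then I would state a core--secrecy lemma (the analogue of Lemma~\ref{lemma:core}) and reuse the probability decomposition of Section~\ref{sec:analysis} unchanged: writing $\mathsf{COLL}$ for a collision among the random oracles $H_1,H_2$ and splitting the event that $\mathcal{M}$ recovers the test key according to whether $\mathcal{M}$ itself or some honest party queries $H_1/H_2$ on the test session's tuple $\sigma$, one obtains $\mathsf{Adv}^{\mathsf{tpm.KE.rev}}(\mathcal{M}) \le \tfrac12\big(\Pr[\text{forging}] + \Pr[\text{key-replication}] + \Pr[\mathsf{COLL}]\big)$ with $\Pr[\mathsf{COLL}] = \mathcal{O}(k^2/2^{\lambda})$, a \emph{forging attack} being ``$\mathcal{M}$ queries $H_1$ or $H_2$ on $\sigma$'' and a \emph{key-replication attack} being ``$\mathcal{M}$ completes a session $s'$, distinct from the test session and its matching session, carrying the same key (hence the same $\sigma$), and then exposes $s'$''.

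The substance is to bound the two attack probabilities under GDH, and the novelty relative to Section~\ref{sec:analysis} is the strictly stronger attacker of Section~\ref{sec:pre-tpm.ke.rev}: \emph{EstablishParty} lets $\mathcal{M}$ register public keys whose discrete logs the simulator does not know, \emph{Corruption} now returns plaintext long-term keys, \emph{SessionStateReveal} yields only the session key, and no party is assumed to carry a TPM. Two adjustments are needed. First, every branch must reduce \emph{directly} to GDH: the MQV and SM2 subcases can no longer be charged to Theorems~\ref{the:mqv} and~\ref{the:sm2}, whose hypothesis ``$(a,x)$ unknown to $\mathcal{M}$'' fails once $\mathcal{M}$ may corrupt the test session's owner, so the HMQV--style extraction has to be inlined. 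Second, because some simulated parties have adversarially chosen keys, a simulator that has planted its GDH instance $(U,V)$ in the owner key, the peer key, or the test session's outgoing ephemeral cannot in general evaluate honest parties' session keys in closed form; it instead answers $H_1/H_2$ lazily and uses the supplied DDH oracle to detect (and program) the query encoding the real key --- the device already used for $H_1$ in the Case~1 simulator of Section~\ref{sec:analysis} --- and it simulates a peer whose long-term key it does not know by the ephemeral--key trick $Y=g^{s}/B^{e}$, $e=avf(Y)$ (resp.\ the SM2 variant $Y=(g^{s}/B)^{e^{-1}}$), again exactly as in the $\mathcal{S}_1$ construction there.

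Concretely I would split the forging attack by the test session's scheme, and inside each scheme by the provenance of the incoming value $Y_0$: $(C1)$ from the matching session, $(C2)$ from a message--matching session of a different scheme, $(C3)$ from another completed session of the peer, $(C4)$ never a completed outgoing value of the peer. For $sc=\mathsf{TPM\_ALG\_ECDH}$ the target is $\sigma=(g^{ab},g^{x_0y_0},\hat A,\hat B,X_0,Y_0)$; guessing owner, peer and session index, the solver plants $(U,V)$ in $(A,B)$ when no matching session exists (unknown component $g^{ab}$) and in $(X_0,Y_0)$ when one does (unknown component $g^{x_0y_0}$), recognises a hit with the DDH oracle, and outputs the corresponding $\mathrm{CDH}$ value; \emph{EstablishParty} and corruptions of parties other than $\hat A,\hat B$ are harmless since their behaviour is still simulated honestly. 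For $sc=\mathsf{TPM\_ALG\_ECMQV}$ the target is $H_2(Z,\hat A,\hat B)$ with $Z=(Y_0B^{e})^{h(x_0+da)}$, $d=avf(X_0)$, $e=avf(Y_0)$; I would plant the GDH instance in $X_0$ and in $B$ (so $b,x_0$ stay hidden while $a$ may be disclosed), simulate $\hat B$'s sessions by the trick above, model $avf$ as a random oracle, and on a correct $\sigma$ rewind $\mathcal{M}$ on the $avf(Y_0)$ answer (Forking Lemma~\cite{fork96}) to cancel the unknown exponents and read off $\mathrm{CDH}(X_0,B)$, exactly as in the proof of Theorem~\ref{the:mqv}; the $\mathsf{TPM\_ALG\_SM2}$ case is identical after interchanging the long-term and ephemeral exponents in $Z$. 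Finally the key--replication attack collapses to forging as in Section~\ref{sec:analysis}: equal $\sigma$ forces, for $\mathsf{TPM\_ALG\_ECDH}$, that $s'$ is the test or matching session (contradiction), and for MQV/SM2 lets $\mathcal{M}$ read the test key's $Z$ out of $s'$ via \emph{SessionStateReveal}, a forging attack already excluded; collecting the guessing factors ($O(n^2k)$ or $O((nk)^2)$ per branch) gives a concrete bound of the same shape as Theorem~\ref{the-main}, and the same reductions survive enlarging the clean--session policy to allow corrupting the peer (KCI) or both endpoints after expiry (weak PFS), which Section~\ref{sec:tpm.ke.rev-further} exploits.

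I expect the main obstacle to be the inlined HMQV/SM2 extraction under this stronger attacker: one must check that the forking argument of Theorem~\ref{the:mqv} still goes through when $\mathcal{M}$ may corrupt parties other than $\hat A,\hat B$ and may have injected arbitrary keys via \emph{EstablishParty}, that planting the challenge simultaneously in $X_0$ and $B$ still leaves every other session of $\hat A$ and every $\mathsf{ephem}$/oracle call answerable consistently, and that the cross--scheme subcase $C2$ causes no trouble because an ECDH target is a six--tuple whereas an MQV and an SM2 target are three--tuples whose $Z$--components are computed by different formulas, so except with negligible probability no single $H_1/H_2$ query serves two schemes at once. Everything else is bookkeeping parallel to Section~\ref{sec:analysis}.
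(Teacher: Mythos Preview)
Your proposal is correct and follows essentially the same approach as the paper: the same two-lemma architecture, the same forging/key-replication dichotomy, GDH planting in $(A,B)$ for \textsf{ECDH} and in $(X_0,B)$ with a Forking-Lemma rewind on $avf(Y_0)$ for MQV/SM2, and the same $H_{spec}$-with-DDH programming and ephemeral-key tricks for simulating the peer whose long-term key is the challenge. The paper is slightly leaner in two places: it drops your $C1$--$C4$ subcase split for $\mathsf{tpm.KE.rev}$ (a single simulator $\mathcal{S}_6$ per scheme already covers every provenance of $Y_0$ once the session state contains only the key), and for key-replication it builds a dedicated simulator $\mathcal{S}_7$ that uses the ephemeral trick to compute every $Z$-value of $\hat B$ \emph{internally} rather than having $\mathcal{M}$ ``read $Z$ out of $s'$ via \emph{SessionStateReveal}'' --- in the $\mathsf{tpm.KE.rev}$ model that query returns only the session key, so the $Z$ must come from the simulator, not from the game.
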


The proof of the above theorem follows from the definition of secure key exchange protocols and the following two lemmas.

\begin{lemma}\label{lemma:same-rev}
If two parties $\hat{A}$ and $\hat{B}$ complete matching sessions, their session keys are the same.
\end{lemma}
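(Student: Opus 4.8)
The plan is to proceed exactly as in the proof of Lemma \ref{lemma:same}, namely by a direct verification against the protocol specifications in Figure \ref{fig:protocols}, but now paying attention to the extra subtlety that the two parties completing the matching sessions need not both be equipped with a TPM. Let $\hat{A}$ complete the session $(sc,\hat{A},\hat{B},X,Y)$ and $\hat{B}$ complete the matching session $(sc,\hat{B},\hat{A},Y,X)$. By the definition of matching, the two sessions share the same scheme $sc$, the same pair of identities, and the outgoing/incoming DH values are swapped; moreover, since a matching session exists, the ephemeral value $X$ (resp. $Y$) was genuinely generated by $\hat{A}$ (resp. $\hat{B}$) — either via $\mathsf{ephem}_A()$/$\mathsf{ephem}_B()$ if the party has a TPM, or honestly following the protocol specification if it does not. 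First I would fix the scheme and split into the three cases $\mathsf{TPM\_ALG\_ECDH}$, $\mathsf{TPM\_ALG\_ECMQV}$, $\mathsf{TPM\_ALG\_SM2}$.

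For $\mathsf{TPM\_ALG\_ECDH}$ (Full UM), $\hat{A}$ obtains the session key $H_1(Z_1,Z_2,\hat{A},\hat{B},X,Y)$ with $Z_1=g^{ab}$ and $Z_2=g^{xy}$ (either computed inside its TPM via $\mathcal{O}^{\mathsf{EC}}_A$, or computed directly); $\hat{B}$ obtains $H_1(Z_1',Z_2',\hat{A},\hat{B},X,Y)$ with $Z_1'=g^{ab}$ and $Z_2'=g^{xy}$. Since $Z_1=Z_1'$, $Z_2=Z_2'$, and both parties use the same ordering of identities and DH values (the protocol fixes the initiator/responder ordering, not the local ``owner/peer'' view), the two $H_1$ inputs coincide, hence the keys are equal. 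For $\mathsf{TPM\_ALG\_ECMQV}$, $\hat{A}$'s key is $H_2(Z_A,\hat{A},\hat{B})$ with $Z_A=(YB^e)^{h(x+da)}$ and $\hat{B}$'s key is $H_2(Z_B,\hat{A},\hat{B})$ with $Z_B=(XA^d)^{h(y+eb)}$, where $d=avf(X)$ and $e=avf(Y)$ are the same for both because $X,Y$ and the functions $avf()$ are fixed; the standard MQV identity gives $Z_A=Z_B=g^{h(x+da)(y+eb)}$, so the $H_2$ inputs agree. The case $\mathsf{TPM\_ALG\_SM2}$ is identical with the SM2 combination formulas $Z_A=(BY^e)^{h(a+dx)}$, $Z_B=(AX^d)^{h(b+ey)}$ and $avf'()$ in place of $avf()$.

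The only genuine point to check — and the one I would spell out carefully — is that mixing TPM and non-TPM implementations does not break this, i.e., that a TPM-equipped party and a non-TPM party running the same scheme compute the \emph{same} unhashed value and feed it to $H_1$/$H_2$ with the same argument ordering. This holds because the formal description of $\mathsf{tpm.KE.rev}$ (Section \ref{sec:pre-tpm.ke.rev}) dictates that the oracles $\mathcal{O}^{\mathsf{EC}}_A$, $\mathcal{O}^{\mathsf{MQV}}_A$, $\mathcal{O}^{\mathsf{SM2}}_A$ compute precisely the quantities prescribed by Figure \ref{fig:protocols}, and non-TPM parties run the same specifications verbatim; in both cases the session-key hash is evaluated on the unhashed value together with the identities in the canonical initiator-then-responder order. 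Hence no real obstacle arises — the lemma is a routine consequence of the protocol definitions — and the argument is in fact slightly shorter than for Lemma \ref{lemma:same} since here we need not reason about what is or is not stored inside the TPM.
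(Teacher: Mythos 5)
Your proposal is correct and follows essentially the same route as the paper: the paper simply states that the proof is the same as that of Lemma \ref{lemma:same}, which in turn ``follows immediately from the definition of matching sessions'' by direct verification against the protocol specifications in Figure \ref{fig:protocols}. You merely spell out the three-case verification (including the MQV/SM2 identities and the mixed TPM/non-TPM situation) that the paper leaves implicit, which is a harmless elaboration rather than a different argument.
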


\begin{lemma}\label{lemma:core-rev}
Under the GDH assumption, there is no feasible attacker that succeeds in distinguishing the session key of an unexposed session with non-negligible probability.
\end{lemma}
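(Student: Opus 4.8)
The plan is to mirror the proof of Lemma~\ref{lemma:core}, adapted to the stronger attacker model of $\mathsf{tpm.KE.rev}$ (parties may lack a TPM, long-term keys may be corrupted, and arbitrary keys may be registered) and to the fact that the session state now contains only the session key. First I would argue, exactly as in the proof of Lemma~\ref{lemma:core}, that except for the negligible probability that a random oracle collides, any attacker $\mathcal{M}$ with non-negligible advantage must either (i) itself query $H_1$ or $H_2$ on the precise tuple $\sigma$ underlying the test session's key (a \emph{forging attack}), or (ii) force the establishment of a session $s'$, distinct from the test session and its matching session, that derives the same key, and then reveal $s'$ (a \emph{key-replication attack}). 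So it suffices to bound the probability of each.

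For the forging attack I would do a case analysis on the scheme $sc$ of the test session. In the $\mathsf{TPM\_ALG\_ECDH}$ (Full~UM) case the argument is the GDH reduction of Case~1 of Section~\ref{sec:analysis}: embed the GDH challenge $(U,V)$ as the two long-term keys of the owner $\hat{A}$ and peer $\hat{B}$ of the test session (sound because a clean Full~UM test session forces both keys to remain uncorrupted), and when $\mathcal{M}$ queries $H_1$ on $(Z_1,Z_2,\hat{A},\hat{B},X,Y)$ with $\mathit{DDH}(U,V,Z_1)=1$, output $\mathrm{CDH}(U,V)=Z_1$. In the $\mathsf{TPM\_ALG\_ECMQV}$ and $\mathsf{TPM\_ALG\_SM2}$ cases I would again split according to how the incoming value $Y_0$ was produced (honest matching session, honest message-matching session, some other honest session of $\hat{B}$, or never honestly produced), as in Section~\ref{subsec:case2}. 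The key simulation device, available because $avf$ and $avf'$ are random oracles, is the HMQV-style trick: to answer a second-phase oracle call that must involve an embedded challenge point, pick the combined exponent $s=y+eb \bmod q$ (resp.\ the SM2 analogue) first, set $Y=g^{s}/B^{e}$, program $avf(Y)=e$, and return the hashed value of $(XA^{d})^{hs}$, which needs no secret discrete log. When $\mathcal{M}$ finally queries $H_2$ on the test session's tuple $(Z,\hat{A},\hat{B})$ we have $Z^{1/h}=g^{(x_0+da)(y_0+eb)}$ with $d=avf(X_0)$, $e=avf(Y_0)$, so with $X_0=U$, $Y_0=V$ — and, in the no-matching-session subcases, one of $U,V$ embedded instead into the then-necessarily-uncorrupted peer's long-term key — we recover $\mathrm{CDH}(U,V)$ from $Z$ by dividing out the quantities the simulator knows, recognising the correct query with the DDH oracle. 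Each embedding is chosen so that the simulator aborts only with probability bounded away from $1$ by a polynomial factor in $n$ and $k$, giving the claimed polynomial-loss reduction to GDH.

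For the key-replication attack I would exploit precisely the benefit of the revision. Since the session state of $\mathsf{tpm.KE.rev}$ contains only the session key and never an unhashed value, a revealed non-matching session $s'$ gives $\mathcal{M}$ nothing beyond the key of $s'$. If $s'$ has the same key as the test session then, absent collisions, $s'$ has the same underlying tuple $\sigma$; when $sc=\mathsf{TPM\_ALG\_ECDH}$ this forces $s'$ to be the test session or its matching session, a contradiction, and when $sc$ is $\mathsf{TPM\_ALG\_ECMQV}$ or $\mathsf{TPM\_ALG\_SM2}$, producing such an $s'$ and reading off $\sigma$ is itself a forging attack, already handled above. In $\mathsf{tpm.KE}$ this was exactly the dangerous unknown-key-share vector, because there revealing $s'$ also leaked $Z$; moving the hash into the TPM closes it, which is why the GDH assumption now suffices on its own (no separate appeal to the unforgeability theorems of Section~\ref{sec:unforge}).

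The step I expect to be the main obstacle is the $\mathsf{TPM\_ALG\_ECMQV}$/$\mathsf{TPM\_ALG\_SM2}$ forging case when $Y_0$ was never honestly produced by $\hat{B}$, combined with $\hat{B}$ possibly being a registered or corrupted party without a TPM. The reduction must decide where to embed the second GDH challenge point so that (a) the simulation answers every query $\mathcal{M}$ is permitted to make — including corruptions and activations under registered keys — without aborting, and (b) the cleanness constraints of the test session, which also underpin the KCI-resistance and weak-PFS claims taken up in Section~\ref{sec:tpm.ke.rev-further}, are respected. Getting this case bookkeeping right, and checking that the $avf$/$avf'$ programming stays consistent across all of $\mathcal{M}$'s polynomially many oracle and random-oracle queries, is the delicate part; the remainder is routine once the embeddings are fixed.
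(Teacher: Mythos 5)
Your overall skeleton is the same as the paper's: reduce a distinguishing advantage to either a \emph{forging attack} or a \emph{key-replication attack}, handle the Full~UM case by embedding the GDH challenge as the two long-term keys and catching the decisive $H_1$ query with the DDH oracle, and dispose of key replication by observing that in $\mathsf{tpm.KE.rev}$ a revealed non-matching session with the same key yields the test session's tuple $\sigma$ and hence a forging attack (the paper formalizes this via a simulator $\mathcal{S}_7$ that explicitly hands the $3$-tuple to the adversary). Your observation that the unforgeability theorems of Section~\ref{sec:unforge} are no longer needed and that everything collapses onto GDH also matches Theorem~\ref{the-main-rev}.

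The genuine gap is in the MQV/SM2 forging case, precisely where you flag ``the main obstacle'' but do not resolve it. In the subcases where $Y_0$ is not honestly produced in a matching session, your plan is to embed one challenge point into the peer's long-term key $B$ and then ``recover $\mathrm{CDH}(U,V)$ from $Z$ by dividing out the quantities the simulator knows.'' That division does not go through: with $Z^{1/h}=g^{(x_0+da)(y_0+eb)}$, the cross terms $g^{x_0y_0}$ and $g^{day_0}$ involve the discrete log $y_0$ of an adversarially chosen $Y_0$, which the simulator does not know, so a single successful run gives only $T=Z^{1/h}/(B^{a x_0}\cdots)$-type quantities still contaminated by $y_0$. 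The paper closes this by giving one unified reduction $\mathcal{S}_6$ on challenge $(X_0,B)$ (ephemeral key of the test session and long-term key of the peer) and invoking the Forking Lemma: rerun $\mathcal{M}$ with the same coins but a reprogrammed $avf(Y_0)=e'\neq e$, obtain a second value $T'$, and combine $T$ and $T'$ to cancel the unknown $y_0$ and extract $\mathrm{CDH}(X_0,B)$. This rewinding step is the missing idea in your proposal; it is also what lets the paper avoid your four-way subcase split on the provenance of $Y_0$ entirely, at the cost of the usual Forking-Lemma constant in the reduction. Your separate embedding $X_0=U$, $Y_0=V$ for the matching-session subcase does work without rewinding (it is the $\mathcal{S}_2$ computation from Section~\ref{subsec:case2}), so the defect is confined to the no-matching-session subcases.
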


The proof of Lemma \ref{lemma:same-rev} is the same as Lemma \ref{lemma:same}, and we focus on the proof of Lemma \ref{lemma:core-rev}. Let $\mathcal{M}$ be any attacker against $\mathsf{tpm.KE.rev}$. $\mathcal{M}$ can launch the forging attack or key-replication attack to distinguish the session key of the test session from a random value. We will show that if either of the attacks succeed with non-negligible probability, there exists a solver succeeding in solving the GDH problem with non-negligible probability.

\subsection{Infeasibility of Forging Attacks}
Let $(sc,\hat{A},\hat{B},X_0,Y_0)$ be the test session for whose tuple $\mathcal{M}$ outputs a correct guess. $sc$ can fall under one of the following three cases:
\begin{enumerate}
\item $sc=\mathsf{TPM\_ALG\_ECDH}$.
\item $sc=\mathsf{TPM\_ALG\_ECMQV}$.
\item $sc=\mathsf{TPM\_ALG\_SM2}$.
\end{enumerate}

As $\mathcal{M}$ succeeds with non-negligible probability in the forging attack, there is at least one of the above three cases that occurs with non-negligible probability. We assume that $\mathcal{M}$ operates in an environment that involves at most $n$ parties and each party participates in at most $k$ sessions. We analyze the three cases separately in the following.

\subsubsection{Analysis of Case 1}
For this case, we build a GDH solver $\mathcal{S}_5$. It takes as input a pair $(A,B)$, creates an experiment which includes $n$ honest parties and the attacker $\mathcal{M}$, and is given access to a DDH oracle $DDH$. $\mathcal{S}_5$ randomly decides whether each party is protected by the TPM or not, randomly selects two parties $\hat{A}$ and $\hat{B}$ from the honest parties and sets their public keys to be $A$ and $B$ respectively. All the other parties compute their keys normally. Furthermore, $\mathcal{S}_5$ randomly selects an integer $i\in[1,...,k]$. The simulation for $\mathcal{M}$'s environment proceeds as follows.
\begin{enumerate}\setcounter{enumi}{-1}
\item $\mathcal{S}_5$ models $\mathsf{ephem}_{P}()$ following the description in Section \ref{sec:formaldes} for all parties protected by the TPM. $\mathcal{M}$ sets the types of all long-term keys. If the type of $A$ is not $\mathsf{TPM\_ALG\_ECDH}$, $\mathcal{S}_5$ aborts. $\mathcal{S}_5$ creates oracles modeling the two-phase key exchange functionalities for parties protected by the TPM normally except parties $\hat{A}$ and $\hat{B}$. The procedures of $H_1()$ and $H_2()$ are described below. $avf()$ and $avf'()$ are simulated as random oracles in the usual way.
\item Initiate$(sc,\hat{P_1}, \hat{P_2})$: $\hat{P_1}$ executes the Initiate() activation of the protocol. However, if the session being created is the $i$-th session at $\hat{A}$, $\mathcal{S}_5$ checks whether $sc=\mathsf{TPM\_ALG\_ECDH}$ and $\hat{P_2} = \hat{B}$. If not, $\mathcal{S}_5$ aborts.
\item Respond$(sc,\hat{P_1}, \hat{P_2}, Y)$: With the exception of $\hat{A}$ and $\hat{B}$ (whose behaviors we explain below), $\hat{P_1}$ executes the Respond() activation of the protocol. However, if the session being created is the $i$-th session at $\hat{A}$, $\mathcal{S}_5$ checks whether  $sc=\mathsf{TPM\_ALG\_ECDH}$ and $\hat{P_2} = \hat{B}$. If not, $\mathcal{S}_5$ aborts.
\item Complete$(sc,\hat{P_1}, \hat{P_2}, X, Y)$:  With the exception of $\hat{A}$ and $\hat{B}$ (whose behaviors we explain below), $\hat{P_1}$ executes the Complete() activation of the protocol. However, if the session is the $i$-th session at $\hat{A}$, $\mathcal{S}_5$ completes the session without computing a session key.
\item Establish$(\hat{P}, P)$: $\mathcal{S}_5$ creates the party $\hat{P}$ whose public key is $P$ and marks $\hat{P}$ as corrupted.
\item If $\hat{A}$ is protected by the TPM, with the input $(sc, keyA, ctr_x, P, Y)$, $\mathcal{S}_5$ creates the oracle $\mathcal{O}^{\mathsf{EC}}_{A}$ as follows: compute the session key $k=H_{spec}(\hat{A},\hat{P},X,Y)$. $H_{spec}()$ is simulated as a random oracle. \label{step:modelECA-TPM}
\item Respond$(sc,\hat{A}, \hat{P}, Y)$:
    \begin{enumerate}
    \item If $\hat{A}$ is protected by the TPM, $\mathcal{S}_5$ simulates this session activation with the help of $\mathsf{ephem}_{A}()$ and $\mathcal{O}^{\mathsf{EC}}_{A}$.
    \item If $\hat{A}$ is not protected by the TPM, $\mathcal{S}_5$ generates an ephemeral key pair $(x, X=g^x)$, outputs $X$, and computes the session key $k=H_{spec}(\hat{A},\hat{B},X,Y)$.
    \end{enumerate}
\item Complete$(sc,\hat{A}, \hat{P}, X, Y)$:
    \begin{enumerate}
    \item If $\hat{A}$ is protected by the TPM, $\mathcal{S}_5$ simulates this session activation with the help of $\mathsf{ephem}_{A}()$ and $\mathcal{O}^{\mathsf{EC}}_{A}$.
    \item If $\hat{A}$ is not protected by the TPM, $\mathcal{S}_5$ computes the session key $k=H_{spec}(\hat{A},\hat{P},X,Y)$.
    \end{enumerate}

\item If $\hat{B}$ is protected by the TPM, with the input $(sc, keyB, ctr_y, P, X)$, $\mathcal{S}_5$ creates the oracle modeling the two-phase key exchange functionality for party $\hat{B}$: recover the ephemeral key pair $(y,Y=g^y)$ by $ctr_y$, and compute the session key $k=H_{spec}(\hat{B},\hat{P},Y,X)$. \label{step:oracleB-c1}
\item Respond$(sc,\hat{B}, \hat{P}, X)$:
    \begin{enumerate}
    \item If $\hat{B}$ is protected by the TPM, $\mathcal{S}_5$ simulates this session activation with the help of $\mathsf{ephem}_{B}()$ and the oracle for $\hat{B}$ (Step \ref{step:oracleB-c1}).
    \item If $\hat{B}$ is not protected by the TPM, $\mathcal{S}_5$ simulates this session activation as follows: generate an ephemeral key pair $(y, Y=g^y)$, output $Y$, and compute the session key $k=H_{spec}(\hat{B},\hat{P},Y,X)$.
    \end{enumerate}
\item Complete$(sc,\hat{B}, \hat{P}, Y, X)$:
    \begin{enumerate}
    \item If $\hat{B}$ is protected by the TPM, $\mathcal{S}_5$ simulates this session activation with the help of $\mathsf{ephem}_{B}()$ and the oracle for $\hat{B}$ (Step \ref{step:oracleB-c1}).
    \item If $\hat{B}$ is not protected by the TPM, $\mathcal{S}_5$ sets the session key to be $k=H_{spec}(\hat{B},\hat{P},Y,X)$.
    \end{enumerate}

\item SessionStateReveal($s$): $\mathcal{S}_5$ returns to $\mathcal{M}$ the session state of session $s$. However, if $s$ is the $i$-th session at $\hat{A}$, $\mathcal{S}_5$ aborts.
\item SessionKeyReveal($s$): $\mathcal{S}_5$ returns to $\mathcal{M}$ the session key of $s$. If $s$ is the $i$-th session at $\hat{A}$, $\mathcal{S}_5$ aborts.
\item Corruption($\hat{P}$): $\mathcal{S}_5$ gives $\mathcal{M}$ the long-term key pair of $\hat{P}$. If $\mathcal{M}$ tries to corrupt $\hat{A}$ or $\hat{B}$, $\mathcal{S}_5$ aborts.
\item $H_1(\sigma)$ function for some $\sigma=(Z_1,Z_2,\hat{P_1},\hat{P_2},X,Y)$ proceeds as follows:
    \begin{enumerate}
    \item If $\hat{P_1}=\hat{A}$, $\hat{P_2}=\hat{B}$, and $DDH(A,B,Z_1)=1$, $\mathcal{S}_5$ aborts $\mathcal{M}$ and succeeds by outputting CDH$(A,B)=Z_1$.
    \item If the value of the function on input $\sigma$ has been defined previously, return it.
    \item If the value of $H_{spec}()$ on input $(\hat{P_1},\hat{P_2},X,Y)$ has been defined previously, return it.
    \item Pick a random key $k\in_R\{0,1\}^{\lambda}$, and define $H_1(\sigma)=k$.
    \end{enumerate}
\item $H_2(\sigma)$ function for some $\sigma=(Z,\hat{P_1},\hat{P_2})$ proceeds as follows:
    \begin{enumerate}
    \item If the value of the function on input $\sigma$ has been defined previously, return it.
    \item If not defined, go over all the previous calls to $H_{spec}()$ and for each previous call of the form $H_{spec}(\hat{P_1},\hat{P_2},X,Y)=v$ proceed as follows according to the type of $P_1$.
        \begin{enumerate}
        \item Case $\mathsf{TPM\_ALG\_ECMQV}$: check if $DDH(XP_1^d,YP_2^e,Z^{1/h})=1$ where $d=avf(X)$ and $e=avf(Y)$; if the condition holds, return $v$.
        \item Case $\mathsf{TPM\_ALG\_SM2}$: check if $DDH(P_1X^d,P_2Y^e,Z^{1/h})=1$ where $d=avf'(X)$ and $e=avf'(Y)$; if the condition holds, return $v$.
        \end{enumerate}
    \item If no previous calls of that form are found, pick a random key $w$ and define $H_2(Z,\hat{P_1},\hat{P_2})=w$.
    \end{enumerate}
\item $H_{spec}(\hat{P_1},\hat{P_2},X,Y)$ function proceeds as follows:
    \begin{enumerate}
    \item If the value of the function on that input has been defined previously, return it.
    \item If not defined, proceed as follows according to the type of $P_1$.
        \begin{enumerate}
        \item Case $\mathsf{TPM\_ALG\_ECDH}$: go over all the previous calls to $H_1()$ and for each previous call of the form $H_1(Z_1,Z_2,\hat{P_1},\hat{P_2},X,Y)=v$ check if $DDH(P_1,P_2,Z_1)=1$; if the condition holds, return $v$.
        \item Case $\mathsf{TPM\_ALG\_ECMQV}$: go over all the previous calls to $H_2()$ and for each previous call of the form $H_2(Z,\hat{P_1},\hat{P_2})=v$ check if $DDH(XP_1^d,YP_2^e,Z^{1/h})=1$ where $d=avf(X)$ and $e=avf(Y)$; if the condition holds, return $v$.
        \item Case $\mathsf{TPM\_ALG\_SM2}$: go over all the previous calls to $H_2()$ and for each previous call of the form $H_2(Z,\hat{P_1},\hat{P_2})=v$ check if $DDH(P_1X^d,P_2Y^e,Z^{1/h})=1$ where $d=avf'(X)$ and $e=avf'(Y)$; if the condition holds, return $v$.
        \end{enumerate}
    \item If not found, pick a random key $w$, define $H_{spec}(\hat{P_1},\hat{P_2},X,Y)=w$.
    \end{enumerate}
\end{enumerate}

\begin{proof}
The probability that $\mathcal{M}$ sets the type of $A$ to be $\mathsf{TPM\_ALG\_ECDH}$ and selects the $i$-th session at $\hat{A}$ and the peer of the test session is party $\hat{B}$ is at least $\frac{1}{3n^2k}$. Suppose that this indeed the case: the type of $A$ is $\mathsf{TPM\_ALG\_ECDH}$, so $\mathcal{S}_5$ does not abort in Step 0; $\mathcal{M}$ is not allowed to corrupt $\hat{A}$ and $\hat{B}$, make SessionStateReveal and SessionKeyReveal queries to the $i$-th session at $\hat{A}$, so $\mathcal{S}_5$ does not abort in Step 1, 2, 11, 12, 13. Therefore, $\mathcal{S}_5$ simulates $\mathcal{M}$'s environment perfectly. Thus, if $\mathcal{M}$ wins with non-negligible probability in this case, the success probability of $\mathcal{S}_5$ is bounded by
\begin{center}
$Pr(\mathcal{S}_5)\geq \frac{1}{3n^2k}Pr(\mathcal{M})$.
\end{center}
$\hfill\blacksquare$
\end{proof}

\subsubsection{Analysis of Case 2}\label{subsec:case2-rev}
We build a GDH solver $\mathcal{S}_6$ for this case. $\mathcal{S}_6$ takes as input a pair $(X_0,B)$, creates an experiment which includes $n$ honest parties and the attacker $\mathcal{M}$, and is given access to a DDH oracle $DDH$. $\mathcal{S}_6$ randomly decides whether each party is protected by the TPM or not, randomly selects a party and sets its public key to be $B$. All the other parties compute their keys normally. Furthermore, $\mathcal{S}_6$ randomly selects an integer $i\in[1,...,k]$. The simulation for $\mathcal{M}$'s environment proceeds as follows.
\begin{enumerate}\setcounter{enumi}{-1}
\item $\mathcal{S}_6$ models $\mathsf{ephem}_{P}()$ following the description in Section \ref{sec:formaldes} for all parties protected by the TPM. $\mathcal{M}$ sets the types of all long-term keys. If the type of $B$ is not $\mathsf{TPM\_ALG\_ECMQV}$, $\mathcal{S}_6$ aborts. $\mathcal{S}_6$ creates oracles modeling the two-phase key exchange functionalities for parties protected by the TPM normally except $\hat{B}$. The procedures of $H_1()$ and $H_2()$ are described below. $avf()$ and $avf'()$ are simulated as random oracles in the usual way.
\item Initiate$(sc,\hat{P_1}, \hat{P_2})$: $\hat{P_1}$ executes the Initiate() activation of the protocol. However, if the session being created is the $i$-th session at $\hat{A}$, $\mathcal{S}_6$ checks whether $sc=\mathsf{TPM\_ALG\_ECMQV}$ and $\hat{P_2} = \hat{B}$. If so, $\mathcal{S}_6$ sets the ephemeral public key to be $X_0$, else $\mathcal{S}_6$ aborts.
\item Respond$(sc,\hat{P_1}, \hat{P_2}, Y)$: With the exception of $\hat{B}$ (whose behaviors we explain below), $\hat{P_1}$ executes the Respond() activation of the protocol. However, if the session being created is the $i$-th session at $\hat{A}$, $\mathcal{S}_6$ checks whether $sc=\mathsf{TPM\_ALG\_ECMQV}$ and $\hat{P_2} = \hat{B}$. If so, $\mathcal{S}_6$ sets the ephemeral public key to be $X_0$ and does not compute the session key, else $\mathcal{S}_6$ aborts.
\item Complete$(sc,\hat{P_1}, \hat{P_2}, X, Y)$: With the exception of $\hat{B}$, $\hat{P_1}$ executes the Complete() activation of the protocol. However, if the session is the $i$-th session at $\hat{A}$, $\mathcal{S}_6$ completes the session without computing a session key.
\item Establish$(\hat{P},P)$: $\mathcal{S}_6$ creates the party $\hat{P}$ whose public key is $P$, and marks $\hat{P}$ as corrupted.

\item If $\hat{B}$ is protected by the TPM, with the input $(sc, keyB, ctr_y, P, X)$, $\mathcal{S}_6$ creates the oracle modeling the two-phase key exchange functionality for party $\hat{B}$: recover the ephemeral key pair $(y,Y=g^y)$ by $ctr_y$, and compute the session key $k=H_{spec}(\hat{B},\hat{P},Y,X)$. \label{step:oracleB-c2}
\item Respond$(sc,\hat{B}, \hat{P}, X)$:
    \begin{enumerate}
    \item If $\hat{B}$ is protected by the TPM, $\mathcal{S}_6$ simulates this session activation with the help of $\mathsf{ephem}_{B}()$ and the oracle for $\hat{B}$ (Step \ref{step:oracleB-c2}).
    \item If $\hat{B}$ is not protected by the TPM, $\mathcal{S}_6$ simulates this session activation as follows: generate an ephemeral key pair $(y, Y=g^y)$, output $Y$, and compute the session key $k=H_{spec}(\hat{B},\hat{P},Y,X)$.
    \end{enumerate}
\item Complete$(sc,\hat{B}, \hat{P}, Y, X)$:
    \begin{enumerate}
    \item If $\hat{B}$ is protected by the TPM, $\mathcal{S}_6$ simulates this session activation with the help of $\mathsf{ephem}_{B}()$ and the oracle for $\hat{B}$ (Step \ref{step:oracleB-c2}).
    \item If $\hat{B}$ is not protected by the TPM, $\mathcal{S}_6$ sets the session key to be $k=H_{spec}(\hat{B},\hat{P},Y,X)$.
    \end{enumerate}

\item SessionStateReveal($s$): $\mathcal{S}_6$ returns to $\mathcal{M}$ the session state of session $s$. However, if $s$ is the $i$-th session at $\hat{A}$ or the matching session of it (if such a session exists), $\mathcal{S}_6$ aborts.
\item SessionKeyReveal($s$): $\mathcal{S}_6$ returns to $\mathcal{M}$ the session key of $s$. If $s$ is the $i$-th session at $\hat{A}$ or the matching session of it (if such a session exists), $\mathcal{S}_6$ aborts.
\item Corruption($\hat{P}$): $\mathcal{S}_6$ gives $\mathcal{M}$ the long-term key pair of $\hat{P}$. If $\mathcal{M}$ tries to corrupt $\hat{A}$ or $\hat{B}$, $\mathcal{S}_6$ aborts.

\item $H_1(\sigma)$ function for some $\sigma=(Z_1,Z_2,\hat{P_1},\hat{P_2},X,Y)$ proceeds as follows:
    \begin{enumerate}
    \item If the value of the function on input $\sigma$ has been defined previously, return it.
    \item If the value of $H_{spec}()$ on input $(\hat{P_1},\hat{P_2},X,Y)$ has been defined previously, return it.
    \item Pick a random key $k\in_R\{0,1\}^{\lambda}$, and define $H_1(\sigma)=k$.
    \end{enumerate}
\item $H_2(\sigma)$ function for some $\sigma=(Z,\hat{P_1},\hat{P_2})$ proceeds as follows:
    \begin{enumerate}
    \item If the value of the function on input $\sigma$ has been defined previously, return it.
    \item If not defined, go over all the previous calls to $H_{spec}()$ and for each previous call of the form $H_{spec}(\hat{P_1},\hat{P_2},X,Y)=v$ proceed as follows according to the type of $P_1$.
        \begin{enumerate}
        \item Case $\mathsf{TPM\_ALG\_ECMQV}$: check if $DDH(XP_1^d,YP_2^e,Z^{1/h})=1$ where $d=avf(X)$ and $e=avf(Y)$; if the condition holds, return $v$.
        \item Case $\mathsf{TPM\_ALG\_SM2}$: check if $DDH(P_1X^d,P_2Y^e,Z^{1/h})=1$ where $d=avf'(X)$ and $e=avf'(Y)$; if the condition holds, return $v$.
        \end{enumerate}
    \item If not found, pick a random key $w$ and define $H_2(Z,\hat{P_1},\hat{P_2})=w$.
    \end{enumerate}
\item $H_{spec}(\hat{P_1},\hat{P_2},X,Y)$ function proceeds as follows:
    \begin{enumerate}
    \item If the value of the function on that input has been defined previously, return it.
    \item If not defined, proceed as follows according to the type of $P_1$.
        \begin{enumerate}
        \item Case $\mathsf{TPM\_ALG\_ECDH}$: go over all the previous calls to $H_1()$ and for each previous call of the form $H_1(Z_1,Z_2,\hat{P_1},\hat{P_2},X,Y)=v$ check if $DDH(P_1,P_2,Z_1)=1$; if the condition holds, return $v$.
        \item Case $\mathsf{TPM\_ALG\_ECMQV}$: go over all the previous calls to $H_2()$ and for each previous call of the form $H_2(Z,\hat{P_1},\hat{P_2})=v$ check if $DDH(XP_1^d,YP_2^e,Z^{1/h})=1$ where $d=avf(X)$ and $e=avf(Y)$; if the condition holds, return $v$.
        \item Case $\mathsf{TPM\_ALG\_SM2}$: go over all the previous calls to $H_2()$ and for each previous call of the form $H_2(Z,\hat{P_1},\hat{P_2})=v$ check if $DDH(P_1X^d,P_2Y^e,Z^{1/h})=1$ where $d=avf'(X)$ and $e=avf'(Y)$; if the condition holds, return $v$.
        \end{enumerate}
    \item If not found, pick a random key $w$ and define $H_{spec}(\hat{P_1},\hat{P_2},X,Y)=w$.
    \end{enumerate}
\end{enumerate}

\begin{proof}
The probability that $\mathcal{M}$ sets the type of $A$ to be $\mathsf{TPM\_ALG\_ECMQV}$ and selects the $i$-th session at $\hat{A}$ as the test session is at least $\frac{1}{3}\times\frac{1}{n^2k}=\frac{1}{3n^2k}$. Suppose that this is indeed the case: the type of $A$ is $\mathsf{TPM\_ALG\_ECMQV}$, so $\mathcal{S}_6$ does not abort in Step 0; $\mathcal{M}$ is not allowed to corrupt $\hat{A}$ and $\hat{B}$, make SessionStateReveal and SessionKeyReveal queries to the $i$-th session at $\hat{A}$ or its matching session (if such a session exists), so $\mathcal{S}_6$ does not abort in Step 1, 2, 8, 9, 10. Therefore, $\mathcal{S}_6$ simulates $\mathcal{M}$'s environment perfectly.
If $\mathcal{M}$ wins the forging attack, he computes the 3-tuple $(Z,\hat{A},\hat{B})$. Without the knowledge of the private key $y$ of $Y$, $\mathcal{S}_6$ is unable to compute $CDH(X_0,B)$. $\mathcal{S}_6$ computes $T=\frac{Z^{1/h}}{B^aY_0^{ae}}$ where $d=avf(X_0)$ and $e=avf(Y_0)$.

Following the Forking Lemma \cite{fork96} approach, $\mathcal{S}_6$ runs $\mathcal{M}$ on the same input and same coin flips but with carefully modified answers to the $avf()$ queries. Note that $\mathcal{M}$ must have queried $avf(Y_0)$ in its first run because otherwise $\mathcal{M}$ would be unable to compute $Z$ of the test session. For the second run of $\mathcal{M}$, $\mathcal{S}_6$ responds to $avf(Y_0)$ with a value $e'\neq e$ selected uniformly at random. If $\mathcal{M}$ succeeds in the second run, $\mathcal{S}_6$ computes $T'=\frac{Z^{1/h}}{B^aY_0^{ae'}}$. And thereafter $\mathcal{S}_6$ computes $CDH(X_0,B)=(T'/T)^{e-e'}$. Thus, if $\mathcal{M}$ wins with non-negligible probability in this case, the success probability of $\mathcal{S}_6$ is bounded by:
\begin{center}
$Pr(\mathcal{S}_6)\geq \frac{C}{3n^2k}Pr(\mathcal{M})$
\end{center}
where $C$ is a constant arising from the use of the Forking Lemma.
$\hfill\blacksquare$
\end{proof}

\subsubsection{Analysis of Case 3}
The analysis of Case 3 is similar to Case 2. It is easy to get a full proof by following the analysis in Section \ref{subsec:case2-rev}, so we omit the analysis of Case 3.

\subsection{Infeasibility of Key-replication Attacks}

If $\mathcal{M}$ successfully launches a key-replication attack against the test session $s=(sc,\hat{A},\hat{B},X_0,Y_0)$, he succeeds in establishing a session $s'=(sc',\hat{A}',\hat{B}',X',Y')$, which is different than $s$ and $(sc,\hat{B},\hat{A},Y_0,X_0)$ (the matching session of $s$) but has the same key as the test session $s$. The $sc$ of $s'$ must fall under one of the following three cases. By demonstrating that key-replication attacks are impossible in any of the three cases, we prove that $\mathcal{M}$ cannot launch key-replication attacks.
\begin{enumerate}
\item $sc=\mathsf{TPM\_ALG\_ECDH}$.
\item $sc=\mathsf{TPM\_ALG\_ECMQV}$.
\item $sc=\mathsf{TPM\_ALG\_SM2}$.
\end{enumerate}

\subsubsection{Analysis of Case 1}
The analysis of this case is the same as the analysis in Section \ref{subsubsec:key-rep-c1} for $\mathsf{tpm.KE}$.

\subsubsection{Analysis of Case 2}
We prove that key-replication attacks are impossible in this case by showing that a successful key-replication attack would allow the attacker to launch forging attacks, contradicting the GDH assumption.
We build a GDH solver $\mathcal{S}_7$ for this case. $\mathcal{S}_7$ takes as input a pair $(X_0,B)$, simulates a similar environment for $\mathcal{M}$ as $\mathcal{S}_6$ does (Section \ref{subsec:case2-rev}) except the simulation of session activations at $\hat{B}$ and the \emph{SessionStateReveal} query.
\begin{itemize}
\item If $\hat{B}$ is protected by the TPM, $\mathcal{S}_7$ models $\mathsf{ephem}_{B}()$ and $\mathcal{O}^{\mathsf{MQV}}_B$ as follows:
    \begin{itemize}
    \item $\mathcal{S}_7$ creates a Table $T$ and models $\mathsf{ephem}_{B}()$ as follows.
        \begin{enumerate}
        \item Randomly choose $e,s\in Z_q$.
        \item Set $Y=g^s/B^e$ and $e=avf(Y)$.
        \item Randomly choose an index $ctr$, and add a record $(ctr, e,s,Y,-)$ to $T$.
        \item Return $ctr$ and $Y$.
        \end{enumerate}
    \item With the input $(sc, keyB, ctr_y, P, X)$, $\mathcal{S}_7$ models $\mathcal{O}^{\mathsf{MQV}}_B$ as follows:
        \begin{enumerate}
        \item Check whether (1) $sc=\mathsf{TPM\_ALG\_ECMQV}$, (2) $P$ and $X$ are on the curve associated with $B$, and (3) the last element of the record in $T$ indexed by $ctr_y$ is `$-$'. If the above checks succeed, continue, else return an error.
        \item Suppose the entry in $T$ indexed by $ctr_y$ is $(ctr_y, e,s,Y,-)$, set $Z_1=(XP^d)^{hs}$ where $d=avf(X)$, and set the last element of the entry to be $\times$.
        \item Compute the session key $k=H_2(Z_1,\hat{B},\hat{P})$, and return $k$.
        \end{enumerate}
    \end{itemize}

\item Initiate$(sc,\hat{B},\hat{P})$:
    \begin{itemize}
    \item If $\hat{B}$ is protected by the TPM, $\mathcal{S}_7$ simulates this session activation with the help of $\mathsf{ephem}_{B}()$.
    \item If $\hat{B}$ is not protected by the TPM, $\mathcal{S}_7$ simulates this session activation as follows with the help of a Table $T'$:
        \begin{enumerate}
        \item Randomly choose $e,s\in Z_q$.
        \item Set $Y=g^s/B^e$ and $e=avf(Y)$.
        \item Add an entry $(e,s,Y,-)$ to $T'$.
        \item Output $Y$.
        \end{enumerate}
    \end{itemize}

\item Respond$(sc,\hat{B}, \hat{P}, X)$:
    \begin{enumerate}
    \item If $\hat{B}$ is protected by the TPM, $\mathcal{S}_7$ simulates this session activation with the help of $\mathsf{ephem}_{B}()$ and $\mathcal{O}^{\mathsf{MQV}}_B$.
    \item If $\hat{B}$ is not protected by the TPM, $\mathcal{S}_7$ simulates this session activation as follows:
        \begin{enumerate}
        \item Randomly choose $e,s\in Z_q$.
        \item Set $Y=g^s/B^e$ and $e=avf(Y)$.
        \item Compute $Z_1=(XP^d)^{hs}$ where $d=avf(X)$.
        \item Output $Y$, and compute the session key $k=H_{2}(Z_1,\hat{B},\hat{P})$.
        \end{enumerate}
    \end{enumerate}
\item Complete$(sc,\hat{B}, \hat{P}, Y, X)$:
    \begin{enumerate}
    \item If $\hat{B}$ is protected by the TPM, $\mathcal{S}_7$ simulates this session activation with the help of $\mathsf{ephem}_{B}()$ and $\mathcal{O}^{\mathsf{MQV}}_B$.
    \item If $\hat{B}$ is not protected by the TPM, $\mathcal{S}_7$ proceeds as follows:
        \begin{enumerate}
        \item Check whether (1) $sc=\mathsf{TPM\_ALG\_ECMQV}$, (2) $P$ and $X$ are on the curve associated with $B$, and (3) the last element of the record in $T'$ indexed by $Y$ is `$-$'. If the above checks succeed, continue, else return an error.
        \item Suppose the entry in $T'$ indexed by $Y$ is $(e,s,Y,-)$, set $Z_1=(XP^d)^{hs}$ where $d=avf(X)$, set the last element of the entry to be $\times$, and compute the session key $k=H_{2}(Z_1,\hat{B},\hat{P})$.
        \end{enumerate}
    \end{enumerate}

\item SessionStateReveal$(s)$: $\mathcal{S}_7$ returns $\mathcal{M}$ the session key of $s$. Besides, $\mathcal{S}_7$ also returns the 3-tuple $\sigma=(Z-1,\hat{P_1},\hat{P_2})$. The reason that $\mathcal{S}_7$ can return the 3-tuple $\sigma$ of $s$ to $\mathcal{M}$ is that in the above simulation of session activations at $\hat{B}$, we compute the 3-tuple $\sigma$ of each session, and for other parties, $\mathcal{S}_7$ is also able to compute the 3-tuple $\sigma$ of each session.
\end{itemize}

\begin{proof}
As $\mathcal{S}_7$ provides $\mathcal{M}$ with the 3-tuple $\sigma$ of all exposed sessions, $\mathcal{M}$ can query $s'$ to obtain the 3-tuple $\sigma'$ of $s'$ which equals $\sigma$ of the test session $s$. This means that $\mathcal{M}$ successfully gets the 3-tuple of the test session $s$ without exposing $s$ or its matching session, namely, $\mathcal{M}$ can launch the forging attack. We have proved that if $\mathcal{M}$ can launch forging attacks, we can build a GDH solver $\mathcal{S}_6$. So $\mathcal{S}_7$ can leverage $\mathcal{S}_6$ to solve the CDH problem: $CDH(X_0,B)=\mathcal{S}_6(X_0,B)$.
$\hfill\blacksquare$
\end{proof}

\subsubsection{Analysis of Case 3}
The analysis of Case 3 is similar to Case 2, and it can be easily completed following the analysis of Case 2.

\section{Further Security Properties of $\mathsf{tpm.KE.rev}$} \label{sec:tpm.ke.rev-further}
In this section, we analyze the KCI-resistance and weak PFS (wPFS) security properties of $\mathsf{tpm.KE.rev}$.

\subsection{Resistance to KCI Attacks}
The KCI-resistance property means that even if the attacker $\mathcal{M}$ has obtained the long-term key of party $\hat{A}$, he cannot impersonate other parties to $\hat{A}$. We show that the Full UM protocol cannot achieve this security property by the following attack: $\mathcal{M}$ generates an ephemeral key pair $(y,Y=g^y)$ and initiates a session with $\hat{A}$  as the identity of $\hat{B}$; after receiving $Y$, $\hat{A}$ generates its ephemeral key pair $(x,X=g^x)$ and computes its session key $k=H_1(B^a,Y^x,\hat{A},\hat{B},X,Y)$; $\mathcal{M}$ can also compute the session key of $\hat{A}$: $k=H_1(B^a,X^y,\hat{A},\hat{B},X,Y)$.
We now prove that the MQV and SM2 key exchange protocols of $\mathsf{tpm.KE.rev}$ can achieve the KCI-resistance property.

\begin{lemma}\label{lemma:kci-rev}
Under the GDH assumption, the MQV and SM2 key exchange protocols of $\mathsf{tpm.KE.rev}$ resist KCI attacks.
\end{lemma}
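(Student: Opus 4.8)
The plan is to read KCI-resistance of the MQV and SM2 schemes as a restricted version of the game of Definition~\ref{def:security}: the test session $\mathit{ts}=(sc,\hat{A},\hat{B},X_0,Y_0)$ has $sc\in\{\mathsf{TPM\_ALG\_ECMQV},\mathsf{TPM\_ALG\_SM2}\}$, its owner $\hat{A}$ may be corrupted (so $\mathcal{M}$ holds $a$), its peer $\hat{B}$ is uncorrupted and not registered by $\mathcal{M}$, and $\mathit{ts}$ has no matching session (the attacker is impersonating $\hat{B}$ to $\hat{A}$). As in the proof of Theorem~\ref{the-main-rev}, a successful $\mathcal{M}$ must mount a forging attack or a key-replication attack; I would show both remain infeasible under GDH even with this extra corruption power. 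Since the MQV/SM2 no-matching-session situation is exactly what Cases~2 and~3 of Section~\ref{subsec:case2-rev} already treat, the only genuinely new ingredient is to let the simulators answer the \emph{Corruption}$(\hat{A})$ query instead of aborting on it. The Full~UM scheme is excluded from the lemma, and the explicit attack just before the lemma shows it cannot be included.

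The forging attack is the main step. For $sc=\mathsf{TPM\_ALG\_ECMQV}$ I would reuse the GDH solver $\mathcal{S}_6$ of Section~\ref{subsec:case2-rev} essentially verbatim: it receives a GDH challenge $(X_0,B)$, plants $X_0$ as $\hat{A}$'s ephemeral key in its $i$-th session and $B$ as $\hat{B}$'s long-term public key, and generates every other key itself, in particular $\hat{A}$'s long-term pair $(a,A)$. The one change is that the simulator no longer aborts on \emph{Corruption}$(\hat{A})$; it simply returns $(a,A)$. Because the challenge sits in $\hat{A}$'s ephemeral value and $\hat{B}$'s long-term key and \emph{not} in $a$, the simulation stays perfect. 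When $\mathcal{M}$ wins by querying $H_2$ on $(Z,\hat{A},\hat{B})$ with $Z=(Y_0B^e)^{h(x_0+da)}$, $d=avf(X_0)$, $e=avf(Y_0)$, I would rerun $\mathcal{M}$ with a reprogrammed answer to $avf(Y_0)$, exactly as in Section~\ref{subsec:case2-rev}: after dividing $Z^{1/h}$ by the factors the simulator can compute from the known $a,d,e,B,Y_0$, the $avf(Y_0)$-fork isolates a power of $\mathrm{CDH}(X_0,B)$, hence $\mathrm{CDH}(X_0,B)$ itself. The point is that $\mathcal{M}$'s knowledge of $a$ does not help it here: computing $Z$ still forces a value depending on the still-hidden $b$ (hidden because $\hat{B}$ is uncorrupted, the KCI hypothesis) and $x_0$ (hidden because it is an ephemeral value never released), and it is precisely that value that the fork pulls out. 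For $sc=\mathsf{TPM\_ALG\_SM2}$ I would run the analogous solver for Case~3 of Section~\ref{sec:analysis-rev}, again dropping the abort on \emph{Corruption}$(\hat{A})$; here $\hat{A}$ computes $Z=(BY_0^e)^{h(a+dx_0)}$ with $d=avf'(X_0)$, $e=avf'(Y_0)$, and even though $\mathcal{M}$ can now reconstruct the $a$-dependent factors of $Z$, the factor $(BY_0^e)^{hdx_0}$ still conceals $\mathrm{CDH}(X_0,B)$, so a Forking-Lemma argument still extracts it (with the same constant factor $C$ coming from the replay).

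For the key-replication attack I would argue as in Section~\ref{sec:analysis-rev}: the modified simulators still hand $\mathcal{M}$, for every exposed session, the tuple $\sigma$ supplied to $H_2$; hence a session $s'\neq\mathit{ts}$ that carries the same session key as $\mathit{ts}$ would let $\mathcal{M}$ read off the $H_2$-input of $\mathit{ts}$ by exposing $s'$, which is a forging attack, already shown impossible. A replicating $s'$ running $\mathsf{TPM\_ALG\_ECDH}$ is ruled out because $H_1$ and $H_2$ are independent random oracles, so the $H_2$-image of $\mathit{ts}$ equals an $H_1$-image only with negligible probability; a replicating $s'$ running MQV or SM2 forces the same $(Z,\hat{A},\hat{B})$ and falls under the forging argument.

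The step I expect to be the main obstacle is the bookkeeping that makes ``drop the \emph{Corruption}$(\hat{A})$ abort'' actually sound. Two things must be checked: (i) in the MQV and SM2 reductions the GDH challenge genuinely lives in $\hat{A}$'s ephemeral value together with $\hat{B}$'s long-term key, so the simulator can honestly answer \emph{Corruption}$(\hat{A})$ with the self-chosen $a$; and (ii) the extraction of $\mathrm{CDH}(X_0,B)$ from the winning $H_2$-query never tacitly used $\mathcal{M}$'s ignorance of $a$, i.e.\ one must renormalise $Z$ so that every factor divided out is computable from $\{a,d,e,B,Y_0,X_0\}$ alone, leaving after the fork exactly a power of $\mathrm{CDH}(X_0,B)$ that still depends only on the secret $b$ and $x_0$. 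For SM2 this renormalisation is slightly more delicate, since a naive single replay can produce $\mathrm{CDH}(X_0,B)$ entangled with a cross-term $\mathrm{CDH}(X_0,Y_0)$; I would handle that as in the main-theorem Case~3 analysis, e.g.\ by an additional replay to cancel the cross-term or by folding $Y_0$-dependent factors into the computable part. Once these points are settled, KCI-resistance of the MQV and SM2 schemes of $\mathsf{tpm.KE.rev}$ follows with bounds of the same form as Lemma~\ref{lemma:core-rev}, restricted to Cases~2 and~3.
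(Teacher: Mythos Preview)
Your proposal is correct and follows essentially the same approach as the paper: the paper's entire proof is the observation that the GDH solvers for the MQV and SM2 cases in Section~\ref{sec:analysis-rev} already choose $\hat{A}$'s long-term key themselves (the challenge lives in $X_0$ and $B$), so one simply drops the abort on \emph{Corruption}$(\hat{A})$ and the rest of the argument goes through unchanged. Your write-up is considerably more detailed than the paper's two-sentence proof---in particular your explicit check that the extraction of $\mathrm{CDH}(X_0,B)$ never relied on $\mathcal{M}$'s ignorance of $a$, and your remark on the SM2 cross-term---but the underlying idea is the same.
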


\begin{proof}
Lemma \ref{lemma:kci-rev} can be easily proved by slightly modifying the proof of MQV and SM2 key exchange protocols in Section \ref{sec:analysis-rev}. The only change to the proof in Section \ref{sec:analysis-rev} is that all GDH solvers used to prove the security of MQV and SM2 key exchange protocols do not abort when $\mathcal{M}$ corrupts $\hat{A}$. The proof remains valid because the above abort operations are not used in the proof (we add the abort of GDH solvers when $\hat{A}$ is corrupted for compliance with the notion of the session exposure in the security model).
$\hfill\blacksquare$
\end{proof}

\subsection{Weak Perfect Forward Secrecy}
Hrawczyk has proved that the implicitly AKE protocols cannot achieve the full PFS property and can only achieve weak PFS \cite{HMQV05}: only the session keys that are established without the active involvement of the attacker enjoy PFS property.

\begin{lemma}\label{lemma:pfs-rev}
Under the CDH assumption, $\mathsf{tpm.KE.rev}$ provides weak PFS.
\end{lemma}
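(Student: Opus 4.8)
The plan is to show that the security proof of $\mathsf{tpm.KE.rev}$ in Section~\ref{sec:analysis-rev} carries over to the weak-PFS scenario of Krawczyk~\cite{HMQV05}: the attacker $\mathcal{M}$ is additionally allowed to \emph{corrupt both} $\hat A$ and $\hat B$ — the owner and the peer of the test session — at any time, provided the test session has a matching session. The reason this can be done under the plain CDH assumption, rather than the GDH assumption used in Theorem~\ref{the-main-rev}, is that the matching-session hypothesis lets us plant the CDH challenge into the two \emph{ephemeral} public keys of the test session and its matching session; the reduction then samples every long-term key itself, so corrupting $\hat A$ or $\hat B$ costs nothing, and it can keep all random-oracle answers consistent without a DDH oracle.

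Concretely, I would build a CDH solver $\mathcal{S}$ that, on input $(U,V)$, guesses the holder and session index of the test session and those of its matching session (a $1/(n^2k^2)$ loss), sets those two ephemeral public keys to $U$ and $V$, and samples all long-term keys and all other ephemeral keys honestly. Because it knows every long-term private key and every ephemeral private key except those of the test and matching sessions, $\mathcal{S}$ runs every \emph{Initiate}/\emph{Respond}/\emph{Complete} activation, answers every \emph{EstablishParty}, \emph{Corruption}, \emph{SessionStateReveal} and \emph{SessionKeyReveal} query exactly as a real execution, and programs $H_1,H_2$ consistently; the only two sessions whose keys it cannot compute are the test session and its matching session, and those are precisely the ones the model forbids $\mathcal{M}$ to expose, so $\mathcal{S}$ answers them (and the \emph{Test} query) with independent random strings. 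The simulation is therefore perfect until $\mathcal{M}$ makes the distinguishing random-oracle query on the test session's tuple.

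The remaining work is a case split on the scheme $sc$, mirroring the forging-attack cases of Section~\ref{sec:analysis-rev}. For $\mathsf{TPM\_ALG\_ECDH}$ the distinguishing query has the form $H_1(Z_1,Z_2,\hat A,\hat B,U,V)$ with $Z_2=\mathrm{CDH}(U,V)$ and $Z_1=g^{ab}$, the latter being known to $\mathcal{S}$ and usable as a filter. For $\mathsf{TPM\_ALG\_ECMQV}$ the query is $H_2(Z,\hat A,\hat B)$ with $Z=g^{h(x_0+da)(y_0+eb)}$, $d=avf(U)$, $e=avf(V)$; since $\mathcal{S}$ knows $a,b$ and $h,d,e$ are invertible modulo $q$, it recovers $\mathrm{CDH}(U,V)=Z^{1/h}\,V^{-da}\,U^{-eb}\,g^{-deab}$. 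For $\mathsf{TPM\_ALG\_SM2}$, from $Z=g^{h(a+dx_0)(b+ey_0)}$ with $d=avf'(U)$, $e=avf'(V)$, it obtains $\mathrm{CDH}(U,V)=(Z\,g^{-hab}\,U^{-hdb}\,V^{-hae})^{(hde)^{-1}\bmod q}$. In every case $\mathcal{S}$ scans $\mathcal{M}$'s random-oracle queries, keeps those matching the test-session identities (and, in the ECDH case, $Z_1=g^{ab}$), applies the relevant formula, and outputs the result; picking the correct query costs another $1/q_H$ factor (with $q_H$ the number of $\mathcal{M}$'s random-oracle queries), which can be removed at the price of assuming GDH and using the DDH oracle to identify it. As in Section~\ref{sec:analysis-rev}, a key-replication attack is again ruled out because it would force $\mathcal{M}$ to read the test session's tuple $\sigma$ off an exposable session, which reduces to exactly this forging case. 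Putting the pieces together gives $\mathsf{Adv}^{CDH}(\mathcal{S})$ at least a $1/(n^2k^2q_H)$ fraction of $\mathsf{Adv}^{\mathsf{tpm.KE.rev}}(\mathcal{M})$ minus the standard collision term $\mathcal{O}(k^2/2^{\lambda})$, so the advantage is negligible under CDH.

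The main obstacle I anticipate is not a new idea but the verification that the matching-session hypothesis really pins the challenge pair $(U,V)$ to the test and matching sessions alone: an honest party never re-emits another party's ephemeral public key, so any honest session that receives $U$ generates a fresh second ephemeral that $\mathcal{S}$ knows (and symmetrically for $V$), which is what simultaneously guarantees that $\mathcal{S}$ can honor all permitted reveal and corruption queries and that it is stuck exactly where extraction is possible. A second point to make explicit is why the Full UM scheme — which the section has just shown to fail full PFS and KCI — still satisfies \emph{weak} PFS: restricting to test sessions with a matching session excludes the active impersonation behind those attacks, so the attacker is reduced to computing $\mathrm{CDH}(U,V)$ from the two honestly generated ephemeral keys.
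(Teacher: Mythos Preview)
Your proposal is correct and follows essentially the same strategy as the paper's (outline) proof: plant the CDH challenge into the two ephemeral public keys of the test session and its matching session, have the reduction sample every long-term key itself so that corrupting $\hat A$ and $\hat B$ is free, and then recover $g^{xy}$ from the attacker's random-oracle query on the test-session tuple by a case split on the scheme, using knowledge of $a,b$ in the MQV and SM2 cases. The only differences are cosmetic or in level of detail: you guess both the test and matching sessions explicitly (incurring $1/(n^2k^2)$) whereas the paper just guesses the test session and sets both its incoming and outgoing messages to the challenge pair, and you make explicit the $1/q_H$ cost of identifying the right $H_2$-query in the MQV/SM2 cases (or the option of removing it under GDH), a point the paper's sketch leaves implicit.
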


\begin{proof}
Here we only outline the idea of the proof of Lemma \ref{lemma:pfs-rev}, and the full proof can be completed following the proof in Section \ref{sec:analysis-rev}. We construct a CDH solver $\mathcal{S}_8$ if $\mathcal{M}$ successfully breaks the weak PFS security property. Let $X$ and $Y$ be the inputs to $\mathcal{S}_8$, and the goal of $\mathcal{S}_8$ is to compute $CDH(X,Y)$. $\mathcal{S}_8$ simulates the environment for $\mathcal{M}$ as follows: it sets all parties' long-term keys and chooses a random guess for the test session of the distinguishing game. We call the guessed session the g-session, denote the owner and the peer of the test session by $\hat{A}$ and $\hat{B}$ respectively, and denote their long-term private keys by $a$ and $b$, respectively. $\mathcal{S}_8$ sets the incoming and outgoing messages in the g-session to be $X$ and $Y$. If $\mathcal{M}$ does choose the g-session as the test session and wins the distinguish game, he must compute the tuple $\sigma$ of the test session in his run. We proceed to show that no matter what the type of $A$ is, $\mathcal{S}_8$ can compute $CDH(X,Y)$, i.e., $g^{xy}$.
\begin{enumerate}
\item Case $\mathsf{TPM\_ALG\_ECDH}$: $\sigma$ equals $(Z_1,Z_2,\hat{A},\hat{B},X,Y)$ where $Z_2=g^{xy}$.
\item Case $\mathsf{TPM\_ALG\_ECMQV}$: $\sigma$ equals $(Z,\hat{A},\hat{B})$ where $Z=g^{h(x+da)(y+eb)}$; since $\mathcal{S}_8$ knows $a$ and $b$, he can compute $g^{xy}$.
\item Case $\mathsf{TPM\_ALG\_SM2}$: $\sigma$ equals $(Z,\hat{A},\hat{B})$ where $Z=g^{h(a+dx)(b+ey)}$; since $\mathcal{S}_8$ knows $a$ and $b$, he can compute $g^{xy}$.
\end{enumerate}
$\hfill\blacksquare$
\end{proof}

\section{Conclusion}\label{sec:conclusion}

In this paper, we present a formal analysis of the secure communication interfaces of TPM 2.0 in a unified way. We construct a security model which takes account of the protections of the TPM on keys and protocols' computation environments and eliminates group representation attacks existing in theory by measuring the entropy of the output of the $avf()$ and $avf'()$ functions. The analysis results show that the current version of the key exchange primitive in TPM 2.0 can achieve the basic security property of modern security models, but its security is achieved under some impractical conditions: all devices in the network should be protected by the TPM, and security measures should be deployed to prevent sophisticated physical attacks, which can be used by attackers to obtain keys inside the TPM. Besides, the analysis results also show that the current version of the key exchange primitive in TPM 2.0 cannot achieve other security properties, such as KCI-resistance and weak PFS properties. We also give suggestions to engineers on how to use the key exchange primitive of TPM to implement key exchange protocols securely.

To eliminate the impractical conditions required by the TPM 2.0, we revise the key exchange primitive of TPM 2.0 and give a rigorous analysis of the revision. The analysis results show that our revision helps the key exchange primitive of TPM 2.0 not only enjoy the essential security property defined by modern AKE models in real-world networks but also achieve additional security properties: the MQV and SM2 key exchange protocols of TPM 2.0 enjoy the KCI-resistance property, and all the three protocols of TPM 2.0 enjoy the weak PFS property. Our revision only needs to modify one TPM command, and we give concrete suggestions on how to revise the TPM 2.0 specifications.

\section*{Acknowledgement}
This work was supported by the National Natural Science Foundation of China (61802375, 61602325, 61876111, 61877040), the Project of Beijing Municipal Education Commission (KM20190028005), and the Open Research Fund of State Key Laboratory of Computer Architecture, ICT, CAS (CARCH201920).

\bibliographystyle{unsrt}


\appendix
\section{Kaliski's UKS attack}\label{app:kaliski}
We describe Kaliski's UKS attack here to show how the attacker $\mathcal{M}$ successfully mounts the attack by cleverly computing its long-term private key $c$ and the ephemeral public key $X'$.

\begin{enumerate}
\item $\hat{A}$ sends an ephemeral public key $X$ to $\hat{B}$.
\item $\mathcal{M}$ intercepts $X$.
\item $\mathcal{M}$ registers to the CA a key $C=g^c$ where $c$ is cleverly computed by the following steps:
    \begin{enumerate}
    \item Choose $u\in_R Z_q$;
    \item Compute $d=avf(X)$, $X'=XA^dg^{-u}$, $e=avf(X')$, and $c=u/e$.
    \end{enumerate}
\item $\mathcal{M}$ sends $X'$ to $\hat{B}$ as the identity of $\mathcal{M}$.
\item $\mathcal{M}$ relays the ephemeral key $Y$ from $\hat{B}$ to $\hat{A}$.
\end{enumerate}

Note that $X'C^e=XA^d$, and therefore the keys computed in sessions $(\hat{A},\hat{B},X,Y)$ and $(\hat{B},\mathcal{M},Y,X')$ are identical.

\section{Xu's attacks}\label{app:xu}
Here we describe Xu's two attacks on the SM2 key exchange protocol to show that in the first attack the attacker, $\mathcal{M}$ needs to register a specific long-term public key, and in the second attack, $\mathcal{M}$ needs to use the private key of $\hat{A}$ to perform some computations.

\subsection{Attack I}
$\mathcal{M}$ selects $u\in_R Z_q$, and registers a carefully computed public key $M=Ag^u$.

\begin{enumerate}
\item $\hat{A}$ sends an ephemeral public key $X$ to $\hat{B}$.
\item $\mathcal{M}$ intercepts $X$ and sends it to $\hat{B}$ as the identity of $\mathcal{M}$.
\item $\hat{B}$ sends its ephemeral public key $Y$ to $\mathcal{M}$, and computes $Z_B=(MX^d)^{h(b+ey)}$ where $d=avf'(X)$ and $e=avf'(Y)$.
\item $\mathcal{M}$ forwards $Y$ to $\hat{A}$ as the identity of $\hat{B}$. $\hat{A}$ computes $Z_A=(BY^e)^{h(a+dx)}$.
\item $\mathcal{M}$ corrupts $Z_B$ of session $(\hat{B},\mathcal{M},Y,X)$, and then $\mathcal{M}$ can compute $Z_A$ of session $(\hat{A},\hat{B},X,Y)$: $Z_A=Z_B/(BY^e)^{hu}$, and $\mathcal{M}$ further derives the session key of $(\hat{A},\hat{B},X,Y)$.
\end{enumerate}

Note that the above attack shows that the corruption of session $(\hat{B},\mathcal{M},Y,X)$ does affect the security of session $(\hat{A},\hat{B},X,Y)$, so the SM2 key exchange protocol cannot achieve the security defined by modern AKE security models.

\subsection{Attack II}
$\mathcal{M}$ first registers a legal key $M=g^m$.

\begin{enumerate}
\item $\hat{A}$ sends an ephemeral public key $X$ to $\hat{B}$.
\item $\mathcal{M}$ intercepts $X$ and sends $X'=AX^d$ ($d=avf'(X)$) to $\hat{B}$ as the identity of $\mathcal{M}$.
\item $\hat{B}$ sends an ephemeral public key $Y$ to $\mathcal{M}$ and computes $Z_B=(MX'^{d'})^{h(b+ey)}$ where $d'=avf'(X')$ and $e=avf'(Y)$.
\item $\mathcal{M}$ forwards it to $\hat{A}$. $\hat{A}$ computes $Z_A=(BY^e)^{h(a+dx)}$ where $d=avf'(X)$ and $e=avf'(Y)$.
\item $\mathcal{M}$ corrupts $Z_B$ of session $(\hat{B},\mathcal{M},Y,X')$, computes $Z_A$ of session $(\hat{A},\hat{B},X,Y)$: $Z_A=(Z_A/(BY^e)^{hm})^{d'^{-1}}$, and further derives the session key of $(\hat{A},\hat{B},X,Y)$.
\end{enumerate}

\section{Group Representation Attack on MQV}\label{app:groupattack}
For the benefit of the reader, we present the group representation attack on MQV here. Consider such a group $G$ that the representations of its elements satisfy that the $\lceil q/2\rceil$ LSBs of the representation of points' $x$-coordinate are fixed. We use $c$ to denote the fixed value. In this case, the $Z$ value of MQV becomes $Z=g^{h(x+ca)(y+cb)}$. The attacker $\mathcal{M}$ can launch the following attack:
\begin{enumerate}
\item $\mathcal{M}$ randomly chooses $x^*\in_R Z_q$ and computes $X^*=g^{x^*}/A^c$.
\item $\mathcal{M}$ sends $X^*$ to $\hat{B}$ as the identity of $\hat{A}$.
\item $\hat{B}$ responds with $Y=g^y$, computes $Z=(X^*A^c)^{h(y+cb)}$, and computes its session key $K=H_2(Z,\hat{A},\hat{B})$.
\item $\mathcal{M}$ can also compute the session key $K=H_2((YB^c)^{hx^*},\hat{A},\hat{B})$.
\end{enumerate}

The above attack shows that $\mathcal{M}$ can impersonate $\hat{A}$ without knowing the private key of $\hat{A}$ because of the special representations of the group elements.

\end{document}